\documentclass[11pt,letterpaper]{article}

\sloppy

\title{Quantum property testing in sparse directed graphs~\footnote{A preliminary version of this work appeared
in Approximation, Randomization, and Combinatorial Optimization. Algorithms and Techniques (APPROX/RANDOM), volume 353, pages 32:1–32:24. 
Schloss Dagstuhl – Leibniz-Zentrum für Informatik, 2025~\cite{AMSS25_random}.}}

\usepackage[T1]{fontenc}
\usepackage{amsmath,amsthm,amssymb,xcolor}
\usepackage[margin=1in]{geometry}
\usepackage[utf8]{inputenc}
\usepackage{latexsym,bbm,graphicx,float,mathtools}
\newenvironment{claimproof}[1]{\par\noindent\emph{Proof of claim.}\space#1}{\hfill$\diamond$\medskip}
\usepackage{amsthm}
\usepackage{hyperref}
\usepackage[capitalise]{cleveref}

\usepackage{thm-restate}

\usepackage{apxproof}

\usepackage{authblk}

\usepackage[ruled,linesnumbered,vlined,noend]{algorithm2e}



\usepackage{braket}
\usepackage{stackrel}
\usepackage[all]{xy}

\usepackage{enumitem}
\setlist{itemsep=0em} 

\newcommand{\E}{\mathop{\mathbb{E}}}
\newcommand{\sgn}{\mathrm{sgn}}
\newcommand{\gapcol}{\mathrm{Collision}}
\newcommand{\dgapcol}{\mathrm{dCollision}}
\newcommand{\thr}{\mathrm{THR}}
\newcommand{\gapor}{\mathrm{GapOR}}
\newcommand{\bdeg}{{\mathrm{bdeg}}}
\newcommand{\dpdeg}{{\mathrm{dpdeg}}}

\newcommand{\phd}{\mathrm{phd}}

\newcommand{\cS}{\mathcal{S}}

\newcommand{\cE}{\mathcal{E}}

\newcommand{\cP}{\mathcal{P}}

\newcommand{\cO}{\mathcal{O}}

\newcommand{\eps}{\varepsilon}








\newcounter{mynotes}
\setcounter{mynotes}{0}

\newtheorem{theorem}{Theorem}[section]
\newtheorem{lemma}[theorem]{Lemma}
\newtheorem{claim}[theorem]{Claim}
\newtheorem{proposition}[theorem]{Proposition}
\newtheorem{definition}[theorem]{Definition}
\newtheorem{remark}[theorem]{Remark}

\usepackage{abstract}

\newcommand*\samethanks[1][\value{footnote}]{\footnotemark[#1]}

\author[1]{Simon Apers\thanks{Email: \{apers, magniez, szabo\}@irif.fr. Research supported in part by  the ERC Advanced Grant PARQ (grant agreement No 885394), the European QuantERA project QOPT (ERA-NET Cofund 2022-25),
the French PEPR integrated projects EPiQ (ANR-22-PETQ-0007) and HQI (ANR-22-PNCQ-0002),
and the French ANR project QUOPS (ANR-22-CE47-0003-01).}}

\author[1]{Fr\'ed\'eric Magniez\samethanks[2]}

\author[2]{Sayantan Sen\thanks{Email: sayantan789@gmail.com. Research supported by the NRF Investigatorship award (NRF-NRFI10-2024-0006) and CQT Young Researcher Career Development Grant (25-YRCDG-SS).}}

\author[1,3]{D\'aniel Szab\'o\samethanks[2]\thanks{Research supported in part by the German Federal Ministry of Research, Technology and Space (QuSol, 13N17173).}}

\affil[1]{Universit\'e Paris Cit\'e, CNRS, IRIF, Paris, France}
\affil[2]{Centre for Quantum Technologies, National University of Singapore, Singapore}
\affil[3]{Ruhr-Universität Bochum, Bochum, Germany}

\date{}

\begin{document}
\begin{titlepage}

\maketitle
\thispagestyle{empty}

\begin{abstract}
We initiate the study of quantum property testing in sparse directed graphs, and more particularly in the unidirectional model, where the algorithm is allowed to query only the outgoing edges of a vertex.
In the classical unidirectional model, the problem of testing $k$-star-freeness, and more generally $k$-source-subgraph-freeness, is almost maximally hard for large $k$. We prove that this problem has almost quadratic advantage in the quantum setting.
Moreover, we show that this advantage is nearly tight, by showing a quantum lower bound using the method of dual polynomials on an intermediate problem for a new, property testing version of the $k$-collision problem that was not studied before.

To illustrate that not all problems in graph property testing admit such a quantum speedup, we consider the problem of $3$-colorability in the related undirected bounded-degree model, when graphs are now undirected. This problem is maximally hard to test classically, and we show that also quantumly it requires a linear number of queries.

\end{abstract}
\end{titlepage}

\section{Introduction}
\subsection{Context}

In this modern big data era, the size of inputs has grown so much that even just reading the full input has become extremely expensive computationally. To tackle this challenge, the framework of \emph{property testing} was initiated. It focuses on designing ultrafast algorithms (also known as ``testers'') that read only a small part of the input, and distinguish inputs that satisfy some property from inputs that are ``far'' from satisfying it.
As a possible use-case, when the exact computation is expensive, one can use property testing algorithms as a precursor to running the final algorithm.
If the input does not pass the property testing test, we can safely reject it, without running the expensive final computation.

At the same time, the field of quantum computing has significantly influenced many computer science paradigms, including cryptography, algorithms, and large-scale data processing. This new perspective on computer science based on quantum physics has sparked many fresh research directions.
This includes the topic of this work, which combines quantum computing and property testing.
More specifically, we consider quantum algorithms for \emph{graph} property testing.

Graphs are of paramount importance for instance when it comes to understanding large datasets, since they provide a natural way to represent and analyze complex relationships inside datasets.
Goldreich, Goldwasser, and Ron~\cite{GGR98} were the first to consider graphs in the context of property testing. 
Formally, given some form of query access to an unknown graph $G$ on $N$ vertices, and a property $\cP$ of interest, the goal is to distinguish with high probability if $G$ satisfies the property $\cP$, or whether it is ``\emph{far}'' from all graphs that satisfy $\cP$, with a suitable notion of farness.
In \cite{GGR98}, the ``dense'' model was considered, where a graph is accessed through \emph{adjacency queries}: for a pair of vertices $(u,v)$, the query reveals whether $(u,v)$ is an edge in the graph.
In this model, a graph $G$ is $\eps$-far from satisfying $\cP$ if one needs to add or remove at least $\eps N^2$ edges of $G$ to obtain a graph that satisfies $\cP$.

In a later work, Goldreich and Ron~\cite{goldreich2002property} introduced the ``bounded-degree'' model for testing sparse graphs, focusing on the properties of bipartiteness and expansion.  
In this model, 
a $d$-bounded degree graph $G$ with $N$ vertices is accessed by performing \emph{neighbor queries}: for a vertex $v$ and an integer $i \in [d]$, the query $(v,i)$ returns either the $i$-th neighbor of $v$, or some special symbol if $v$ has less than $i$ neighbors. 
The graph $G$ is said to be $\eps$-far from some property $\cP$, if one needs to add or delete at least $\eps d N$ edges of $G$ to obtain a graph that satisfies $\cP$.
Over the last two decades, there has been a significant number of works in this model, and we refer the interested reader to the books by Goldreich \cite{goldreich2017introduction} and Bhattacharyya and Yoshida~\cite{bhattacharyya2022property} and several surveys~\cite{Fischer01, DBLP:journals/fttcs/Ron09, Czumaj2010,DBLP:journals/siamdm/RubinfeldS11}.

Some researchers have considered 
efficient quantum algorithms for testing both classical and quantum objects, see for instance \cite{buhrman2008quantum,ambainis2016efficient,harrow2017sequential,ben2020symmetries,DBLP:journals/qic/ApersS19} and the survey~\cite{montanaro2016survey}. 
Notably, the authors in \cite{ambainis2011quantum} initiated the study of bounded degree graph property testing in the quantum model.
One important result in this context is the result of \cite{ben2020symmetries}, who proved that there can be exponential quantum advantage in the bounded degree graph model of property testing. However, as mentioned in their paper, the graph property admitting the exponential quantum advantage is not a natural one. 

\subsection{Property testing of directed bounded degree graphs}

While all of the aforementioned works consider undirected graphs, many real-world instances (such as the world wide web) actually correspond to \emph{directed} graphs.
Consequently, Bender and Ron~\cite{bender2002testing} introduced a model of property testing for directed graphs, focusing on the properties of acyclicity and connectivity.
Following that work, 
we open a new research line by studying quantum algorithms for testing directed graphs.
As we will see, by doing so, we address new fundamental questions in the field of quantum complexity. Answering them requires using recent techniques and partially answering some new or open questions.

As described in~\cite{bender2002testing}, for bounded-degree directed graphs there are two natural query models: (i) the \emph{unidirectional} model, where the algorithm is allowed to query the outgoing edges of a vertex, but not the incoming edges, and (ii) the \emph{bidirectional} model, where the algorithm can query both the incoming and outgoing edges of a vertex. 
Interestingly, \cite{bender2002testing} showed that strong connectivity is testable in the bidirectional model (i.e., it can be tested with a number of queries that depends on $\eps$ but not on $N$), but it requires $\Omega(\sqrt{N})$ queries in the unidirectional model.
Later, the testability of other graph properties like Eulerianity, vertex and edge connectivity~\cite{orenstein2011testing,yoshida2010testing,forster2020computing,chen2019testability} was also shown in the bidirectional model.
While there is a clear distinction between the two models, Czumaj, Peng and Sohler \cite{czumaj2016relating} showed that if a property is testable in the bidirectional model, then it has a tester with \emph{sublinear} (i.e., $o(N)$) query complexity in the unidirectional model.

In this work, we consider a particularly important problem in the unidirectional model: 
	the problem of testing \emph{subgraph-freeness}. 
	More precisely, we examine the problem of testing ``\emph{$k$-source-subgraph-freeness}'', where the goal is to test $H$-freeness for some constant-sized subgraph $H$ with $k$ ``source components'', where a source component is a strongly connected subgraph that has no incoming edges.
	This problem was first studied by Hellweg and Sohler~\cite{hellweg2013property}, and they provided a testing algorithm that performs $O(N^{1-1/k})$ queries.
They also proved a tight lower bound of $\Omega(N^{2/3})$ for the $k=3$ case (see \cite[Theorem 1 and Theorem 3]{hellweg2013property}).
Very recently, Peng and Wang~\cite{peng2023optimal} proved a matching lower bound for any constant $k$. In particular, they showed that $\Omega(N^{1- \frac{1}{k}})$ queries are necessary for testing $k$-star-freeness (which is a special case of testing $k$-source-subgraph-freeness) in the unidirectional model, for arbitrary $k$ (see \cite[Theorem 1.2]{peng2023optimal}). Notice that asymptotically the complexity of testing $k$-star-freeness becomes $\Omega(N)$.
This also proves that the aforementioned reduction of \cite{czumaj2016relating} can not be made much stronger: for the property of $k$-star-freeness, the separation between the query complexities in the bi- and unidirectional models is maximal, as this property can be tested using constant many queries in the bidirectional model. 

\subsection{Related works on collision finding}\label{introcol} A closely related problem to finding $k$-stars in graphs is finding $k$-collisions in integer sequences. The two mentioned classical papers on subgraph-freeness testing \cite{hellweg2013property, peng2023optimal} actually consider a collision-type intermediate problem for proving their lower bounds. As we are also going to do so, let us look at some related, known results.

The problem of \emph{collision finding} is a ubiquitous problem in the field of algorithm theory with wide applications in cryptography. Here, given a sequence $s$ of $N$ integers, the goal is to find a duplicate in $s$.
If one has the guarantee that $\Theta(N)$ elements of the sequence are duplicated, which is the case for instance when the sequence consists of uniformly random integers from $[N]$, it is well-known that classically $\Theta(\sqrt{N})$ queries are necessary and sufficient due to the birthday paradox.
In the quantum model, this can be solved with query complexity $\Theta(N^{1/3})$ by the algorithm of Brassard, H{\o}yer and Tapp \cite{BHT}.
The matching lower bound was first stated for a specific set of hard instances known as $2$-to-$1$ (i.e., each integer appears exactly twice or not at all) by Aaronson and Shi~\cite{AaronsonShi}. 
For some constant integer $k \ge 3$, those results can be further extended to finding $k$-collisions in a random input with suitable alphabet size, so that it contains $\Theta(N)$ $k$-duplicates with high probability. The classical query complexity for this problem is $\Theta(N^{1-1/k})$ \cite{hellweg2013property,peng2023optimal}, and quantumly it is $\Theta\left(N^{\frac{1}{2}\left(1-\frac{1}{2^k-1}\right)}\right)$ \cite{zhandry_collision}. The situation is more complex for non-random inputs.

Remarkably, the complexity of \emph{testing} $k$-collision-freeness (i.e., the absence of $k$-collisions) is harder to settle on the lower bound side than the \emph{finding} version.
In this work, we are going to focus on the hardness of distinguishing inputs that have linearly many collisions from those that do not have any. 
For $k=2$, the two problems have the same complexity, since intuitively the only way to distinguish is to find a collision. This can be formalized easily in the classical case. Quantumly, this is more challenging, but the lower bound in~\cite{AaronsonShi} proved the hardness of distinguishing between $2$-to-$1$ instances and ones with no duplicate.

However, for larger $k$, distinguishing such inputs might be easier than finding a collision. The classical upper bound of $O(N^{1-1/k})$ queries is straightforward for the finding variant. 
In the lower bounds of \cite{hellweg2013property, peng2023optimal}, the authors consider the distinguishing version, so classically the question is settled. But in the quantum setting, the upper and lower bounds of \cite{zhandry_collision} are tight only for finding $k$-collisions in random inputs, and for the distinguishing variant, we are not currently aware of anything better than the $\Omega(N^{1/3})$ lower bound corresponding to the $k=2$ case. To our knowledge, this problem has not yet been studied in the quantum setting.

\subsection{Our results} \label{sec:intro/our_results}

In this work, we present two lines of results for quantum property testing of graph properties. 

In the first line, we consider the problem of testing $k$-source-subgraph-freeness
in the unidirectional model. This problem is almost maximally hard for large $k$ in the classical regime, and we show that it admits an almost quadratic advantage in the quantum setting.
\begin{theorem}[Restated in \Cref{theo:kstarfree_ub_main}]\label{theo:kstarfree_ub_intro}
The quantum query complexity of testing $k$-source-subgraph-freeness in the unidirectional model is $O\Big(N^{\frac{1}{2}\left(1 - \frac{1}{2^k -1}\right)}\Big)$.
\end{theorem}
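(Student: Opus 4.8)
The plan is to reduce testing $k$-source-subgraph-freeness to a property-testing version of the $k$-collision problem, and then to solve the latter with a quantum algorithm built from the recursive machinery behind the Brassard--H\o yer--Tapp and Zhandry collision-finding algorithms. Fix the forbidden subgraph $H$ with source components $C_1,\dots,C_k$. Because $H$ has constant size, every vertex of $H$ lies in the forward closure (along edges of $H$) of $C_1\cup\dots\cup C_k$, so a copy of $H$ in $G$ can be detected by picking, for each $i$, a single vertex in the image of $C_i$ and running a breadth-first search along \emph{outgoing} edges up to depth $|V(H)|$; in a graph of constant bounded out-degree each such exploration costs only $\Oh(1)$ queries. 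Recording the isomorphism type of the explored out-neighbourhood as a \emph{signature} of the vertex, a copy of $H$ becomes a choice of $k$ vertices whose signatures are mutually compatible and glue consistently --- a generalised $k$-way collision. When $H$ is the $k$-star this is exactly: find $k$ distinct vertices with a common out-neighbour.

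First I would make the two directions of the reduction precise. If $G$ is $H$-free there is no valid witness, so a witness-search algorithm never reports one and we accept with certainty; the tester therefore has one-sided error. If $G$ is $\eps$-far from $H$-free then, following Hellweg--Sohler, deleting fewer than $\eps d N$ edges cannot destroy all copies of $H$; charging the edges that must be deleted to the ``source sides'' of the corresponding gadgets and using the constant out-degree bound shows that $\Omega(\eps N)$ vertices participate in some copy of $H$. This is precisely the density promise --- $\Omega(\eps N)$ colliding elements --- under which the collision tester below operates. The new feature compared with $k$-stars is that the source components need not all share a common descendant inside $H$ (e.g.\ $H$ a disjoint union of two edges has $k=2$ with unconstrained sources); I would deal with this by splitting the witness condition into its ``constrained'' part --- clusters of source components sharing descendants, which behave as genuine collisions --- and its ``free'' part, for which it suffices to find \emph{any} vertex with the right signature, handled by a union bound over the $\Oh(1)$ relevant patterns.

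The heart of the proof is a standalone algorithm: testing $k$-collision-freeness under the density promise with quantum query complexity $\Oh\!\big(N^{\frac12(1-1/(2^k-1))}\big)$. The exponent obeys $\alpha_1=0$ and $\alpha_k=1/(3-2\alpha_{k-1})$, which points to a recursive scheme. For $k=2$ one uses a Brassard--H\o yer--Tapp-style tester: sample a set $S$, query it, and Grover-search the remaining vertices for one colliding with $S$; once one checks that $\Omega(\eps N)$ colliding elements force $\Omega(\eps N)$ colliding \emph{pairs}, this costs $\Oh((N/\eps)^{1/3})$ queries. For $k\ge 3$ one uses a nested quantum walk whose database holds many $(k-1)$-collisions --- produced by the recursive algorithm, with the density preserved because every value with at least $k$ preimages contributes $\Omega(1)$ many $(k-1)$-sub-collisions --- and whose checking step searches for a vertex extending one of them to a $k$-collision. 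Optimising the set-up and checking costs in the quantum-walk framework reproduces the recursion for $\alpha_k$, and composing with the graph reduction yields the theorem.

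I expect the main obstacle to be twofold. On the algorithmic side, Zhandry's bound is established for \emph{random} functions, whereas a tester must succeed on \emph{every} input meeting only the density promise; the troublesome instances are those in which the colliding elements are distributed adversarially, so I would need either a robustness analysis of the nested walk that uses nothing but the density of collisions, or a case split separating the ``concentrated'' regime --- few high-in-degree centres, easily caught by direct sampling plus Grover --- from the ``spread-out'' regime where the walk applies. On the reduction side, making the signature-and-gluing bookkeeping rigorous for an \emph{arbitrary} constant-size $H$ --- in particular with strongly connected source components rather than single vertices, and overlapping forward closures --- is delicate, and extracting the \emph{exact} exponent $\alpha_k$ (rather than something marginally larger) from the composed construction is where the parameters must be chosen with care.
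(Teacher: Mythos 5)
Your overall route --- encode $H$-copies as a $k$-way collision problem on constant-radius out-neighbourhoods, solve it with a BHT-style base case and a Zhandry-style recursive extension --- is the same as the paper's, and your recursion $\alpha_k=1/(3-2\alpha_{k-1})$ does produce the exponent $\tfrac12\bigl(1-\tfrac{1}{2^k-1}\bigr)$. Two points need repair.

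First, the density statement you extract from $\eps$-farness is too weak. You conclude that $\Omega(\eps N)$ \emph{vertices} participate in some copy of $H$; what the algorithm actually needs, and what the paper proves (\Cref{claim:eps-far-H}, by taking a maximal family of source-disjoint copies and deleting all outgoing edges of their source-component vertices), is that $G$ contains $\Omega(N)$ \emph{source-disjoint} copies of $H$. Source-disjointness is what guarantees that $\Omega(t_1)$ of the initially sampled vertices hit source components of \emph{distinct} copies, and that at every later stage each partial solution can be completed by source components disjoint from everything already consumed, so that every search stage has $\Omega(t_{i-1})$ marked elements. Mere participation of many vertices does not exclude all copies funnelling through a few shared source components, which would starve the later stages.

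Second, the ``main obstacle'' you flag --- that Zhandry's analysis is for random inputs --- is real, but its resolution is neither a nested quantum walk nor a concentrated/spread-out case split. The paper's algorithm is pure iterated Grover search: sample $t_1$ vertices and BFS them; for $i=2,\dots,k-1$ run Grover $t_i$ times to extend $(i-1)$-partial solutions to $i$-partial ones; finish with a single Grover search; the cost $t_1+\sum_i t_{i+1}\sqrt{N/t_i}$ is balanced by $t_i=\Theta\bigl(N^{(2^{k-i}-1)/(2^k-1)}\bigr)$. Robustness on adversarial inputs then comes for free from source-disjointness together with the fact that Grover returns a \emph{uniformly random} marked element, so a constant fraction of the partial solutions found at each stage remain extendable by fresh, disjoint source components. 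Importing genuine quantum-walk machinery would be a wrong turn here (walk-based $k$-distinctness algorithms address the promise-free problem and give different exponents), and your ``constrained vs.\ free source components'' bookkeeping is unnecessary: the marking predicate ``these $i$ vertices lie in $i$ distinct source components of a single $H$-subgraph, verifiable by a depth-$h$ out-BFS from each'' handles an arbitrary constant-size $H$ uniformly.
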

In order to prove the above result, we connect it to the problem of finding $k$-collisions.
In \cite{zhandry_collision}, an algorithm is given for finding $k$-collisions in sequences of random integers. We generalize this to the context of graph property testing in two ways: first, finding a subgraph (instead of a collision); and second, considering graphs that are far from being $H$-free (instead of random).

Moreover, we prove that this quantum advantage is nearly tight, by showing a quantum lower bound using the method of dual polynomials.
\begin{theorem}[Corollary of \Cref{lowerbound-col_intro}]\label{theo:kstarfree_lb_intro}
The quantum query complexity of testing $k$-source-subgraph-freeness in the unidirectional model is $\tilde\Omega\big(N^{\frac{1}{2}\left(1-\frac{1}{k}\right)}\big)$.
\end{theorem}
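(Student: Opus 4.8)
\subsection*{Proof proposal for \Cref{theo:kstarfree_lb_intro}}

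The plan is to derive this lower bound from the quantum lower bound for the property-testing version of the $k$-collision problem (\Cref{lowerbound-col_intro}) via a query- and distance-preserving reduction, exactly mirroring the classical reductions of Hellweg--Sohler~\cite{hellweg2013property} and Peng--Wang~\cite{peng2023optimal}, but carried out at the level of quantum oracles so that the query complexity is preserved up to constants. Recall the two cases of the hard instance for collision testing: a sequence $s:[N]\to[M]$ that is either $k$-collision-free, or contains $\Omega(N)$ pairwise disjoint $k$-collisions; by \Cref{lowerbound-col_intro} distinguishing these requires $\tilde\Omega\big(N^{\frac12(1-\frac1k)}\big)$ quantum queries to $s$, and (as is standard for such lower bounds) the hard instances can be taken with every value of multiplicity at most $k$.

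Given such an $s$, construct the directed graph $G_s$ on $N+M$ vertices: a vertex $x_i$ for each position $i\in[N]$, a vertex $y_a$ for each value $a\in[M]$, and a single edge $x_i\to y_a$ whenever $s(i)=a$. Then each $x_i$ has out-degree $1$ and in-degree $0$, while $y_a$ has out-degree $0$ and in-degree equal to the multiplicity of $a$ in $s$, which is at most $k$; so $G_s$ is a $d$-bounded-degree directed graph for any fixed $d\ge k$. A $k$-star --- $k$ source vertices each with an edge into a common center --- occurs in $G_s$ precisely when $k$ positions map to the same value, i.e.\ precisely when $s$ has a $k$-collision. Hence if $s$ is $k$-collision-free then $G_s$ is $k$-star-free, and if $s$ has $\Omega(N)$ disjoint $k$-collisions then any $k$-star-free subgraph of $G_s$ must delete at least one (in fact, at least $k-1$) of the $k$ edges of each such collision, so $\Omega(N)=\Omega(dN)$ edge deletions are needed and $G_s$ is $\Omega(1)$-far from $k$-star-free.

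Next, observe that a single unidirectional query to $G_s$ is simulated by a single query to $s$: the query $(x_i,1)$ returns $y_{s(i)}$, and every other outgoing-edge query returns the ``no edge'' symbol deterministically and independently of $s$. Therefore a $q$-query quantum tester for $k$-star-freeness in the unidirectional model yields a $q$-query quantum algorithm distinguishing the two cases of the collision problem, forcing $q=\tilde\Omega\big(N^{\frac12(1-\frac1k)}\big)$ by \Cref{lowerbound-col_intro}. Finally, the $k$-star is a valid choice of $H$ with exactly $k$ source components (each of its $k$ leaves is a trivial strongly connected subgraph with no incoming edge), so testing $k$-star-freeness is an instance of testing $k$-source-subgraph-freeness, and the same lower bound applies; this proves the theorem.

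The one thing to be careful about is the bookkeeping of the bounded-degree constraint and the farness constant: we must ensure the hard instances of \Cref{lowerbound-col_intro} indeed have bounded multiplicities (so $G_s$ is $d$-bounded), and that the $\Omega(N)$ disjoint $k$-collisions genuinely translate into $\Omega(dN)$ forced edge deletions, so the farness parameter does not degrade below a fixed constant; padding the vertex set with isolated sinks if needed keeps the total vertex count $\Theta(N)$ and does not affect the exponent. The genuinely hard part --- the dual-polynomial analysis behind \Cref{lowerbound-col_intro} --- is entirely contained in that lemma and is not re-opened here.
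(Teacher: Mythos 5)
Your reduction is exactly the one the paper uses (\Cref{pro:coll_star_reduction}): the bipartite position-to-value graph with out-degree one, under which $k$-collisions become $k$-stars, combined with the observation that a $k$-star has $k$ source components so the bound transfers to $k$-source-subgraph-freeness. The only superfluous step is your appeal to bounded multiplicities in the hard collision instances to control in-degrees: the paper's unidirectional model bounds only out-degrees (which are $1$ here), so this is not needed — and the hard instances behind \Cref{lowerbound-col_intro} are in fact not guaranteed to have multiplicity at most $k$.
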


For proving graph property testing lower bounds, both the classical works of \cite{hellweg2013property} and \cite{peng2023optimal} prove collision testing lower bounds using the proportional moments technique of \cite{raskhodnikova2009strong}. 
At the heart of this technique is
a construction of two positive integer random variables, $X_1$ and $X_2$, with different expectations but with the following conditions on the first $k-1$ moments: $\E[X_1]/\E[X_2] = \E[X^2_1]/\E[X^2_2] = \ldots = \E[X^{k-1}_1]/ \E[X^{k-1}_2]$.
Such a construction leads to a randomized query complexity lower bound of $\Omega(N^{1-\frac{1}{k}})$ for various property testing problems such as $k$-collision-freeness \cite{peng2023optimal}.
Having a quantum version of this technique has been identified as an important open problem~\cite{ambainis2016efficient}, since this could be used to pave the way to stronger quantum lower bounds in related settings.
We modestly made progress to this quest for the special case of testing $k$-collision-freeness. 

In \cite{zhandry_collision}, in addition to the algorithm we mentioned, they also prove a matching lower bound showing that their algorithm for \emph{finding} $k$-collisions in random inputs is optimal. However, this time we cannot reuse those techniques for our purpose for two main reasons.
First, the property testing variant of this problem could be easier. Moreover, their lower bound technique requires random inputs and hence it does not apply to our case. This is why we use yet another method, that of dual polynomials, to prove our lower bound.

\begin{theorem}[Restated in \Cref{lowerbound-col}] \label{lowerbound-col_intro}
The quantum query complexity of testing $k$-collision-freeness is $\tilde\Omega\Big(N^{\frac{1}{2}\left( 1 - \frac{1}{k} \right)}\Big)$.    
\end{theorem}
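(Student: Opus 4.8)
The plan is to prove a quantum query lower bound of $\tilde\Omega(N^{\frac12(1-1/k)})$ for distinguishing integer sequences that are $k$-collision-free from those that are $\eps$-far from it, using the method of dual polynomials (approximate-degree lower bounds) rather than the proportional-moments technique. The high-level strategy: reduce the testing problem to a cleaner symmetric Boolean (or low-alphabet) problem whose approximate degree is known or can be lower-bounded, and then lift this to a quantum query lower bound via the standard symmetrization + polynomial method argument of Beals--Buhrman--Cleve--Mosca--de Wolf together with an approximate-degree bound.

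**First I would** set up the hard instance distribution. The two cases differ in how concentrated the multiplicities are: in the "no $k$-collision" case every value appears at most $k-1$ times, while in the "far" case a constant fraction of positions lie in blocks of size $\ge k$ (equivalently, one can take blocks of size exactly $k$ or "$k$-to-$0$"-type inputs, mimicking the $2$-to-$1$ hard instances of Aaronson--Shi). The key reduction is to turn this into a threshold-counting problem: after hashing values into buckets, distinguishing the two cases amounts to estimating whether a certain count (the number of buckets reaching multiplicity $k$) is zero or bounded below, which is morally a $\mathrm{GapOR}$ / gapped-$\mathrm{THR}$ style problem on the induced bits. Since the paper already defines macros like \thr{} and \gapor{}, I expect the intended intermediate problem is a gapped threshold function, for which tight approximate-degree bounds (à la Nisan--Szegedy, Paturi, or Bun--Thaler for gapped versions) are available.

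**The main obstacle** — and the step I would spend the most care on — is producing the \emph{dual witness}: an explicit function $\psi$ on inputs (or on the symmetrized single-variable polynomial) that has high correlation with the testing predicate, vanishes against all low-degree polynomials, and has small $\ell_1$-mass on the "accept" side, certifying that approximate degree is $\tilde\Omega(N^{1-1/k})$ \emph{after the quadratic loss from Grover-type search is accounted for}, which is where the square root and hence the exponent $\frac12(1-1/k)$ comes from. Concretely, I would build the dual polynomial by combining a dual witness for the underlying gapped-threshold/$k$-collision primitive with the standard dual-block method (tensoring a dual witness for OR with one for the inner threshold gadget), then symmetrize: because the problem is invariant under permuting positions and relabeling values, by Minsky--Papert / BBCM symmetrization the quantum algorithm's acceptance probability is a univariate polynomial in the relevant moment statistics of low degree, and the dual witness forces that degree to be large. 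Care is needed to track that farness parameter $\eps$ stays constant through the reduction and that the alphabet size is chosen so both the completeness and soundness instances are realizable.

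**Finally I would** assemble the pieces: (i) state the reduction from $k$-collision-freeness testing to the symmetric primitive, checking the farness is preserved; (ii) invoke (or prove via the dual-block construction) the approximate-degree lower bound $\tilde\Omega(N^{1-1/k})$ for the primitive; (iii) apply the polynomial method to convert $Q$ queries into a degree-$2Q$ polynomial approximating the acceptance probability, yielding $Q = \tilde\Omega(N^{\frac12(1-1/k)})$; and (iv) note the $\tilde\Omega$ hides only logarithmic factors coming from the gapped-threshold dual witness. \Cref{theo:kstarfree_lb_intro} then follows immediately by the (easy) reduction embedding $k$-collision instances as the out-neighborhood structure of a sparse digraph so that $k$-collisions correspond to $k$-stars, i.e.\ $k$-source-subgraphs.
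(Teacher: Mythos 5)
Your overall route is the same as the paper's: encode the testing problem as the composition $\gapor_R^\gamma\circ\thr_N^k$, build a dual witness by dual-block-composing a witness for the outer $\mathrm{OR}$-type function with one for the inner threshold, and convert the resulting pure-high-degree bound into a query bound via the polynomial method, finishing with the sequence-to-digraph reduction for $k$-stars. Two points, however, are genuine problems as written.

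First, your accounting of where the exponent $\tfrac12(1-\tfrac1k)$ comes from is internally inconsistent. The polynomial method converts $T$ queries into a degree-$2T$ polynomial, so a degree bound of $\Delta$ yields a query bound of $\Delta/2$ --- there is no ``quadratic loss'' at that step. If the approximate degree of your primitive were really $\tilde\Omega(N^{1-1/k})$, you would obtain a quantum query lower bound of $\tilde\Omega(N^{1-1/k})$, which would contradict the $\tOh(\sqrt{N})$-type upper bound of \Cref{theo:kstarfree_ub_main}. The square root must already live in the approximate-degree bound itself: the dual witness for $\thr_N^k$ has pure high degree only $\Theta(\sqrt{k^{-1}N^{1-1/k}})$ (cf.\ \Cref{pro:zeta_merged}), and since the outer witness for $\gapor$ contributes pure high degree $1$, the block composition inherits exactly this $\tilde\Omega(N^{\frac12(1-1/k)})$ degree, of which the query complexity is half.

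Second, you treat the dual-block construction and its correlation analysis as standard, but that is precisely where the standard argument fails here and where the paper's technical contribution lies. The known correlation proofs for dual block compositions (as in the $k$-distinctness analysis you are implicitly following) rely on the inner dual witness having \emph{one-sided} error, which holds for $\mathrm{OR}$ but not for $\thr_N^k$; one must instead control both the false-positive mass ($\le \frac{1}{48N}$) and the false-negative mass ($\le \frac12-\frac{2}{4^k}$) of $\psi$ separately and run a Chernoff/union-bound argument over the $R$ blocks (\Cref{claim:second_point}). Relatedly, your sketch omits the Hamming-weight issue: the binary encoding of a non-Boolean input forces the dual witness to vanish outside $H_{\le N}^{NR}$ (the ``double-promise'' condition of \Cref{th:dpdeg0}), and arranging this while preserving pure high degree and correlation requires the decay-condition machinery of \Cref{pro:final_zeta}, not just symmetrization. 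Without these two ingredients the proof does not close.
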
 

In the second line of results, we show that not all problems in graph property testing admit such a quantum speedup. This fact even remains valid for the case of undirected graphs. 
For this, we consider the property testing variant of the famous problem of $3$-colorability: namely, distinguishing whether an unknown undirected graph $G$ can be properly colored with $3$ colors, or one needs to modify a large fraction of its edges to make it $3$-colorable. In the classical bounded degree setting, this problem has been studied by \cite{bogdanov2002lower}, who proved a lower bound of $\Omega(N)$ queries. In this work, we present a simple argument that proves that there exists no sublinear quantum tester either for this problem. Our result is stated as follows:  

\begin{theorem}[Restated in \Cref{theo:3colorlbmain}] 
\label{theo:3color_lb_intro}
The quantum query complexity of testing of 3-colorability of undirected bounded-degree graphs is $\Omega(N)$.
\end{theorem}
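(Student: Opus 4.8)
The plan is to transfer the classical $\Omega(N)$ lower bound of Bogdanov, Obata and Trevisan~\cite{bogdanov2002lower} to the quantum model. The first step is to extract from their construction (or reprove in a convenient form) a pair of distributions over $d$-bounded-degree graphs on $N$ vertices: a distribution $\cD_{\mathrm{yes}}$ supported on $3$-colorable graphs, and a distribution $\cD_{\mathrm{no}}$ supported on graphs that are $\eps$-far from $3$-colorable. The crucial feature their argument relies on is that the failure of $3$-colorability of a sample from $\cD_{\mathrm{no}}$ is entirely \emph{global}: any subgraph that a tester can expose with few neighbor queries is distributed essentially identically under $\cD_{\mathrm{yes}}$ and $\cD_{\mathrm{no}}$. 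I would phrase this quantitatively in terms of the neighbor oracle's low-order marginals. Encoding the oracle by the $0/1$ variables $y_{v,i,u}$, where $y_{v,i,u}=1$ exactly when $u$ is the $i$-th out-neighbor of $v$, the claim to establish is that there is a constant $\delta>0$ such that for every set $S$ of at most $\delta N$ of these variables, the marginals $\cD_{\mathrm{yes}}|_S$ and $\cD_{\mathrm{no}}|_S$ coincide (or, if the construction only yields approximate agreement, are superpolynomially close).

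Granting this, the quantum lower bound follows from a soft argument via the polynomial method. The acceptance probability of any $q$-query quantum tester is, as a function of the oracle, a multilinear polynomial of degree at most $2q$ in the variables $y_{v,i,u}$ (each query contributes degree at most one to each amplitude, so the squared amplitudes have degree at most $2q$). Taking expectations, the tester's acceptance probability on an \emph{input distribution} is a fixed linear function of the joint distributions of the $y$-variables over subsets of size at most $2q$. Hence, if $q\le \delta N/2$, the expected acceptance probability is the same whether the input is drawn from $\cD_{\mathrm{yes}}$ or from $\cD_{\mathrm{no}}$, so no such tester separates $3$-colorable graphs from graphs $\eps$-far from $3$-colorable with constant advantage; therefore $\Omega(N)$ queries are necessary.

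The main obstacle is the first step: pinning down that the hard instances really induce \emph{identical} — not merely statistically close — marginals over every set of $\Theta(N)$ oracle positions. The quantum conclusion is genuinely sensitive to this, in a way the classical one is not: a coupling-plus-hybrid argument in the style of Bennett, Bernstein, Brassard and Vazirani does not help, since a $3$-colorable graph and one that is $\eps$-far from $3$-colorable must differ on $\Omega(N)$ edges, which makes the resulting hybrid bound vacuous. I expect this to be resolvable either by a mild symmetrization of the construction so that the two distributions have literally the same local structure, or, if only approximate agreement is available, by upgrading the exact-marginal step to a robust one that exploits the fact that the acceptance polynomial takes values in $[0,1]$ and has degree only $2q$. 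Failing that, an alternative route would sidestep a close reading of~\cite{bogdanov2002lower} altogether by giving a query-preserving gadget reduction to $3$-colorability testing from a Boolean constraint-satisfaction property-testing problem already known to require $\Omega(N)$ quantum queries, mirroring the classical reduction from $3$-$\mathrm{SAT}$ to graph $3$-colorability; I would use this only as a fallback.
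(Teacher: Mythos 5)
Your proposal is essentially the paper's argument: the hard instances come from the Bogdanov--Obata--Trevisan construction, the quantum indistinguishability comes from the fact that a degree-$2q$ acceptance polynomial only sees $2q$-wise marginals, and the transfer to $3$-colorability is via the local gadget reduction. The one point you flag as the ``main obstacle'' --- whether the low-order marginals of $\cD_{\mathrm{yes}}$ and $\cD_{\mathrm{no}}$ agree exactly rather than approximately --- is already resolved by the construction itself: the matrix $A$ has the property that \emph{every} set of $\delta N$ rows is linearly independent, so for $y=Az$ with $z$ uniform, any $\delta N$ coordinates of $y$ are exactly uniform, i.e., $y$ is exactly $\delta N$-wise independent and matches the uniformly random $y$ of the no-distribution on all such sets. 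No symmetrization or robust version of the marginal argument is needed. The paper organizes this slightly differently from your main route: rather than pushing the marginal analysis through the colorability gadgets, it first proves the $\Omega(N)$ quantum lower bound for E$(3,c)$LIN-2 (invoking the standard fact that $q$ quantum queries cannot distinguish a $2q$-wise independent string from a uniform one) and then applies the query-preserving reduction to $3$-colorability --- which is precisely your ``fallback,'' except that the CSP lower bound is not pre-existing in the quantum setting and must be proved as the first step.
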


\subsection{Technical overview}

\subsubsection{Subgraph-finding algorithm} 
We start by describing how to prove the upper bound result of \Cref{theo:kstarfree_ub_intro} for testing $k$-source-subgraph-freeness.
We view the problem as a generalization of the problem of finding $k$-collisions and adapt an existing quantum algorithm for the latter problem.
In \cite{zhandry_collision}, an algorithm is given for finding $k$-collisions in length-$N$ sequences of integers that contain $\Omega(N)$ $k$-collisions (e.g. $k$-to-$1$, or random sequence with appropriate parameters). Their algorithm generalizes the well-known collision finding algorithm of \cite{BHT}. On a high level, the \cite{zhandry_collision} algorithm first finds several $2$-collisions using Grover search like in \cite{BHT}, extends some of them to $3$-collisions in a similar way, and so on until a $k$-collision is found.

On the one hand, instead of random inputs, we consider the problem in the property testing context; and on the other hand, we generalize collision-finding to subgraph-finding.
As a first step let us look at what happens when we consider the property testing version of the $k$-collision problem. In order to be able to use the algorithm of $\cite{zhandry_collision}$, we have to prove that if a length-$N$ sequence is far from $k$-collision-freeness then it contains many $k$-collisions. Notice that the collisions are not necessarily distinct: if the input only contains the same integer $N$ times, it only contains one huge collision, but it is still $\eps$-far from $k$-collision-freeness for any $\eps<1-k/N$. Thus, what we need to show is that there are $\Omega(N)$ many disjoint size-$k$ sets of indices such that for each set, the sequence contains the same character at the positions of the set. This statement is true because otherwise, by modifying all the characters that are in positions contained in a set ($o(N)$ characters in total), we could get a $k$-collision-free sequence which contradicts being far from $k$-collision-freeness.

When we make the second step of turning to testing of subgraph-freeness, we need to prove a variant of this statement: if an $N$-vertex graph $G$ is far from $H$-freeness (for some constant-sized subgraph $H$) then it contains $\Omega(N)$ many ``source-disjoint'' $H$-subgraphs. This means that there are $\Omega(N)$ many such $H$-subgraphs in $G$ that the set of vertices in the source components of each $H$-subgraph are disjoint. We prove this fact in \Cref{claim:eps-far-H} and this allows us to further generalize the approach of \cite{zhandry_collision}: first find several partial solutions where only a few source components of an $H$-subgraph are explored, and gradually extend these (using Grover search coupled with constant-depth BFS) until a complete $H$-subgraph is found.

Notice that this way our algorithm \emph{finds} an $H$-subgraph in $G$ promised that $G$ is far from $H$-freeness. This task is at least as difficult as property testing, where the algorithm only has to \emph{distinguish} whether $G$ is $H$-free or far from any $H$-free graph. So our algorithm provides an upper bound on the property testing variant of $H$-freeness.

\subsubsection{Collision-freeness lower bound} 

Now we will discuss our approach to proving the lower bounds of collision-freeness (\Cref{theo:kstarfree_lb_intro}) and $k$-source-subgraph-freeness (\Cref{lowerbound-col_intro}). We first give a simple reduction from $k$-collision-freeness to $k$-star-freeness, which is a special case of $k$-source-subgraph-freeness. This way, it is enough to prove a lower bound on testing $k$-collision-freeness, and it implies the same result on testing $k$-source-subgraph-freeness.
Since our lower bound approach crucially depends on the (dual) polynomial method, let us start by briefly discussing it. 

\paragraph{The (dual) polynomial method}
The polynomial method is a common way to prove quantum query complexity lower bounds. It relies on the fact that the acceptance probability of a $T$-query bounded-error quantum algorithm is a polynomial of degree at most $2T$ \cite{beals_poly}. This way, for proving a quantum query complexity lower bound on calculating a function $f$, it suffices to argue that any approximating polynomial of $f$ has large degree. 
One of the key properties that such lower bounds exploit is the symmetry that the function $f$ may exhibit, such as invariance under some permutation of the input.
For example, the first tight lower bound of $\Omega(n^{1/3})$ for the collision problem was proved in this way \cite{AaronsonShi}.

The polynomial method can be written in the form of a linear program, of which one can take the dual. By weak LP-duality, when using this dual characterization for proving a lower bound on function $f$, one needs to provide a ``witness'' of the approximating polynomial's high degree, say $\Delta$. This witness is called the \emph{dual polynomial} $\psi$ and, in the easiest case of total Boolean\footnote{We use $\{-1,1\}$ where $-1$ corresponds to the ``true value''.} functions $f:\{-1,1\}^n\to\{-1,1\}$, it needs to have three properties:
\begin{enumerate}
     \item[(i)] High correlation with $f$: $\sum_x f(x)\psi(x)>\delta$;
     \item[(ii)] Normalization: $\sum_x |\psi(x)|=1$;
     \item[(iii)] Pure high degree $\Delta$:  $\sum_{x} p(x)\psi(x)=0$, for every polynomial $p$ with degree $<\Delta$,
\end{enumerate}
where the summations are all over ${x\in\{-1,1\}^n}$.

When the function $f$ is partial, i.e., only defined on a subset $D\subset \{-1,1\}^n$, there is some subtlety that could be handled in two ways (or even in a mixture or both): zero-out the dual polynomial outside $D$ (corresponding to ``unbounded degree''); or rewrite condition (i) accordingly (corresponding to ``bounded degree''):
\begin{enumerate}
     \item[(i')] High correlation with $f$: $\sum_{x\in D} f(x)\psi(x)-\sum_{x\not\in D}|\psi(x)|>\delta$.
\end{enumerate}

\paragraph{Collision function} 
The paper of~\cite{polynomial_strikes_back} also used the dual polynomial method for proving quantum lower bounds for many problems, most of them being open before that work.
Similarly to that paper, we need to take several steps to be able to use the dual polynomial method for the problem of property testing $k$-collision-freeness. 
This problem was not addressed in~\cite{polynomial_strikes_back}.

One of the main conceptual ideas in~\cite{polynomial_strikes_back} is to re-formulate the problem we study as a composition of two simple Boolean functions. In that paper, powerful techniques are also developed in order to design dual polynomials for simple functions that can be composed. A common way of composing dual polynomials (called dual block composition) dates back to~\cite{ShiZ09,Lee09,Sher13}, but~\cite{polynomial_strikes_back} provides new tools for handling it efficiently. We are going to reuse some of them, and also extend one in a way.

The first step is to find the right problem that can fit in the framework.
We introduce a partial symmetric function $F$ defined on input strings $s=(s_1,\dots,s_N)\in[R]^N$.
The domain of $F$ corresponds to the following promise: either $F$ has no $k$-collision, or it has many $k$-collisions occurring for distinct values. More formally, 
$$F(s) =
\begin{cases}
-1 & \text{if no integer occurs at least } k \text{ times in } s, \\
1 & \text{if more than } \gamma R \text{ distinct integers occur at least } k \text{ times in } s, \\
\text{undefined} & \text{otherwise}.
\end{cases}$$
This partial function is not a property testing problem, however it corresponds to a special case of testing $k$-collision-freeness, which is therefore enough to prove lower bounds.

\paragraph{Binary encoding}
Now we encode the input string $s=(s_1,\dots,s_N)\in[R]^N$ into binary variables $x_{i,j}$ storing whether $s_i=j$, as in~\cite{Aaronson02}. Doing so, starting from the function $F$ above, we end up with a function $f$ defined over binary variables satisfying several symmetries, under the permutation of either $i$ or $j$ in $x_{i,j}$. 

Moreover, the symmetries of $f$ allow the extension of the initial function $f$ from the very restricted set of binary inputs correspond to valid strings, to the more general set of  binary inputs of Hamming weight $N$~\cite{ambainis2016efficient}.
With further technicalities 
one can also extend $f$ to all binary inputs of Hamming weight \emph{at most} $N$~\cite{Bun_Thaler_20}.
This is fundamental because instead of being forced to zero out the dual polynomial outside the domain of $f$, we only need to do so on inputs of Hamming weight higher than $N$.
Using the symmetry of $f$, it can be shown that a lower bound on this modified, Boolean version implies a lower bound on the original $k$-collision problem. 

This way we end with two promises on the binary encoding of the input. The first one comes from the function $F$ itself: we have the promise that the input contains either no $k$-collision or it has many ones with different values. 
The second promise is the consequence of the encoding: we want the binary encoding to have Hamming weight at most $N$. Let $D$ denote the set of binary strings satisfying both promises, and let $H_{\le N}$ denote the set of binary strings with Hamming weight at most $N$. 
For this case we use the ``double-promise'' version of the dual polynomial method, where, in order to prove that every $\delta$-approximating polynomial of $f$ has degree at least $\Delta$, the dual polynomial has to satisfy four conditions, where the fourth one corresponds to zeroing out $\psi$ on large Hamming weight inputs~\cite{polynomial_strikes_back}:
\begin{enumerate}
    \item[(i')] High correlation with $f$: $\sum_{x\in D} f(x)\psi(x)-\sum_{x\in H_{\le N}\setminus D}|\psi(x)|>\delta$;
    \item[(ii)] Normalization: $\sum_x |\psi(x)|=1$;
    \item[(iii)] Pure high degree $\Delta$: $\sum_x p(x)\psi(x)=0$, for every polynomial $p$ with degree $<\Delta$; 
    \item[(iv)] No support on inputs with large Hamming weight: $\psi(x)=0$, for every $ x\notin H_{\le N}$. 
\end{enumerate}

\paragraph{Composition}
Coming back now to the definition of our Boolean function $f$, one can rewrite it as a composition of simpler functions: $\gapor_R^\gamma\circ\thr_N^k$, where by composition we mean $(g\circ h)(x)=g(h(x_1),\dots,h(x_n))$ (where $x=(x_1,\dots,x_n)$ and each $x_i$ is a binary vector of appropriate dimension) and the domain is restricted to bit strings of Hamming weight at most $N$. Note that here, $x_j=(x_{1,j},x_{2,j},\ldots,x_{N,j})$.
Above, $\thr_N^k$ is the threshold function: it is $-1$ if the input bitstring contains at least $k$ many $-1$ (true) values, and is $1$ otherwise; and $\gapor_R^\gamma$ is the gap version of $\mathrm{OR}$, which is $1$ if the input is all-1, $-1$ if the input contains at least $\gamma R$ many $-1$ values, and is undefined otherwise.

In order to give a dual polynomial for this composed function, we start from a dual polynomial for each part of the composition ($\phi$ and $\psi$), which were already given in \cite{polynomial_strikes_back} (in different contexts). Then we use a known way~\cite{ShiZ09,Lee09,Sher13} of composing dual polynomials called the \emph{dual block composition}, which provides a nearly good dual polynomial $\phi\star \psi$ for $\gapor_R^\gamma\circ\thr_N^k$. 
Indeed, by construction, the normalization (ii) and the pure high degree (iii) are guaranteed.
However, the issue is that it is not $0$ on bitstrings of large Hamming weight thus (iv) is not satisfied, and and the high correlation (i') still has to be proved.

To fix (iv), we use another result of \cite{polynomial_strikes_back} which provides another dual polynomial $\zeta$, that is close to $\phi\star \psi$ and that is $0$ on inputs having Hamming weight larger than $N$. 
Also, it only changes the pure high degree by a polylogarithmic factor.
Now the only remaining task is to prove a large enough correlation (i') of $\phi\star \psi$ and $\gapor_R^\gamma\circ\thr_N^k$ so that $\zeta$ still has high enough correlation.
This high correlation proof (\Cref{claim:second_point}) is the most technical part of our paper.

\paragraph{High correlation: proof of \Cref{claim:second_point}}
The statement we prove is the following high correlation bound:
$$\sum_{x\in D}(\phi\star\psi)(x)\cdot (\gapor_R^\gamma\circ\thr_N^k)(x)-\sum_{x\notin D}|(\phi\star\psi)(x)|\ge 9/10.$$ 
In the proof of this lemma we use \Cref{pro:like_5.5_5.6} that is a more general statement of some techniques used in several proofs of \cite{polynomial_strikes_back}. But then we need to diverge from their proof because it crucially relies on a certain one-sided error property (in the sense of \cite[Lemma 6.11]{polynomial_strikes_back}) of the inner function of the composition which is OR function in their case. Our inner function is the threshold function which does not satisfy this property, so we have to use some other properties of the dual polynomials in our proof. This different proof technique could be a step towards obtaining a more general lower bound technique.

In fact, the authors in \cite{ambainis2016efficient} stated it as an open question if one could use a variant of the proportional moments technique for proving better quantum lower bounds. 
We leave this question open, and conjecture that a similar result holds in the quantum setting with a lower bound of  $\Omega\Big(N^{\frac{1}{2}\left( 1 - \frac{1}{k} \right)}\Big)$. One can consider this work as a proof of this conjecture for the special case of $k$-collision-freeness, and we hope that it will serve as a step towards proving it in general.

\paragraph{Comparison with \cite{polynomial_strikes_back}}
The structure and several elements of our 
lower bound proof come from \cite{polynomial_strikes_back}.
Using their work is it relatively easy to consider the binary encoding of the $k$-collision function, extend its domain and relate it to the composition of easier functions. They also provide dual polynomial we can use and we can compose them using the dual block composition. We can even use another result of theirs that provides a dual polynomial that is zero on large Hamming weight inputs. Then the only remaining task is to prove the high correlation, for which we need to diverge from this paper because their respective proof relies on a one-sided error property that does not hold for our problem. This way we prove our result in a more difficult, two-sided error setting.

\subsubsection{3-colorability lower bound} 
Let us now discuss our approach to proving the linear lower bound on quantum query complexity of testing $3$-colorability (\Cref{theo:3color_lb_intro}).
Before proceeding to present our approach, let us briefly discuss the classical lower bound of testing $3$-colorability.
To prove the classical lower bound of $3$-colorability, the authors in \cite{bogdanov2002lower} first studied another problem called E$(3, c)$LIN-2, a problem related to deciding the satisfiability of a system of linear equations. More formally, E$(3, c)$LIN-2 considers a system of linear equations modulo $2$, where each equation has $3$ variables and every variable appears in at most $c$ equations. Given such a system of linear equations, the goal is to distinguish if it is satisfiable, or at least some suitable fraction of the equations need to be modified to satisfy it. \cite{bogdanov2002lower} proved that $\Omega(N)$ classical queries to the system of linear equations are necessary for testing E$(3, c)$LIN-2. 

After this, they designed a reduction from 
E$(3, c)$LIN-2 to $3$-colorability such that satisfying instances of E$(3, c)$LIN-2 are reduced to $3$-colorable graphs, and far from satisfiable instances of E$(3, c)$LIN-2 are mapped to far from $3$-colorable graphs. Combining these two arguments, the authors in \cite{bogdanov2002lower} proved that $\Omega(N)$ classical queries are necessary for testing $3$-colorability for bounded degree graphs.

The authors in \cite{bogdanov2002lower} used Yao's minimax method to prove the linear lower bound in testing E$(3, c)$LIN-2. In particular, they designed two distributions $D_{\mathrm{yes}}$ and $D_{\mathrm{no}}$ such that the systems of linear equations in $D_{\mathrm{yes}}$ are satisfiable, whereas the systems of linear equations in $D_{\mathrm{no}}$ are far from being satisfiable. A crucial ingredient of their lower bound proof is a construction of a system of linear equations (represented as a matrix) that are far from being satisfiable, but any $\delta N$ rows of the matrix are linearly independent. Hence, any subset of $\delta N$ entries of the matrix will look uniformly random, and therefore hard to distinguish from a satisfiable instance.

It is a known fact that distinguishing between a uniformly random string and a $\ell$-wise independent string, for an appropriate integer $\ell$, is hard for quantum algorithms (see e.g. \cite{apers2022quantum}). Using this result, we can construct suitable hard instances for E$(3, c)$LIN-2 such that testing E$(3, c)$LIN-2 remains maximally hard (requires $\Omega(N)$ queries) for any quantum algorithm. Combining this hardness result with the reduction from E$(3, c)$LIN-2 to $3$-colorability, we finally prove that $\Omega(N)$ quantum queries are necessary for testing $3$-colorability. We formally prove this in \Cref{sec:quantum_3_coloability_lb}.  

Later, in \cite{yoshida2010query}, the authors used various reductions to $3$-colorability to argue that a number of other important problems including testing Hamiltonian Path/Cycle, approximating Independent Set/Vertex Cover size etc, are maximally hard to test in the classical model. As a corollary of our quantum lower bound, we also obtain maximal quantum query complexity for these problems.

\subsection{Open problems}
Our work raises several important open questions.
First, there is still a gap between our lower and upper bounds on the quantum query complexity of testing $k$-collision-freeness. In \cite{improved_kDist}, the authors keep using the dual polynomial method to improve the lower bound of \cite{polynomial_strikes_back} for the $k$-distinctness problem. They achieve this by using a slightly different dual polynomial for $\thr_N^k$, where they allow more weight on the false positive inputs.
This makes it impossible to prove the high correlation of the dual and the primal functions, so they use a modified block composition. Our technique might be combined with this other approach to improve our lower bound to~$\tilde\Omega(N^{1/2-1/(4k)})$.

As we mentioned before, the authors in \cite{ambainis2016efficient} stated it as an open question if one could use a variant of the proportional moments result of \cite{raskhodnikova2009strong} to prove optimality of quantum property testers in the unidirectional model. This work may be considered as the first attempt to generalize this technique to the quantum setting. 

In \cite{czumaj2016relating}, it was proved that if a graph property can be tested with $O(1)$ queries in the bidirectional model, then it can be tested using $O(N^{1-\Omega(1)})$ queries in the unidirectional model.
It would be very interesting to investigate if it also implies a quantum tester with query complexity, say $O(N^{1/2-\Omega(1)})$.

\section{Preliminaries}\label{sec:prelim}

\subsection{Notations and basic definitions}

Let us denote $[n]=\{1,\dots,n\}$ and $[n]_0=\{0,\dots,n\}$. When dealing with Boolean variables, we will usually use $b\in\{-1,1\}$ instead of $b\in\{0,1\}$. 
We can get to one from the other easily with the mapping $b\mapsto 1-2b$, or its inverse, which means that $-1$ is going to be treated as the ``true'' or ``accepting'' value. The reason for using $\{-1,1\}$ is that when dealing with dual polynomials, it is easier to use this notation.

We denote by $1^n$ the length-$n$ binary vector made only of $1$s, and respectively $-1^n$. 
The Hamming weight $|x|$ of $x\in\{-1,1\}^n$ is then defined as
the number of $-1$s in $x$, that is $|x|=\#\{i\in[n]: x_i=-1\}$.
Let $H_{\leq w}^n=\{x\in\{-1,1\}^n:|x|\leq w\}$ denote the set of length-$n$ binary vectors with Hamming weight at most $w$.
For any $x \in \mathbb{R}$, $\sgn(x)=1$ when $x \geq 0$, and $-1$ otherwise.

For a polynomial $p$, let $\deg(p)$ denote its degree.
The composition $f\circ g:\{-1,1\}^{nm}\to\{-1,1\}$ of two Boolean functions $f:\{-1,1\}^n\to\{-1,1\}$ and $g:\{-1,1\}^m\to\{-1,1\}$ is defined as $(f\circ g)(x)=f(g(x_1),\dots,g(x_n))$ where $x=(x_1,\dots,x_n)$ with each $x_i\in\{-1,1\}^m$.

A directed graph or digraph $G=(V,E)$ is a pair of a vertex set $V$ and an edge set $E$. The latter consists of directed edges that are ordered pairs of vertices: we say that $(u,v)\in E$ is directed from $u$ to $v$ where $u,v\in V$. We say that there is a directed path from $s=v_0$ to $t=v_{l+1}$ (with $s,t\in V$) if there exists an integer $\ell$ and vertices $v_1,\dots,v_\ell\in V$ such that $\forall i\in[\ell]_0:\ (v_i,v_{i+1})\in E$. A digraph $G=(V,E)$ is called \emph{strongly connected} if for every $u\in V$ and $v\in V\setminus\{u\}$, there exists a directed path from $u$ to $v$. A \emph{subgraph} of a graph $G = (V, E)$ is any graph $G' = (V', E')$ satisfying $V' \subseteq V$, $E' \subseteq E$ and $E'\subseteq V'\times V'$.

Finally, throughout this work, notations $O(\cdot)$ and $\Omega(\cdot)$ will be hiding the dependencies on parameters $\eps$, $k$ and $d$ that we consider to be constants. Additionally, we will use $\widetilde{O}(\cdot)$ and $\widetilde{\Omega}(\cdot)$, where we hide poly-logarithmic dependencies on the parameters.

\subsection{Query complexity}
In query complexity, we consider inputs $x\in \Sigma^I$ over a finite alphabet $\Sigma$ and indexed by a set $I$.
They are not given explicitly to the algorithm. Instead, the algorithm has query access to an input oracle $\cO_x:I\to\Sigma$ encoding $x$ by $\cO_x(i)=x_i$. 
Quantumly, the query access is described by the unitary operator $O_x\ket{i}\ket{z}= \ket{i}\ket{z \oplus x_i}$, for $z\in \Sigma$ and $i\in I$, where $\oplus$ is usually the bit-wise exclusive-OR operation up to some binary encoding of the elements of $\Sigma$. But our lower bound technique applies to any reversible operation $\oplus$.

Query complexity measures the minimum number of queries that an algorithm has to make in order to decide whether a property $\cP:D_\cP\subseteq \Sigma^I\to\{-1,1\}$ is satisfied, for an arbitrary input $x \in D_\cP$. Since the work of \cite{beals_poly}, it has been known that in the Boolean case (i.e., when $\Sigma=\{-1,1\}$), the acceptance probability of a $T$-query bounded-error quantum algorithm is a multivariate polynomial (in all the $x_i$'s) of degree at most $2T$.

In this work, two kinds of inputs are going to play a crucial role.
    When the input is a sequence $s=(s_1,\dots,s_N)$ of positive integers $\le R$, then $I=[N]$ and $\Sigma=[R]$.
    
    In the undirected bounded-degree graph model, we have query access to the adjacency list of an undirected graph $G=(V,E)$ with maximum degree $d$, represented as an oracle $\cO_G: V \times [d] \rightarrow V \cup \{\bot\}$. Then we can set $I=V\times [d]$ and $\Sigma=V\cup\{\bot\}$,  such that for any $v \in V$ and $i \in [d]$, we have the following:
    $$\cO_G(v,i)=
\begin{cases}
    w , & \text{if $w\in V$ is the $i$-th neighbor of $v$};\\
    \bot, & \text{if deg}(v) <i.
\end{cases}
$$    

For bounded-degree directed graphs, there exist two query models.
In the \emph{bidirectional model}, we have access to both the outgoing and incoming edges of each vertex.
Correspondingly, it is imposed that both the in- and out-degrees of a vertex are bounded by $d$.
In the \emph{unidirectional model}, we can only make queries to the adjacency list of the outgoing edges, and we impose only that the out-degrees of a vertex are bounded by $d$. Since in this work the primary focus will be on the latter model, let us formally define it below.

In the unidirectional bounded-degree graph model, we have query access to the adjacency list of a digraph $G=(V,E)$ where the out-degree of every vertex is at most $d_{\mathrm{out}}$: for all $v\in V$: $\text{deg}_{\mathrm{out}}(v)\le d_{\mathrm{out}}$. This access is represented as an oracle $\cO^{\mathrm{out}}_G: V \times [d_{\mathrm{out}}] \rightarrow V \cup \{\bot\}$. Then we can set $I=V\times [d_{\mathrm{out}}]$ and $\Sigma=V\cup\{\bot\}$,  such that for any $v \in V$ and $i \in [d_{\mathrm{out}}]$, we have the following:
    $$\cO^{\mathrm{out}}_G(v,i)=
\begin{cases}
    w , & \text{if $w\in V$ is the $i$-th out-neighbor of $v$};\\
    \bot, & \text{deg}_{\mathrm{out}}(v) <i.
\end{cases}
$$

For completeness, we note that in some of the previous works on the unidirectional model, the authors do impose the degree bound on both the out- and in-degree (see, e.g., \cite{czumaj2016relating}).
This is mostly because this makes for an easier comparison between the uni- and bidirectional models, as this way they allow the same set of graphs.
In this work, we assume that only the out-degrees of the vertices are bounded by $d$.

\subsection{Property testing}
In total decision problems, the algorithm has to decide if the input satisfies a property or not. In the case of property testing, the question is relaxed: the algorithm has to distinguish inputs that satisfy the property from those that are ``far'' (according to some distance measure) from any input that satisfies it.

The choice of distance measure usually depends on the query model considered. As discussed before, the general query access can be viewed as black box access to the input $x\in \Sigma^I$ where querying an index $i\in I$ reveals $x_i\in\Sigma$. This way, the distance of two objects is described
as the proportion of positions where they differ:
$$\text{$x$ is $\eps$-far from $y$} \iff |\{i\in I: x_i\neq y_i\}| \geq \eps |I|.$$

Applying this to the case of bounded-degree graphs with degree bound $d$ and query access to the adjacency list, the distance of two graphs is the number of edges where they differ divided by $|V|d$. The distance of an object $x$ from a property $\cP$ is the minimum distance between $x$ and any object that satisfies $\cP$.

\begin{definition}[Property Testing] \label{def:proptest}
    Let $0<\eps<1$ be a constant. An algorithm $\mathcal{A}$ is an \emph{$\eps$-tester} for the property $\cP$ if
    \begin{enumerate}
        \item For all $x \in \cP$:  $\Pr[\mathcal{A}(x)=\textnormal{accept}]\ge 2/3$;
        \item For all $x$ that are $\eps$-far from $\cP$: $\Pr[\mathcal{A}(x)=\textnormal{accept}]\le1/3$.
    \end{enumerate}
\end{definition}

Notice that no restriction is given on the acceptance probability of the algorithm for inputs that do not satisfy $\cP$ but are $\eps$-close to it.

\subsection{Grover search} 
We are not going into details about quantum computing, because for this paper, it suffices to state very few results in the field and use them in a black box way.
One of the most important results in quantum computing is Grover's algorithm for unordered search \cite{Grover}: finding a marked element in an unordered database of size $N$ takes $\Theta(N)$ queries classically, but the quantum query complexity of the same task is $\Theta(\sqrt{N})$.
We are going to use a particular variant of this result that has been used many times in the literature (see e.g., \cite[Item 3 in Section 2.2]{Ambainis_search}, which was implicitly proved in \cite{tight_lb_grover}).
\begin{theorem} \label{th:Grover}
Let $1\leq t_0\leq N$.
There exists a quantum algorithm that, given $t_0$ and query access to any function
 $f:S\to\{0,1\}$, makes $O(\sqrt{N/t_0})$ queries to $f$ and outputs
either ``not found'' or an element uniformly at random in $f^{-1}(1)$.
Moreover, when $|f^{-1}(1)|\geq t_0$ the later occurs with high constant probability.
\end{theorem}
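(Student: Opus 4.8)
The plan is to realize this as the BBHT variant of Grover search \cite{BBHT_search}, augmented by a symmetry argument that pins down the output distribution. I would first treat the idealized case in which $t := |f^{-1}(1)|$ is known: running $j = \Theta(\sqrt{N/t})$ Grover iterations on the uniform superposition over $S$, then measuring, yields a marked element with constant probability; verifying the measured index with one additional query to $f$ turns this into a one-sided procedure that either returns a genuine element of $f^{-1}(1)$ or reports failure. The only two things that then need care are (a) why the returned element is \emph{uniform} over $f^{-1}(1)$, and (b) how to remove the assumption that $t$ is known, replacing it by the promise $t \ge t_0$ while keeping the query count $O(\sqrt{N/t_0})$.

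For (a), recall that one Grover iterate is $G = D\,O_f$, where $O_f$ phase-flips exactly the marked basis states and $D = 2\ket{\psi}\bra{\psi} - I$ is the reflection about the uniform superposition $\ket{\psi}$. For any permutation $\pi$ of $S$ that maps $f^{-1}(1)$ to itself, the corresponding permutation operator commutes with both $O_f$ (which depends only on the marked/unmarked partition) and $D$ (which depends only on $\ket{\psi}$, itself permutation-invariant), and it fixes $\ket{\psi}$. Hence $G^j\ket{\psi}$ is invariant under all such $\pi$, so it assigns equal amplitude to every $x \in f^{-1}(1)$; consequently, conditioned on the measurement outcome lying in $f^{-1}(1)$, that outcome is uniform over $f^{-1}(1)$ --- and this holds for \emph{every} number of iterations $j$, hence also after averaging $j$ over any distribution. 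Since the final algorithm only ever outputs a measured index that has been verified to be marked, its output, conditioned on not being ``not found'', is exactly uniform over $f^{-1}(1)$.

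For (b), I would use the standard exponential-search schedule: for $\ell = 0, 1, 2, \dots$, pick $j_\ell$ uniformly at random in $\{0, 1, \dots, \lceil c^\ell \rceil - 1\}$ for a fixed constant $c \in (1,2)$, run $j_\ell$ Grover iterations on $\ket{\psi}$, measure, and query $f$ once on the result; output that element if it is marked, otherwise continue. The loop stops at the first $\ell$ with $c^\ell$ exceeding a suitable constant times $\sqrt{N/t_0}$, at which point the algorithm outputs ``not found''. The total number of queries is $\sum_\ell O(c^\ell) = O(c^{\ell_{\max}}) = O(\sqrt{N/t_0})$ by the geometric series. For correctness when $t \ge t_0$: the optimal iteration count $\tfrac{\pi}{4}\sqrt{N/t}$ is at most $\tfrac{\pi}{4}\sqrt{N/t_0}$, so once $c^\ell$ reaches that scale, the standard BBHT lemma (a uniformly random number of Grover iterations in a range that contains the optimal value finds a marked element with probability at least, say, $1/4$) guarantees success in each such round with constant probability; running a constant number of rounds at this scale boosts the overall success probability to an arbitrarily high constant. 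Edge cases are immediate: if $t_0 = N$ then $\sqrt{N/t_0} = 1$ and $t = N$, so even $j = 0$ (measuring $\ket{\psi}$) returns a marked element, and more generally when $t$ is large, $j = 0$ already succeeds with probability $t/N$.

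The main obstacle here is bookkeeping rather than a conceptual difficulty: the result is BBHT search, and the one genuinely new ingredient over the textbook statement --- that the returned element is uniformly distributed over $f^{-1}(1)$ --- is the short permutation-symmetry observation above. The remaining subtlety is to arrange the ``not found'' branch and the verification query so that the algorithm \emph{never} outputs a non-marked element, ensuring that the dichotomy in the statement holds unconditionally while the success guarantee needs only $|f^{-1}(1)| \ge t_0$.
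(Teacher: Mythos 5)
Your proposal is correct, and in fact it supplies a proof where the paper gives none: the paper states this result as known, citing Item 3 in Section 2.2 of Ambainis's search survey and noting it is implicitly proved in the BBHT paper, and does not reprove it. Your reconstruction is exactly the intended argument --- the BBHT exponential schedule with a verification query gives the one-sided ``not found''-or-marked dichotomy and the $O(\sqrt{N/t_0})$ bound via the geometric series, and the standard lemma (a uniformly random iteration count in a range containing $\tfrac{\pi}{4}\sqrt{N/t}$ succeeds with probability at least $1/4$) gives constant success once $c^\ell$ reaches the $\sqrt{N/t_0}$ scale, which suffices because $t\ge t_0$. The one clause of the theorem that is genuinely not in the textbook statements is the uniformity of the output over $f^{-1}(1)$, and your permutation-symmetry argument handles it correctly: since $G^j\ket{\psi}$ stays in the span of the uniform superpositions over marked and unmarked elements, the amplitudes on marked elements are all equal for every $j$, so conditioning on a verified marked outcome yields the uniform distribution regardless of how $j$ is randomized or which round succeeds. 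No gaps.
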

\begin{remark} \label{rem:Grover}
    In practice, we will use this theorem when querying $f(z)$, for $z\in S$, requires making $c$ queries to an input graph $G$ with vertex set $S$. In that case the total query complexity to $G$ is $O(c\sqrt{N/t_0})$.
\end{remark}

\subsection{Problem definitions}
We now formally define the problems we study and argue about certain relations between them.
While the problems are phrased as decision problems, ultimately we will care about the quantum query complexity for testing the corresponding properties.
The complexity is going to be parameterized by a parameter $k$. Moreover, the parameters $k, \eps$ and the degree bound $d$ are all considered to be constants throughout this paper.

Let us start with some definitions that will be useful to define our problems precisely.

\begin{definition}[Source component]
Let $H=(V,E)$ be a digraph. A set $S\subseteq V$ is called a \emph{source component} if it induces a strongly connected subgraph in $H$, and there is no edge from $V\setminus S$ to $S$ in $H$.
\end{definition}

\begin{definition}[$k$-star]
A $k$-star is a digraph on $k+1$ vertices and $k$ edges with one center vertex, and $k$ source vertices connected to the center vertex.
\end{definition}
Notice that a $k$-star has $k$ source components each consisting of a single vertex.

We will now state the decision variants of several problems.
The ``property'' corresponding to a decision problem is the set of inputs that should be accepted in the decision problem.

\paragraph*{$k$-Source-Subgraph-Freeness}
\begin{description}
\item[Parameter:] Graph $H$ of constant size with at most $k$ source components
\item[Query access:] $d$-bounded out-degree directed graph $G$ on $N$ vertices (unidirectional model)
\item[Task:] Accept iff $G$ is $H$-free, that is, no subgraph of $G$ is isomorphic to $H$ 
\end{description}

In \cite{hellweg2013property, peng2023optimal}, the authors examine the classical query complexity of testing $k$-source-subgraph-freeness.
They consider the bounded-degree unidirectional model, albeit with a bound on both the in- and out-degrees.

For proving a lower bound, we will look at a special case of the main problem: $k$-star-freeness. Notice that since a $k$-star has $k$ source components, a lower bound for this problem implies the same lower bound for the more general $k$-source-subgraph-freeness problem. 

\paragraph*{$k$-Star-Freeness}
\begin{description}
\item[Parameter:] Integer $k\geq 2$
\item[Query access:] $d$-bounded out-degree directed graph $G$ on $N$ vertices (unidirectional model)
\item[Task:] Accept iff $G$ is $k$-star-free, that is, no subgraph of $G$ is isomorphic to the $k$-star
\end{description}

For the lower bound on $k$-star-freeness testing, we are going to use as a ``helper problem'' the testing variant of the $k$-collision problem.

\paragraph*{$k$-Collision-Freeness}
\begin{description}
\item[Parameter:] Integer $k\geq 2$
\item[Query access:] Sequence of integers $s=(s_1,\dots,s_N)\in[R]^N$
\item[Task:] Accept iff $s$ is $k$-collision-free, i.e., there is no $i_1,\dots,i_k\in [N]$ with $s_{i_1}=\dots=s_{i_k}$
\end{description}

As discussed in the introduction (\Cref{introcol}), very little was known about the property testing version of this problem prior to this work.
We only know the that the complexity is $\Theta(N^{1/3})$ when $k=2$, and it is between $\Omega(N^{1/3})$ and $O\left(N^{\frac{1}{2}\left(1-\frac{1}{2^k-1}\right)}\right)$ for larger $k$.

\paragraph*{Reduction from $k$-collision-freeness to $k$-star-freeness}
Now we are going to prove that testing $k$-collision-freeness can be reduced to testing $k$-star-freeness (or more generally to testing $k$-source-subgraph-freeness). Thus, a lower bound on testing $k$-collision-freeness yields a lower bound on testing $k$-source-subgraph-freeness. Also, an algorithm for testing $k$-source-subgraph-freeness yields an upper bound on testing $k$-collision-freeness.

While the proof goes similarly to \cite[Theorem 3]{hellweg2013property}, our reduction is not identical
because we have a slightly different ``helper problem''. Since they consider that the in-degree of vertices to be bounded as well, for the collision problem, they assume that the sequence does not contain any collision of size larger than $k$ (defined as \emph{$k$-occurrence-freeness}).

\begin{proposition} \label{pro:coll_star_reduction}
    The problem of $\eps$-testing $k$-collision-freeness of a sequence from $[R]^N$ can be reduced to $\frac{\eps N}{d(N+R)}$-testing $k$-star-freeness of an $(N+R)$-vertex sparse directed graph with out-degree bound $d\ge 1$.
\end{proposition}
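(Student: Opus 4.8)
The plan is to build, from an input sequence $s=(s_1,\dots,s_N)\in[R]^N$, a directed graph $G_s$ on $N+R$ vertices that is $k$-star-free if and only if $s$ is $k$-collision-free, and such that distances are preserved up to the stated scaling factor. First I would set up the vertex set as $V = A \cup B$ where $A=\{a_1,\dots,a_N\}$ are ``index vertices'' (one per position of $s$) and $B=\{b_1,\dots,b_R\}$ are ``value vertices'' (one per possible integer in $[R]$). The edge set encodes the sequence: for each $i\in[N]$ put a single outgoing edge $(a_i, b_{s_i})$. Then every $a_i$ has out-degree exactly $1\le d$ and every $b_j$ has out-degree $0$, so $G_s$ respects the out-degree bound $d$. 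A query $(a_i,1)$ to the oracle $\cO^{\mathrm{out}}_{G_s}$ returns $b_{s_i}$, which can be simulated by a single query $s_i$ to the input sequence; queries $(a_i,\ell)$ for $\ell\ge 2$ return $\bot$, and queries to any $b_j$ return $\bot$; none of these require a query to $s$. Hence one query to $G_s$ costs at most one query to $s$, so a tester for $k$-star-freeness on $G_s$ yields a tester for $k$-collision-freeness on $s$ with the same query complexity.

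Next I would verify the combinatorial equivalence. A $k$-star in $G_s$ is a center vertex with $k$ in-edges coming from $k$ distinct source vertices (each source being a single vertex with no incoming edge). In $G_s$ the only vertices with incoming edges are the $b_j$'s, and all source vertices must lie in $A$ since the $a_i$'s have no incoming edges while $b_j$'s only have incoming edges; moreover the $a_i$'s each have out-degree $1$, which is exactly the structure of a source vertex of a $k$-star. So a $k$-star appears in $G_s$ precisely when some value vertex $b_j$ has in-degree $\ge k$, i.e. when $|\{i: s_i=j\}|\ge k$, i.e. when $s$ has a $k$-collision. Here I should double-check that $G_s$ contains no \emph{other} copies of the $k$-star (e.g. using some $a_i$ as the center): since $a_i$ has no incoming edges it cannot be the center, and since $b_j$ has no outgoing edges the only edges incident to a potential star are the in-edges of $b_j$, so the characterization is exact. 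Thus $G_s$ is $k$-star-free iff $s$ is $k$-collision-free, settling completeness and soundness of the reduction at distance $0$.

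Finally I would handle the distance scaling, which is the only step requiring care. Suppose $s$ is $\eps$-far from $k$-collision-freeness, meaning one must change at least $\eps N$ entries of $s$ to kill all $k$-collisions. I claim $G_s$ is $\frac{\eps N}{d(N+R)}$-far from $k$-star-freeness. Indeed, any graph $G'$ obtained from $G_s$ by modifying fewer than $\eps N$ edges can be converted into the graph of some sequence $s'$ that differs from $s$ in fewer than $\eps N$ positions — here I need to argue that edge modifications only help ``as much as'' entry modifications: deleting or redirecting the single out-edge of $a_i$ corresponds to changing $s_i$ (or deleting the collision contribution of position $i$, which is no worse than changing it), while \emph{adding} an edge can only create new collisions, never destroy one, so WLOG a distance-optimal modification of $G_s$ only deletes/redirects edges of the form $(a_i,\cdot)$. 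Such an $s'$ still has a $k$-collision by $\eps$-farness of $s$, hence $G'$ still contains a $k$-star; therefore at least $\eps N$ edge modifications are needed, and since the graph has $(N+R)$ vertices with degree bound $d$, the normalized distance is at least $\frac{\eps N}{d(N+R)}$. Combining the oracle simulation with this distance bound gives the proposition. I expect the main obstacle to be the distance argument — specifically the bookkeeping showing that added edges are never useful and that edge-to-entry conversion does not lose more than a constant factor — while the oracle simulation and the $0$-distance equivalence are routine.
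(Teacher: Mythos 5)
Your construction (index vertices pointing to value vertices, out-degree $1$, $k$-star at $b_j$ iff value $j$ occurs $\ge k$ times, and the $\eps N \mapsto \eps N/(d(N+R))$ distance scaling) is exactly the paper's proof, just with the distance-preservation step and the oracle simulation spelled out in more detail than the paper bothers to. Correct and essentially the same approach.
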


\begin{proof}
    Let us assume that we have an algorithm that solves the $k$-star-freeness testing problem on graphs with out-degree bound $d\ge 1$, and we want to use it to test $k$-collision-freeness of a sequence $s=(s_1,\dots,s_N)\in[R]^N$. We construct a digraph $G$ that has $N$ outer vertices $u_1,\dots,u_N$ and $R$ inner vertices $v_1,\dots,v_R$; edges only exist from the outer vertices towards the inner ones such that $u_i$ is connected to $v_j$ iff $s_i=j$. Observe that the maximum out-degree in $G$ is 1, so its out-degree is bounded by $d$ for any $d\ge 1$.

    It is clear that $s$ is $k$-collision-free iff $G$ is $k$-star-free. On the other hand, if $s$ is $\eps$-far from $k$-collision-freeness, it implies that more than $\eps N$ edges have to be deleted in $G$ to make it $k$-star-free. Thus $G$ is $\eps'=\frac{\eps N}{d(N+R)}$-far from $k$-star-freeness.
\end{proof}

\section{Quantum algorithm for testing subgraph-freeness}\label{sec:quantum_algo}

In this section, we prove that there is a quantum speedup for testing $H$-freeness in directed graphs with $d$-bounded out-degree, for any graph $H$ that has $k$ source components.
For large but constant $k$, the speedup is near-quadratic. This problem was studied in \cite{goldreich2002property} in the classical setting.
Our algorithm can be seen as a generalization of the one in \cite{zhandry_collision} to graphs. 
Let us start with the definition of source-disjointness which will be used in the analysis of our algorithm.

 \begin{definition}[Source-disjointness]
    Let $G$ be a directed graph such that it contains two subgraphs $H_1$ and $H_2$. We say that $H_1$ and $H_2$ are \emph{source-disjoint} if the union of the source components of $H_1$ is disjoint from the union of the source components of $H_2$.
\end{definition}

Moreover, we need to prove the following simple proposition.
It shows that if $G$ is far from being $H$-free, then it contains 
many source-disjoint copies of $H$, that is, copies of $H$ that are source-disjoint subgraphs of $G$.

\begin{proposition} \label{claim:eps-far-H}
	Let $H$ be an $h$-vertex graph with $k$ source components.
    Assume that a $d$-bounded out-degree directed graph $G$ on $N$ vertices is $\eps$-far from $H$-free.
	Then $G$ contains at least $\eps N/h=\Omega(N)$ source-disjoint copies of $H$.
\end{proposition}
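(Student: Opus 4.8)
The plan is to use a greedy extraction argument, identical in spirit to the collision-counting observation sketched in the technical overview. Suppose, for contradiction, that $G$ contains strictly fewer than $\eps N/h$ source-disjoint copies of $H$. First I would take a \emph{maximal} collection $\cC = \{H_1,\dots,H_m\}$ of pairwise source-disjoint copies of $H$ in $G$, where by assumption $m < \eps N/h$. Let $S = \bigcup_{i=1}^m V(H_i)$ be the set of all vertices appearing in any of these copies; then $|S| \le m h < \eps N$.

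Next I would argue that deleting all outgoing edges of the vertices in $S$ — that is, removing at most $d|S| < d\eps N$ edges (in fact we only need to remove the edges actually used, so at most $(\text{edges of }H)\cdot m < \eps N \cdot \frac{|E(H)|}{h}$, but the cruder bound $|S|\cdot d$ is fine if we normalize by $dN$; I'll pick whichever matches the farness normalization $\eps d N$ from \Cref{def:proptest}) — produces an $H$-free graph $G'$. The key step is the claim that $G'$ is indeed $H$-free. Suppose not: then $G'$ contains some copy $H_0$ of $H$. Since in $G'$ we have removed all edges leaving $S$, and $H_0$ is strongly connected within each source component with no incoming edges to those components from outside, every source component of $H_0$ must be entirely contained in $V(G)\setminus S$ — otherwise a vertex of a source component lies in $S$, but then, that source component being strongly connected and of size contributing edges, at least one outgoing edge inside that component (or leaving it toward the center) would have been deleted; more carefully, a source component of $H$ has size $\ge 1$, and if it is a single vertex $v\in S$, the edge from $v$ toward the rest of $H_0$ was deleted, contradiction; if it has $\ge 2$ vertices and meets $S$, a cyclic edge inside it was deleted, contradiction. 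Hence all of $H_0$'s source-component vertices avoid $S$, so $H_0$ is source-disjoint from every $H_i\in\cC$, contradicting maximality of $\cC$. Therefore $G'$ is $H$-free.

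Finally, since $G'$ is $H$-free and obtained from $G$ by deleting fewer than $\eps d N$ edges, $G$ is $\eps$-close to $H$-free, contradicting the hypothesis. This gives $m \ge \eps N/h = \Omega(N)$, as claimed. I expect the main obstacle — really the only subtle point — to be the argument that deleting the outgoing edges of $S$ destroys \emph{every} copy of $H$ whose source components are not disjoint from the $H_i$'s; this requires using the defining property of a source component (strong connectivity plus no incoming edges), so that a source component meeting $S$ necessarily loses one of its own edges or its edge toward the center. One has to handle the two cases (singleton source component vs.\ larger strongly connected source component) and make sure that in both cases some edge of the embedded copy $H_0$ was among the deleted edges. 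Everything else is bookkeeping on the edge count to match the normalization $\eps d N$.
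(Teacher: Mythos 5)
Your proposal is correct and follows essentially the same argument as the paper: take a maximal source-disjoint family, delete the outgoing edges of its vertices (the paper restricts to source-component vertices, you use all vertices of the copies, but the edge count $\le mhd \le \eps dN$ is the same), and observe that any surviving copy of $H$ would have all its source components untouched and hence would extend the maximal family. The only point you elaborate more carefully than the paper is the case analysis showing a source component meeting the deleted set loses one of its own edges, which is the same observation the paper compresses into ``in a source component every vertex has at least one outgoing edge.''
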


\begin{proof}
We prove the result by contraposition.   
Consider  a maximal set $M$ of source-disjoint copies of $H$ in $G$, and assume that $|M|\leq \eps N/h$.
Let $U$ denote the union of all the vertices in the source components of the copies in $M$. This implies that if one deletes all the outgoing edges of all the vertices in $U$, then $G$ becomes $H$-free. Indeed, if there remained an $H$-copy then all its source components are disjoint from $M'$ (as in a source component every vertex has at least one outgoing edge), contradicting the fact that $M$ was maximal.

Since $|U|\leq |M|\cdot h $, the number of those deleted edges is at most $|U|\cdot d \le |M|\cdot hd\le \eps N d$.
Therefore, the resulting graph is both $H$-free and $\eps$-close to the original graph $G$. 
This proves the contraposition of the proposition.
\end{proof}

We will use Breadth-first search (BFS) 
in order to explore $G$ layer by layer: starting from a given vertex, first it explores its direct (out-)neighbors, then their unexplored (out-)neighbors etc.
We will run BFS up to some limited depth $\ell$, so that
the depth-$\ell$ BFS algorithm has query complexity at most $d^\ell$ where $d$ is the maximum (out-)degree of $G$. In our application, $d$ and $\ell$ are constants, and thus the query complexity is $d^\ell=O(1)$.

\subsection{The algorithm for $k=2$}
To illustrate our algorithm, we first consider the $k=2$ case to build some intuition.
Here, our algorithm generalizes the BHT algorithm for collision finding \cite{BHT} in the context of graphs.
The high level idea is that if we manage to sample a vertex from each of the two source components of an $H$-subgraph (a collision), then by querying their ``surroundings'' we will discover the $H$-instance.
In the following, we set $h=|V(H)|$, the number of vertices in $H$.
\begin{enumerate}
	\item Sample a uniformly random vertex subset $\cS$ of size $t=\Theta(N^{1/3})$ in $G$. Perform a depth-$h$ BFS from every vertex in $\cS$. 
	\item Perform Grover search over the remaining vertices $V \backslash \cS$ in the following way. 
	A vertex $v$ is marked if there exists another vertex $u\in \cS$ such that $u$ and $v$ are from the 2 different source components of an $H$ subgraph of $G$. 
	\item If any occurrence of $H$ in $G$ is found,
	output \textbf{Reject}. Otherwise, output \textbf{Accept}.
\end{enumerate}

Note that if $G$ is $H$-free, then the above algorithm will always accept. Now we need to argue that if $G$ is $\eps$-far from being $H$-free, then with constant probability, the above algorithm will find a copy of $H$ and thus will output reject.

By \Cref{claim:eps-far-H}, with high probability, a constant fraction of the $t$ vertices in $\cS$ are part of a source component in source-disjoint $H$-subgraphs of $G$.
For such vertices, the BFS in step 1~will discover the entire source component, as well as all other vertices reachable from that source component in $H$.
Then, in step 2~we search for a vertex that is in the remaining source component of such an instance of $H$ that we already partly discovered. This can be verified by doing a depth-$h$ BFS from it and checking if this completes an $H$-instance with one of the previously sampled vertices' neighborhoods.
As we mentioned, by \Cref{claim:eps-far-H} with high probability there are $\Omega(t)$ many marked vertices.
This proves the correctness of the algorithm.

Finally, we bound the algorithm's query complexity.
Step 1~makes $O(t) = O(N^{1/3})$ many (classical) queries.
In step 2 we use \Cref{th:Grover} and \Cref{rem:Grover}: checking whether a vertex is marked requires running a depth-$h$ BFS from it, which costs $c=O(1)$ queries. We argued that there are $\Omega(t)$ many marked vertices, so Grover search makes $O(\sqrt{N/t}) = O(N^{1/3})$ quantum queries.

\subsection{The algorithm for general $k$}

We are now ready to state our general upper bound result.
The algorithm and proof follow the same lines as the $k=2$ case.

\begin{theorem}[Restatement of \Cref{theo:kstarfree_ub_intro}]\label{theo:kstarfree_ub_main}
	Let $H$ be a digraph of constant size with $k$ source components.
	The quantum query complexity of testing $H$-freeness of an $N$-vertex graph with bounded out-degree in the unidirectional model is $O\Big(N^{\frac{1}{2}\left(1 - \frac{1}{2^k -1}\right)}\Big)$.
\end{theorem}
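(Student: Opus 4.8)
The plan is to generalize the $k=2$ algorithm above, replacing its single Grover search by a cascade of $k-1$ Grover searches that assemble a copy of $H$ \emph{one source component at a time}, mirroring how the algorithm of \cite{zhandry_collision} builds a $k$-collision out of $2$-collisions, then $3$-collisions, and so on. Fix an enumeration $S_1,\dots,S_k$ of the (constant-size) source components of $H$ and set $h=|V(H)|$. The algorithm maintains lists $L_1,\dots,L_k$, where a member of $L_j$ is a \emph{partial copy}: a subgraph of $G$ obtained by choosing, for each $i\le j$, a copy of $S_i$ and running a depth-$h$ BFS from its vertices, such that everything discovered is isomorphic (as a digraph, respecting all edges among the discovered vertices) to the subgraph of $H$ induced by $S_1\cup\dots\cup S_j$ and all vertices of $H$ reachable from it. We build $L_1$ by sampling a uniformly random set $\cS$ of $t_1$ vertices and running a depth-$h$ BFS from each (keeping those that yield a valid partial copy covering $S_1$). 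For $j=2,\dots,k$ we build $L_j$ from $L_{j-1}$: via \Cref{th:Grover} and \Cref{rem:Grover} (the test is a constant-depth BFS, costing $O(1)$ queries) we repeatedly Grover-search $V$ for a \emph{marked} vertex $v$ --- one whose depth-$h$ BFS reveals a copy of $S_j$ that extends some $P\in L_{j-1}$ to a partial copy covering $S_1,\dots,S_j$ which is source-disjoint from the members already in $L_j$ --- add it to $L_j$, and re-run the search with the updated marking function. We stop round $j$ once $|L_j|=t_j$ (with $t_k=1$); if a full copy of $H$ emerges at level $k$ we output \textbf{Reject}, else \textbf{Accept}. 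If $G$ is $H$-free no partial copy ever reaches level $k$, so we always accept; and since our algorithm actually \emph{finds} a copy of $H$, its complexity upper-bounds the (easier) property-testing task.

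Correctness rests on the invariant, proved by induction on $j$: if $G$ is $\eps$-far from $H$-free then, with high probability, $L_j$ contains $t_j$ pairwise source-disjoint partial copies, each extendable to a copy of $H$ in $G$. The base case follows from \Cref{claim:eps-far-H}, which supplies a family $\mathcal{C}$ of $\Omega(N)$ source-disjoint copies of $H$: their copies of $S_1$ occupy $\Omega(N)$ distinct vertices, so a Chernoff bound gives $\Omega(t_1)$ hits in $\cS$. For the inductive step, each of the $\Omega(t_{j-1})$ good partial copies in $L_{j-1}$ contributes (through the corresponding $C\in\mathcal{C}$, via a vertex of its copy of $S_j$) a marked vertex, and these are distinct because $\mathcal{C}$ is source-disjoint; moreover placing one new member in $L_j$ consumes only $O(1)$ good partial copies of $L_{j-1}$ (its own, plus the $O(1)$ others whose source vertices it meets). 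Hence the marked set keeps size $\Omega(t_{j-1})$ throughout round $j$, so each Grover call costs $O(\sqrt{N/t_{j-1}})$, and --- since re-running the search with the updated marking function drops from the marked set every vertex all of whose witnesses have been used --- each call makes genuine progress with constant probability, so $O(t_j)$ calls fill $L_j$ (using $t_j=o(t_{j-1})$).

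For the query complexity, $L_1$ costs $O(t_1)$ classical queries, and round $j$ makes $O(t_j)$ Grover searches of cost $O(\sqrt{N/t_{j-1}})$, i.e.\ $O(t_j\sqrt{N/t_{j-1}})$ queries. Writing $t_j=N^{a_j}$ and choosing the exponents so that $t_1$ and every $t_j\sqrt{N/t_{j-1}}$ equal a common value $T=N^e$ (with $t_k=1$) yields $a_1=e$, $a_j=\tfrac12 a_{j-1}+e-\tfrac12$ for $2\le j\le k-1$, and $\tfrac12(1-a_{k-1})=e$. The recursion solves to $a_j=(2e-1)+(1-e)2^{-(j-1)}$, and substituting into the last equation gives $(2^k-1)e=2^{k-1}-1$, i.e.\ $e=\tfrac12\bigl(1-\tfrac1{2^k-1}\bigr)$, exactly the claimed bound; one checks $a_1>a_2>\dots>a_{k-1}>a_k=0$, so the list sizes decrease down the cascade, as the correctness argument needs.

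The main obstacle is this structural invariant rather than the (routine) exponent bookkeeping: the lists $L_j$ are built from the uniformly random but otherwise uncontrolled outputs of Grover search, so one must show that at every level a pairwise source-disjoint, extendable sub-collection of size $\Omega(t_j)$ survives, and that the marking/re-sampling scheme keeps each Grover call simultaneously cheap (marked set of size $\Omega(t_{j-1})$) and productive (constant success probability), even when the marked set is dominated by ``redundant'' vertices arising from a high-multiplicity substructure such as a vertex of very large in-degree.
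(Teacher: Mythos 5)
Your proposal is correct and follows essentially the same approach as the paper: a cascade of Grover searches that extend partial copies one source component at a time, with correctness grounded in \Cref{claim:eps-far-H} and the identical choice of list sizes $t_j$ yielding the exponent $\frac{1}{2}\bigl(1-\frac{1}{2^k-1}\bigr)$. The only cosmetic difference is that you enforce source-disjointness explicitly in the marking function, whereas the paper relies on Grover returning uniformly random marked elements together with $t_j = o(t_{j-1})$ to guarantee that a constant fraction of the found partial solutions remain disjointly extendable.
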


\begin{proof}
In order to extend the $k=2$ case described above to larger $k$, we first try to find many partial $H$-instances with $k-1$ source components found, and then extend one of them to a complete $H$-instance.  We present a brief description of our algorithm below, where $h$ is the number of vertices of $H$:
	\begin{enumerate}
		\item Sample a uniformly random vertex subset $\cS_1$ in $G$ of size $t_1$.
		Perform a depth-$h$ BFS from every vertex in $\cS_1$.
		Let $\cS_1'=\cS_1$.
		\item 
		For iterations $i =2$ to $k-1$, do the following:
		\begin{enumerate}
			\item  Perform a Grover search $t_i$ times on the vertices $V \setminus \cS_{i-1}'$ in the following way. 
			A vertex $v$ is marked if there exist $i-1$ other vertices $u_j\in \cS_{j}$ for each $j\in[i-1]$ such that $u_1,\dots,u_{i-1}$ and $v$ are from $i$ different source components of an $H$ subgraph of $G$. If we do not find $t_i$ vertices like this, output \textbf{Reject}, otherwise let $\cS_i$ denote the set of the vertices $v$ that we found.
			\item Set $\cS_{i}'= \cS_{i-1}' \cup \cS_i $.
		\end{enumerate} 
		\item Perform Grover search on $V \backslash \cS_{k-1}'$ to find a complete $H$-instance. I.e., a vertex $v$ is marked if there exist $k-1$ other vertices $u_j\in \cS_{j}$ for each $j\in[k-1]$ such that $u_1,\dots,u_{k-1}$ and $v$ are from the $k$ different source components of an $H$ subgraph of $G$.
		\item If any occurrence of $H$ in $G$ is found,
		output \textbf{Reject} and terminate the algorithm. Otherwise, output \textbf{Accept}.
	\end{enumerate}
	The correctness proof is similar to the $k=2$ case. \Cref{claim:eps-far-H} tells us that in $\cS_1$ there are $\Omega(t_1)$ many vertices that are from a source component of an $H$-copy. Because of the source-disjointness of the $H$-copies, when $i=2$, there are $\Omega(t_1)$ many 1-partial solutions that can be extended to a complete $H$ instance by disjoint remaining source components. As Grover search provides uniformly random marked elements, a constant fraction of the $t_2$ many 2-partial solutions are actually extendable to $H$ in a similar, disjoint way. This continues to be true in each iteration: (with high probability) a constant fraction of the $t_{i-1}$ many $(i-1)$-partial solutions are extendable to complete $H$ instances by disjoint remaining source components. This way, the last step is going to find an $H$-subgraph with high probability.
	
	To bound the query complexity, first note that in every application of Grover search, checking whether a vertex is marked (depth-$h$ BFS) takes $O(1)$ queries. The first iteration's Grover searches find $t_2$ partial $H$-instances with $2$ of its source components found, which takes $O(t_2\sqrt{N/t_1})$ queries (by \Cref{th:Grover}). Similarly, for $i$-th iteration there are $\Omega(t_{i-1})$ marked elements (see the previous argument), so the algorithm performs $O(t_i \sqrt{N/t_{i-1}})$ quantum queries for every $i \in [k-1]$. Finally, finding one complete $H$-instance costs $O(\sqrt{N/t_{k-1}})$ queries. Thus the total query complexity is $O(t_1+\sum_{i=1}^{k-1}t_{i+1}\sqrt{N/t_i})$ with $t_k=1$.
	Similar to the multi-collision algorithm in \cite[Section~3]{zhandry_collision}, we can equate all terms by setting $t_i = \Theta\Big(N^{\frac{2^{k-i}-1}{2^k-1}}\Big)$, which yields the final quantum query complexity $O\Big(N^{\frac{1}{2}\left(1-\frac{1}{2^k-1}\right)}\Big)$.
	
	We note
	that there is no need for a $\textnormal{polylog}(N)$ factor in the query complexity, which could come from a commonly used way to boost up the success probability of Grover's algorithm. This stems from two observations. 
	First, consider the case where $K$ among $N$ elements are marked, with a given lower bound $L \leq K$, and we wish to find $R \leq L$ such elements. If $R \ll L$, then (say) $100 R$ repetitions of Grover should return at least $R$ marked elements with probability at least $2/3$ while making $O(R  \sqrt{N/L})$ queries, without extra log-factors. This is due to the fact that one can simply ignore any unsuccessful Grover runs. In our case we set $R=t_{i+1} \ll L=t_i$. Finally, since there are $k$ iterations in the algorithm and $k$ is constant, a factor of $\log k$ would not add up to the query complexity of our algorithm in terms of $N$.
\end{proof}

\section{Collision-freeness lower bound}\label{sec:quantum_lb}

As discussed in \Cref{sec:prelim}, we are going to prove a lower bound on the problem of testing $k$-collision-freeness.

\begin{theorem}[Restatement of \Cref{lowerbound-col_intro}]\label{lowerbound-col}
Let $ k\geq 3$ and $0<\eps<1/(4^{k-1}\lceil 20 (2k)^{k/2}\rceil)$ be constants and $N$ 
be a large enough positive integer.
Then the quantum query complexity of the $\eps$-testing of $k$-collision-freeness 
of a sequence of integers $S=(s_1,\dots,s_N)\in[N]^N$ with parameter $\eps$ 
is $\Omega(N^{1/2 - 1/(2k)}/\ln^2 N)$.
\end{theorem}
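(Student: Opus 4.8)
The plan is to reduce the problem to a carefully chosen partial Boolean function, encode it into binary variables so that it becomes a composition of two simple symmetric functions, and then apply the (dual) polynomial method to that composition, zeroing out the dual witness on large Hamming-weight inputs. Concretely, I would first introduce the partial function $F$ on $[R]^N$ that is $-1$ when no value occurs $k$ times and $1$ when at least $\gamma R$ distinct values occur $\ge k$ times; a lower bound for $F$ suffices since its instances are a subset of the $k$-collision-freeness testing instances (here one must check that the ``$1$''-instances are indeed $\eps$-far from $k$-collision-freeness, which follows from the disjoint-collisions counting argument sketched in the introduction, once $\gamma$ is chosen as a suitable constant depending on $\eps$ and $k$). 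Then, following \cite{Aaronson02,ambainis2016efficient,BT20}, I would replace $s_i=j$ by binary indicators $x_{i,j}$, use the permutation symmetries in $i$ and $j$ to extend the function first to all Hamming-weight-exactly-$N$ inputs and then to all inputs of Hamming weight \emph{at most} $N$, obtaining a Boolean function $f$ that equals $\gapor_R^\gamma\circ\thr_N^k$ on $H_{\le N}$, and argue (via the symmetrization of \cite{ambainis2016efficient}) that a degree lower bound for $f$ on this extended domain transfers back to a query lower bound for $F$ and hence for $k$-collision-freeness, up to the stated $\mathrm{polylog}$ losses.

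Next I would assemble the dual polynomial. Take the dual polynomial $\phi$ for $\gapor_R^\gamma$ and the dual polynomial $\psi$ for $\thr_N^k$ that are provided in \cite{polynomial_strikes_back}, form the dual block composition $\phi\star\psi$, which automatically satisfies normalization (ii) and has pure high degree (iii) equal to (roughly) the product of the two pure high degrees — this is where the exponent $\tfrac12\bigl(1-\tfrac1k\bigr)$ comes from, since the pure high degree of $\thr_N^k$ scales like $\sqrt{N^{1-1/k}}$ up to logs and $\gapor_R^\gamma$ contributes a constant-factor (in the exponent) amount when $R=N$. Then invoke the result of \cite{polynomial_strikes_back} that post-processes $\phi\star\psi$ into a nearby dual polynomial $\zeta$ that additionally vanishes on inputs of Hamming weight $>N$ (property (iv)), at the cost of only a $\mathrm{polylog}$ factor in the pure high degree and a small loss in correlation. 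The only remaining obligation is the high-correlation condition (i$'$) for $\phi\star\psi$, i.e.
\[
\sum_{x\in D}(\phi\star\psi)(x)\cdot(\gapor_R^\gamma\circ\thr_N^k)(x)\;-\;\sum_{x\notin D}\bigl|(\phi\star\psi)(x)\bigr|\;\ge\;\tfrac{9}{10},
\]
which then survives the passage to $\zeta$.

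The main obstacle is precisely this correlation bound. The standard route in \cite{polynomial_strikes_back} for bounding the ``error mass'' of a dual block composition uses a one-sided-error property of the inner function (their inner function is $\mathrm{OR}$, in the sense of their Lemma 6.11), but our inner function is $\thr_N^k$, which does not enjoy this property, so the off-domain mass $\sum_{x\notin D}|(\phi\star\psi)(x)|$ cannot be controlled the same way. My plan is to prove a more general composition lemma (the ``\Cref{pro:like_5.5_5.6}'' referenced above) that isolates exactly which features of the inner dual polynomial are needed — essentially control of $\sum|\psi|$ restricted to inputs of each Hamming weight, plus the fact that $\thr_N^k$ is monotone and symmetric — and then bound the contribution of inputs violating the ``distinct-collision'' promise by a union bound over the possible weight profiles of the $N$ blocks, using that $\phi$ concentrates its mass on near-all-$1$ inputs and $\psi$ on inputs of weight near $0$ or near $k$. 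This two-sided analysis is the technical heart; everything else is bookkeeping (choosing $\gamma,R,\eps$ consistently, tracking $\mathrm{polylog}$ factors, and verifying the domain-extension steps). Finally, combining with \Cref{pro:coll_star_reduction} immediately yields the stated $\tilde\Omega\bigl(N^{1/2-1/(2k)}\bigr)$ lower bound for $k$-source-subgraph-freeness as well.
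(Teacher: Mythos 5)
Your proposal is correct and follows essentially the same route as the paper: the partial function you introduce is the paper's $\gapcol$, the binary encoding and domain extension lead to the double-promise degree of $\gapor_R^\gamma\circ\thr_N^k$ on $H_{\le N}^{NR}$, the dual witness is the block composition $\phi\star\psi$ post-processed into a $\zeta$ vanishing above Hamming weight $N$, and you correctly identify the high-correlation bound as the technical heart and the failure of the one-sided-error argument as the obstacle. The general composition lemma you propose is exactly the paper's \Cref{pro:like_5.5_5.6}, and the quantities your sketch relies on (the mass of $\psi$ on false positives being $O(1/N)$ and on false negatives being bounded away from $1/2$, combined with a union bound and a Chernoff bound over the $R$ blocks) are precisely the ingredients of the paper's proof of \Cref{claim:second_point}.
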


The proof of the theorem is at the end of \Cref{proof:th_main_lb}.
Observe that \Cref{theo:kstarfree_lb_intro} is implied by \Cref{lowerbound-col} and the reduction in \Cref{pro:coll_star_reduction}.
Our proof mostly follows the structure of \cite[Section 6.1]{polynomial_strikes_back}, 
and in particular it uses the notion of dual polynomial for non-Boolean partial symmetric functions.
Our main technical contribution in this section is the proof of \Cref{claim:second_point}, because the corresponding proof in \cite{polynomial_strikes_back} crucially relies on a fact that does not hold for our problem. We will discuss it in detail below.

In the following, we first state some general results related to the polynomial method for non-Boolean functions, then we use these results for our problem to state the exact statement that we prove in the technical part.

\subsection{The (dual) polynomial method} \label{sec:poly_boolean}
\paragraph*{For Boolean functions}
We consider a property on Boolean vectors as a function $f:D\subseteq\{-1,1\}^n\to\{-1,1\}$. Since the work of \cite{beals_poly}, it has been known that the acceptance probability $\mathsf{acc}(x)$ of a $T$-query bounded-error quantum algorithm on input $x\in D$ is a polynomial of degree at most $2T$. Thus, the polynomial $p(x)=1-2 \cdot \mathsf{acc}(x)$ must be a good approximation of $f$. 
Observe that $p(x)$ remains bounded outside $D$ since $\mathsf{acc}(x)$ remains a probability defined by the algorithm, with no constraint.

In order to formalize this, we first define the notion of approximate degree of a Boolean function, and then relate it to its query complexity.
\begin{definition}[Approximate bounded degree]
    Let  $f:D\subseteq\{-1,1\}^n \to\{-1,1\}$ and $\delta>0$. 
A polynomial $p:\{-1,1\}^n\to\mathbb{R}$     \emph{$\delta$-approximates $f$ on $D$} if
 $$\forall x\in D:\ |f(x)-p(x)|<\delta \quad\text{ and }\quad\forall x\in\{-1,1\}^n\setminus D:\ |p(x)| < 1+\delta.$$
Moreover, the \emph{$\delta$-approximate bounded degree $\bdeg_\delta(f)$ of $f$ on $D$} is the smallest degree of such a polynomial.
\end{definition}

The following lemma connects the quantum query complexity and approximate bounded degree.

\begin{lemma}[\cite{beals_poly,AaronsonAIKS16}]
    Let $f:D\subseteq\{-1,1\}^n\to\{-1,1\}$ and $\delta>0$. 
If a quantum algorithm computes $f$ on $D$ with error $\delta$ using $T$ queries, then there is a
polynomial $p$ of degree at most $2T$ that $2\delta$-approximates $f$ on $D$.    
\end{lemma}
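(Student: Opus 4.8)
The plan is to reproduce the classical argument of Beals, Buhrman, Cleve, Mosca, and de Wolf underlying \cite{beals_poly}: first show that after $T$ queries the amplitudes of the algorithm's state are polynomials of degree at most $T$ in the input variables, then deduce that the acceptance probability is a polynomial of degree at most $2T$, and finally post-process this into the claimed approximating polynomial while checking the required bounds both on $D$ and outside $D$.

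First I would fix the canonical form of a $T$-query quantum algorithm: it applies input-independent unitaries $U_0,\dots,U_T$ interleaved with the query unitary $O_x$, where $O_x\ket{i}\ket{z}=\ket{i}\ket{z\oplus x_i}$, producing the final state $\ket{\psi_x^T}=U_T\,O_x\,U_{T-1}\cdots O_x\,U_0\ket{\psi_0}$. I would then prove by induction on $t\in\{0,\dots,T\}$ that every amplitude $\braket{z|\psi_x^t}$ is a polynomial in $x=(x_1,\dots,x_n)\in\{-1,1\}^n$ of degree at most $t$, with complex coefficients. The base case is immediate, since $\ket{\psi_0}$ does not depend on $x$, so all amplitudes are constants. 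For the inductive step, applying a fixed unitary only takes complex linear combinations of amplitudes and hence does not raise the degree, while the query $O_x$ sends the amplitude attached to $\ket{i}\ket{z}$ to a combination of the amplitudes of $\ket{i}\ket{z}$ and $\ket{i}\ket{z\oplus\mathbf{1}_i}$, weighted by whether $x_i$ flips the relevant bit of $z$; writing those indicators as the affine functions $\tfrac{1+x_i}{2}$ and $\tfrac{1-x_i}{2}$ of the $\{-1,1\}$-variable $x_i$ raises the degree by at most one.

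Next I would pass from amplitudes to the acceptance probability. Writing $\mathsf{acc}(x)=\sum_{z\in\mathrm{Acc}}\lvert\braket{z|\psi_x^T}\rvert^2$ over the set $\mathrm{Acc}$ of accepting basis states, each term equals $\braket{z|\psi_x^T}\,\overline{\braket{z|\psi_x^T}}$, a product of two polynomials of degree at most $T$; splitting into real and imaginary parts shows $\mathsf{acc}$ is a real polynomial of degree at most $2T$ (one may reduce modulo $x_i^2=1$ to make it multilinear without increasing the degree). I would then set $p(x)=1-2\,\mathsf{acc}(x)$, a polynomial of degree at most $2T$, and verify the two conditions of the definition of approximate bounded degree. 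For $x\in D$: if $f(x)=-1$ the algorithm accepts with probability $>1-\delta$, so $p(x)\in[-1,-1+2\delta)$ and $\lvert f(x)-p(x)\rvert<2\delta$; if $f(x)=1$ it accepts with probability $<\delta$, so $p(x)\in(1-2\delta,1]$ and again $\lvert f(x)-p(x)\rvert<2\delta$. For $x\notin D$, the value $\mathsf{acc}(x)$ is still a genuine probability in $[0,1]$ — the algorithm is defined on all inputs, with no promise — so $p(x)\in[-1,1]$ and $\lvert p(x)\rvert<1+2\delta$. Hence $p$ has degree at most $2T$ and $2\delta$-approximates $f$ on $D$, which is exactly the claim (in particular $\bdeg_{2\delta}(f)\le 2T$).

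I do not expect any genuine obstacle here, since this is a routine adaptation of \cite{beals_poly}. The only two points requiring care are: (i) checking that the query unitary, when written in terms of $\{-1,1\}$-valued rather than $\{0,1\}$-valued input variables, still raises the amplitude-degree by exactly one; and (ii) observing that the bound for $x\notin D$ is automatic because $\mathsf{acc}$ is a probability irrespective of whether the input satisfies the promise defining $D$. No assumption on the structure of $D$ or of $f$ is used.
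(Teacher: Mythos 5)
Your proposal is correct and follows exactly the route the paper itself sketches (and attributes to \cite{beals_poly,AaronsonAIKS16}): amplitudes after $T$ queries are degree-$\le T$ polynomials, so $\mathsf{acc}(x)$ has degree at most $2T$, and $p(x)=1-2\,\mathsf{acc}(x)$ gives the $2\delta$-approximation, with boundedness outside $D$ following automatically because $\mathsf{acc}(x)$ remains a probability on unpromised inputs. No gaps.
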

In particular, this implies that the quantum query complexity for computing $f$ with error $\delta$ is $\bdeg_{2\delta}(f)/2$, and so we will focus on proving lower bounds on the approximate bounded degree.

We now turn to a dual characterization of this polynomial approximation.
This method of dual polynomials dates back to~\cite{Sherstov11,ShiZ09} for initially studying communication complexity. Below we refer to some results stated in~\cite{polynomial_strikes_back} for studying query complexity.
\begin{definition}[Pure high degree]
    A function $\psi:\{-1,1\}^n\to\mathbb{R}$ has pure high degree at least $\Delta$ if for every polynomial $p:\{-1,1\}^n\to\mathbb{R}$ with $\deg(p)<\Delta$ it satisfies $\sum_{x\in\{-1,1\}^n}p(x)\psi(x)=0$.
    We denote this as $\phd(\psi)\ge \Delta$.
\end{definition}
One can observe that $\phd(\psi)\ge \Delta$ is equivalent to the fact that all the monomials of $\psi$ are of degree at least $\Delta$. Then by weak LP duality we get the following result.
\begin{theorem}{\cite[Proposition 2.3]{polynomial_strikes_back}}
    Let $f:D\subseteq \{-1,1\}^n\to\{-1,1\}$ and $\delta>0$. Then $\bdeg_\delta(f)\ge \Delta$ iff there exists a function $\psi:\{-1,1\}^n\to\mathbb{R}$ such that
    \begin{align}
        \sum_{x\in D}\psi(x)f(x)-\sum_{x\in \{-1,1\}^n\setminus D}|\psi(x)|>\delta ;\\
        \|\psi\|_1=\sum_{x\in\{-1,1\}^n}|\psi(x)|=1;\\
        \phd(\psi)\ge \Delta.
    \end{align}
\end{theorem}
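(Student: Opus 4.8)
The plan is to derive this as an instance of finite-dimensional linear-programming duality. Fix $f:D\subseteq\{-1,1\}^n\to\{-1,1\}$, $\delta>0$ and an integer $\Delta$. By definition, $\bdeg_\delta(f)<\Delta$ holds precisely when the following system of (strict) linear inequalities in the unknown polynomial $p$ is feasible: $\deg(p)<\Delta$, $|f(x)-p(x)|<\delta$ for all $x\in D$, and $|p(x)|<1+\delta$ for all $x\in\{-1,1\}^n\setminus D$. Since $\{-1,1\}^n$ is finite, $p$ ranges over the finite-dimensional space $V_{<\Delta}=\mathrm{span}\{\prod_{i\in S}x_i:|S|<\Delta\}$, and the point-evaluation map $V_{<\Delta}\to\mathbb{R}^{\{-1,1\}^n}$ is injective, so this really is a finite linear feasibility problem in the coefficients of $p$. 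Hence $\bdeg_\delta(f)\ge\Delta$ is equivalent to infeasibility of this system, and I would prove the theorem by producing a Farkas-type certificate of infeasibility and recognizing it as exactly the claimed function $\psi$.

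Concretely, I would introduce nonnegative dual variables $\alpha_x^{+},\alpha_x^{-}$ for the constraints $p(x)-f(x)\le\delta$ and $f(x)-p(x)\le\delta$ at each $x\in D$, and $\beta_x^{+},\beta_x^{-}$ for $p(x)\le 1+\delta$ and $-p(x)\le 1+\delta$ at each $x\notin D$, and set $\psi(x)=\alpha_x^{+}-\alpha_x^{-}$ for $x\in D$ and $\psi(x)=\beta_x^{+}-\beta_x^{-}$ for $x\notin D$. The theorem of the alternative states that the primal system is infeasible iff one can choose these variables, not all zero, so that (a) the combination of all constraints that mention $p$ cancels on the whole subspace $V_{<\Delta}$, i.e. $\sum_{x}\psi(x)p(x)=0$ for every $p$ with $\deg(p)<\Delta$ — which is exactly $\phd(\psi)\ge\Delta$ — and (b) the matching combination of right-hand sides is nonpositive. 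At an extreme dual solution one may assume $\alpha_x^{+}\alpha_x^{-}=\beta_x^{+}\beta_x^{-}=0$, so that $\alpha_x^{+}+\alpha_x^{-}=|\psi(x)|$ for $x\in D$ and $\beta_x^{+}+\beta_x^{-}=|\psi(x)|$ for $x\notin D$; expanding (b) then gives $\sum_{x\in D}\psi(x)f(x)+\sum_{x\notin D}|\psi(x)|+\delta\|\psi\|_1\le 0$. Replacing $\psi$ by $-\psi/\|\psi\|_1$ (which leaves both $\phd(\psi)$ and the normalization conditions intact) turns this into $\|\psi\|_1=1$ together with $\sum_{x\in D}\psi(x)f(x)-\sum_{x\notin D}|\psi(x)|\ge\delta$, yielding all three conditions. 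The converse direction is the easy weak-duality check: if such a $\psi$ existed and some $p$ with $\deg(p)<\Delta$ were a $\delta$-approximation of $f$ on $D$, then pairing $p$ against $\psi$ and using $\phd(\psi)\ge\Delta$ and $\|\psi\|_1=1$ would contradict the correlation bound.

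The one genuinely delicate point is reconciling the strict inequalities in the definition of $\bdeg_\delta$ with the non-strict inequalities supplied by the basic duality statement, so that the certificate delivers the strict bound $>\delta$ in condition~(1) rather than $\ge\delta$. I would handle this exactly as in \cite{polynomial_strikes_back}: either invoke the strict form of Motzkin's transposition theorem, or run the argument with $\delta$ replaced by a slightly smaller value and take a limit, using that on the finite cube all the relevant optima are attained. Everything else — the sign and normalization bookkeeping above, and the reduction to extreme dual solutions — is routine, so I do not expect a conceptual obstacle here; the whole content of the statement is that this finite linear program has no duality gap.
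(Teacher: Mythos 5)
The paper itself offers no proof of this statement---it is quoted verbatim as \cite[Proposition 2.3]{polynomial_strikes_back}---so there is nothing internal to compare against; your LP-duality/Farkas route is exactly the standard argument behind that proposition, and your bookkeeping is correct: the cancellation condition $\sum_x\psi(x)p(x)=0$ on $V_{<\Delta}$ is precisely $\phd(\psi)\ge\Delta$, the complementary-slackness reduction giving $\alpha_x^{+}+\alpha_x^{-}=|\psi(x)|$ is legitimate (subtracting the common minimum preserves the cancellation condition and strictly decreases the right-hand-side combination, and cannot zero out the certificate), and the expansion of $b^{\top}y\le 0$ into $\sum_{x\in D}\psi(x)f(x)+\sum_{x\notin D}|\psi(x)|+\delta\|\psi\|_1\le 0$ followed by the rescaling $\psi\mapsto-\psi/\|\psi\|_1$ is right. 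The weak-duality converse is also fine.

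The one point you flag as delicate is genuinely the only issue, and neither of your two proposed fixes fully delivers the strict inequality $>\delta$: Motzkin's transposition theorem certifies infeasibility of a strict system via $b^{\top}y\le 0$, which after normalization yields correlation $\ge\delta$, and the limiting argument over $\delta'<\delta$ likewise only gives $\ge\delta$ in the limit. The honest diagnosis is that this is a mismatch between the paper's definition of $\delta$-approximation (strict inequalities $|f(x)-p(x)|<\delta$, $|p(x)|<1+\delta$) and the cited proposition, which in \cite{polynomial_strikes_back} is paired with the non-strict convention $|f(x)-p(x)|\le\delta$: under the non-strict convention, $\bdeg_\delta(f)\ge\Delta$ means the optimal min-max error $\epsilon^*$ over degree-$(<\Delta)$ polynomials satisfies $\epsilon^*>\delta$, and strong LP duality then gives a $\psi$ with correlation $\epsilon^*>\delta$; under the paper's strict convention one only gets $\epsilon^*\ge\delta$, and the edge case $\epsilon^*=\delta$ (e.g.\ parity with $\delta=1$, $\Delta=n$) shows that $>\delta$ is not always attainable. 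Your proof, which produces the $\ge\delta$ certificate, is the correct one for the definitions as written in this paper (and $\ge\delta$ still suffices for the easy direction precisely because the primal inequalities are strict), while the $>\delta$ form is correct for the definitions in \cite{polynomial_strikes_back}. This is a defect of the statement's transcription rather than of your argument, but you should say explicitly which convention you are proving rather than leaving the reconciliation to an unspecified ``strict form'' of the transposition theorem.
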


Now we are going to discuss how to extend these results to non-boolean functions, which is the interesting case for us.

\paragraph*{For non-Boolean partial symmetric functions} \label{sec:dual_poly_non_boolean}
We now consider a property of a sequence of integers as a function $F:D\subseteq [R]_0^N\to\{-1,1\}$. The symbol $0$ will play a special role that will be exhibited later on. Unfortunately one cannot just take the polynomial of those integers. 
The standard approach (see~\cite{Aaronson02}) is to encode $s=(s_1,\dots,s_N)\in[R]_0^N$ into binary variables $x=(x_{i,j})_{i\in[N],j\in[R]_0}\in\{-1,1\}^{N(R+1)}$ encoding whether $s_i=j$ as follows:
$x_{i,j}=-1$ if $s_i=j$, and $x_{i,j}=1$ otherwise. 
Let $H_b^{N(R+1)}\subseteq\{-1,1\}^{N(R+1)}$ be the set of all possible encodings of vectors $s$, that is, for every $i\in[N]$ there is exactly one $j\in[R]_0$ such that $x_{i,j}=-1$.

This way we can represent $F$ as a function $F_b:D_b\to\{-1,1\}$ where $D_b\subseteq H_b^{N(R+1)}$
is the set of valid encodings of $D$. More precisely, each $x\in D_{b}$ satisfies two constraints: (1) $x\in H_b^{N(R+1)}$; and (2) $x$ encodes some $s\in D$.
Since only inputs $x\in H_b^{N(R+1)}$ correspond to possible input sequences of an algorithm, the polynomials derived from a quantum query algorithm might not be bounded outside of that set.
This implies a slight modification on the definition of approximate degree, in order to relate it to query complexity as in~\cite{Aaronson02}.

But before doing this, we are going to relax the constraints on the domain $D_b$ in the case of symmetric functions, while we decrease its dimension.
When $F$ is \emph{symmetric} (i.e., $F(s)=F(s\circ\pi_N)$ for any permutation $\pi_N$ of $[N]$), one can instead define a function $F_{\leq N}$ with weaker constraints by removing the variables corresponding to the symbol $0$.
Define $H_{\leq N}^{NR}$ as the set of length-$(NR)$ binary vectors with Hamming weight at most $N$.
Given any $x\in H_{\leq N}^{NR}$, we define its frequency vector $z(x)=(z_0,z_1,\dots,z_R)$
with $z_j=\#\{i:x_{ij}=-1\}$, for $1\leq j\leq R$, and $z_0=N-z_1-\ldots-z_R$.
From the vector $z(x)$, one can define a valid sequence of integers $s(x)\in[R]_0^N$: it can be any sequence from $[R]_0^N$ that has frequency vector $z(x)$.
Now we can define  $F_{\leq N}$ on domain $D_{\leq N}$ as
$$D_{{\leq N}}=\{x\in H_{\leq N}^{NR} : s(x)\in D\}
\quad\text{ and }\quad F_{\leq N}(x)=F(s(x)).$$
In fact, for the special case of total symmetric functions $F$,
we can transform $F_b$ on $H_b^{N(R+1)}$ to $F_{\leq N}$ on $H_{\leq N}^{NR}$ due to the symmetry of $F$.

In \cite{ambainis_poly} it was proved implicitly that for symmetric $F$, both $F_b$ and $F_{\leq N}$ variants are equally hard to approximate by polynomials. We now define the appropriate notion of approximate degree for $F_{\leq N}$ and relate it to the query complexity of $F$ as in \cite[Theorem 6.5]{polynomial_strikes_back}.
\begin{definition}[Double-promise approximate degree]
    Let $F: D\subseteq [R]_0^N \to\{-1,1\}$ be symmetric and $\delta>0$. 
    Define $H_{\leq N}^{NR}\subseteq\{-1,1\}^{NR}$ and  $F_{\leq N}:D_{\leq N}\subseteq H_{\leq N}^{NR} \to\{-1,1\}$ as above.
    A polynomial  $p:\{-1,1\}^{NR}\to\mathbb{R}$  \emph{double-promise $\delta$-approximates} $F$ on $D$ if $$\forall x\in D_{\leq N}:\ |F_{\leq N}(x)-p(x)|<\delta \quad\text{ and }\quad\forall x\in H_{\leq N}^{NR}\setminus D_{{\leq N}}:\ |p(x)|<1+\delta.$$ 
Moreover, the \emph{double-promise $\delta$-approximate degree $\dpdeg_\delta(F_{\leq N})$ of $F_{\leq N}$ on $D_{\leq N}$} is the smallest degree of such a polynomial.
\end{definition}

The following lemma connects the quantum query complexity and double-promise approximate degree.

\begin{lemma}[\cite{Aaronson02,ambainis_poly},{\cite[Theorem 3.9]{Bun_Thaler_20}}] \label{lem:QC_deg}
   Let $F: D\subseteq [R]_0^N \to\{-1,1\}$ be symmetric and $\delta>0$. 
    Define $H_{\leq N}^{NR}\subseteq\{-1,1\}^{NR}$ and  $F_{\leq N}:D_{\leq N}\subseteq H_{\leq N}^{NR} \to\{-1,1\}$ as above.
If a quantum algorithm computes $F$ on $D$ with error $\delta$ using $T$ queries, then 
there is a polynomial $p$ of degree at most $2T$ that double-promise $2\delta$-approximates $F_{\leq N}$ on $D_{\leq N}$.    
\end{lemma}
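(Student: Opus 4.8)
\textbf{Proof plan for \Cref{lem:QC_deg}.}
The plan is to compose two steps that are already essentially present in the literature: first pass from the quantum algorithm to a bounded polynomial on the natural binary encoding $H_b^{N(R+1)}$ (this is the classical argument of \cite{beals_poly,Aaronson02}), and then use symmetry to descend to the lower-dimensional encoding $H_{\leq N}^{NR}$, losing nothing in degree (this is \cite{ambainis_poly}, made explicit in \cite[Theorem 3.9]{BT20}). Concretely, suppose $\cA$ is a $T$-query quantum algorithm computing $F$ on $D$ with error $\delta$. The acceptance probability $\mathsf{acc}(x)$ of $\cA$, viewed as a function of the binary indicator variables $x=(x_{i,j})_{i\in[N],j\in[R]_0}$, is a real multilinear polynomial of degree at most $2T$ on all of $\{-1,1\}^{N(R+1)}$, and it always lies in $[0,1]$ because it is a genuine probability for any fixed oracle, not just for the valid encodings; setting $p_b(x)=1-2\,\mathsf{acc}(x)$ gives a degree-$\le 2T$ polynomial with $|p_b(x)|\le 1$ everywhere on $\{-1,1\}^{N(R+1)}$ and $|p_b(x)-F_b(x)|<2\delta$ for every $x\in D_b$.

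Next I would symmetrize. Since $F$ is symmetric under permutations of the $N$ coordinates of $s$, averaging $p_b$ over all permutations $\pi$ of $[N]$ acting simultaneously on the first index of every $x_{i,j}$ yields a polynomial $\bar p_b$ of the same degree bound, still $2\delta$-approximating $F_b$ on $D_b$ and still bounded by $1$ on $\{-1,1\}^{N(R+1)}$, that depends on its input only through the column sums, i.e. through the frequency vector $z=(z_0,z_1,\dots,z_R)$ with $z_j=\#\{i:x_{i,j}=-1\}$. Now I would eliminate the $j=0$ block of variables. On $H_b^{N(R+1)}$ one has $x_{i,0}=-1$ precisely when $x_{i,j}=1$ for all $j\in[R]$, so on this set $z_0=N-z_1-\dots-z_R$ is determined by the remaining variables; more carefully, one uses the standard substitution (as in \cite{ambainis_poly,BT20}) that replaces each variable $x_{i,0}$ in $\bar p_b$ by an appropriate low-degree expression in $\{x_{i,j}\}_{j\in[R]}$ valid on the Hamming-weight-$\le N$ slice, producing a polynomial $p$ on $\{-1,1\}^{NR}$ of degree at most $2T$. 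By construction, for $x\in H_{\leq N}^{NR}$ the value $p(x)$ equals $\bar p_b$ evaluated at the canonical extension of $x$ to $H_b^{N(R+1)}$ with the $j=0$ block filled in to reach Hamming weight exactly $N$; hence $|p(x)|<1+2\delta$ for all $x\in H_{\leq N}^{NR}$ and $|p(x)-F_{\leq N}(x)|<2\delta$ for all $x\in D_{\leq N}$, because $F_{\leq N}(x)=F(s(x))=F_b(\text{that extension})$. This is exactly the double-promise $2\delta$-approximation claimed.

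The step I expect to be the main obstacle — or at least the one requiring care rather than routine bookkeeping — is the descent from $H_b^{N(R+1)}$ to $H_{\leq N}^{NR}$ while keeping the polynomial bounded on the whole relaxed domain $H_{\leq N}^{NR}$ and not merely on the image of genuine encodings. The point is that a quantum algorithm only ever ``sees'' inputs in $H_b^{N(R+1)}$, so a priori $\bar p_b$ could be wild on binary strings whose $j=0$ block is inconsistent with the rest; the symmetrization plus the explicit variable substitution is what tames this, and verifying that the substituted polynomial genuinely stays within $[-(1+2\delta),1+2\delta]$ on all of $H_{\leq N}^{NR}$ is the technical crux. Since all three of these ingredients are precisely \cite[Theorem 3.9]{BT20} together with \cite{Aaronson02,ambainis_poly}, I would cite those for the details and present the argument above as the conceptual outline; there is no genuinely new difficulty here, which is why the lemma is stated with a reference rather than a self-contained proof.
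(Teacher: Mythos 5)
The paper gives no proof of \Cref{lem:QC_deg} at all --- it is stated purely by citation to \cite{Aaronson02,ambainis_poly,BT20} --- and your outline is exactly the standard argument from those references, so this is essentially the same approach and the proposal is correct in substance. One imprecision worth noting: the descent from $H_b^{N(R+1)}$ to $H_{\leq N}^{NR}$ is not achieved by locally replacing each $x_{i,0}$ with a low-degree expression in the other variables (no such substitution preserves the degree bound); the actual mechanism is Minsky--Papert-style symmetrization of $\bar p_b$ to a polynomial $P(z_0,z_1,\dots,z_R)$ in the frequency vector of degree at most $2T$, followed by the substitutions $z_j=\sum_i(1-x_{i,j})/2$ for $j\ge 1$ and $z_0=N-\sum_{j\ge 1}z_j$, which produces a polynomial on $\{-1,1\}^{NR}$ that depends only on column sums and therefore inherits both the boundedness on all of $H_{\leq N}^{NR}$ and the approximation on $D_{\leq N}$ from the corresponding properties of $\bar p_b$ on valid encodings. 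Since you correctly identify this descent as the technical crux and defer to the same citations the paper relies on, there is no genuine gap.
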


As for the Boolean case, this implies that a quantum algorithm computing $F$ with error $\delta$ must make at least $\dpdeg_{2\delta}(F_{\leq N})/2$ queries.
We can now also take the dual of this characterization.
\begin{theorem}[{\cite[Proposition 6.6]{polynomial_strikes_back}}] \label{th:dpdeg0}
    Let  $F:D\subseteq [R]_0^N\to\{-1,1\}$ be . 
    Define $F_{\leq N}:D_{\leq N}\to\{-1,1\}$ as above.
    Then $\dpdeg_{\delta}(F_{\le N})\ge \Delta$ iff there exists a function $\psi:\{-1,1\}^{NR}\to\mathbb{R}$ such that
\begin{align}
    \forall x\in\{-1,1\}^{NR}\setminus H_{\le N}^{NR},\quad \psi(x)=0;\\
    \sum_{x\in D_{{\le N}}}\psi(x)F^{\le N}(x)-\sum_{x\in H_{\le N}^{NR}\setminus D_{{\le N}}}|\psi(x)|>\delta;\\
     \|\psi\|_1=1  \quad \text{ and } \quad
      \phd(\psi)\ge \Delta.
\end{align}
\end{theorem}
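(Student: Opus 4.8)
The plan is to obtain \Cref{th:dpdeg0} from strong linear programming duality, in the same way one derives the classical dual characterisation of approximate degree~\cite{Sherstov11,ShiZ09}, but carrying along the extra promise $H_{\le N}^{NR}$. Fix $\Delta$. A polynomial of degree $<\Delta$ on $\{-1,1\}^{NR}$ is determined by a real coefficient vector $c=(c_S)_{|S|<\Delta}$, indexed by the subsets $S$ of the $NR$ coordinates with $|S|<\Delta$, via $p_c(x)=\sum_S c_S\prod_{i\in S}x_i$. Define the optimal double-promise error at degree $\Delta$ as the value $E(\Delta)$ of the linear program: minimise $t$ over $c$ and $t\in\mathbb{R}$ subject to $|F_{\le N}(x)-p_c(x)|\le t$ for all $x\in D_{\le N}$, and $|p_c(x)|-1\le t$ for all $x\in H_{\le N}^{NR}\setminus D_{\le N}$. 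Splitting each absolute value into its two one-sided inequalities, this is a finite linear program; it is feasible (take $c=0$ and $t$ large) and bounded below (the constraints from $D_{\le N}$ force $t\ge0$), hence the minimum is attained. Unwinding the definition of $\dpdeg_\delta$ gives $\dpdeg_\delta(F_{\le N})\ge\Delta$ if and only if $E(\Delta)\ge\delta$.

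Next I would take the LP dual. Attaching a nonnegative multiplier to each one-sided constraint, and combining for each point $x$ the two multipliers of its constraints into a single signed weight $\psi(x)\in\mathbb{R}$, the dual feasible set becomes the set of $\psi:\{-1,1\}^{NR}\to\mathbb{R}$ satisfying three things. First, $\psi(x)=0$ for every $x\notin H_{\le N}^{NR}$, because no primal constraint involves such a point -- this is the vanishing-outside-$H_{\le N}^{NR}$ requirement. Second, the primal objective (minimise $t$) contributes the equality normalising the total dual mass to $1$, i.e.\ $\|\psi\|_1=1$. Third, dualising the fact that the primal coefficient vector $c$ ranges \emph{freely} over all degree-$<\Delta$ polynomials forces $\psi$ to be orthogonal to every such polynomial: $\sum_{x\in\{-1,1\}^{NR}}p(x)\psi(x)=0$ whenever $\deg(p)<\Delta$, i.e.\ $\phd(\psi)\ge\Delta$. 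A direct computation identifies the dual objective with $\sum_{x\in D_{\le N}}\psi(x)F_{\le N}(x)-\sum_{x\in H_{\le N}^{NR}\setminus D_{\le N}}|\psi(x)|$, which is the correlation quantity appearing in the theorem.

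Since the primal is feasible and bounded, strong LP duality gives that the dual optimum equals $E(\Delta)$, and the dual feasible region $\{\psi:\mathrm{supp}(\psi)\subseteq H_{\le N}^{NR},\ \|\psi\|_1=1,\ \phd(\psi)\ge\Delta\}$ is compact, so the dual optimum is attained, say by $\psi^\star$. Chaining this with the first paragraph: $\dpdeg_\delta(F_{\le N})\ge\Delta$ $\iff$ $E(\Delta)\ge\delta$ $\iff$ some feasible $\psi$ attains correlation at least $\delta$. Passing to the strict inequality ``$>\delta$'' in the statement is then routine -- e.g.\ run the whole argument with $\delta$ replaced by $\delta-\xi$ for an arbitrarily small $\xi>0$, or equivalently formulate the primal error with a non-strict approximation constraint.

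I do not expect a genuine obstacle here: the statement is ``morally'' the standard approximate-degree LP duality and is quoted as \cite[Proposition~6.6]{polynomial_strikes_back}. The only real care is bookkeeping: (a) checking that the correct primal variable space is degree-$<\Delta$ polynomials on the \emph{whole} cube, whose values are merely constrained on $H_{\le N}^{NR}$, so that the dual comes out as the full-cube pure-high-degree condition $\phd(\psi)\ge\Delta$ \emph{together with} $\mathrm{supp}(\psi)\subseteq H_{\le N}^{NR}$, rather than some weaker ``$H_{\le N}^{NR}$-local'' orthogonality; and (b) the strict/non-strict inequality handling mentioned above. Everything else is a direct transcription of Farkas' lemma and strong LP duality applied to a finite program.
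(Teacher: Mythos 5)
The paper does not prove this statement at all: it is imported verbatim as \cite[Proposition~6.6]{polynomial_strikes_back}, so there is no in-paper argument to compare against. Your LP-duality derivation is the standard proof of such dual characterisations and is essentially correct: the primal is the finite LP computing the best double-promise error $E(\Delta)$ over degree-$<\Delta$ polynomials on the whole cube with constraints only at points of $H_{\le N}^{NR}$, and its dual is exactly the stated set of witnesses (support inside $H_{\le N}^{NR}$ because unconstrained points carry no multipliers, $\phd(\psi)\ge\Delta$ from the free primal coefficients, $\|\psi\|_1=1$ from the coefficient of $t$, and the stated correlation as objective once each point's two one-sided multipliers are merged into a signed weight, which is legitimate since having both positive is never optimal). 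One genuine caveat beyond bookkeeping: because the definition of $\delta$-approximation uses strict inequalities while the witness condition demands correlation strictly greater than $\delta$, the ``only if'' direction fails in the measure-zero boundary case $E(\Delta)=\delta$ exactly; your $\delta-\xi$ workaround does not fully dissolve this, but the direction actually used in the paper (witness $\Rightarrow$ degree lower bound) is pure weak duality and is unaffected, so this is a cosmetic issue with the statement as quoted rather than with your argument.
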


\subsection{Preparation} 

Technically, the problem we use in the proof of \Cref{lowerbound-col}
 is slightly more restricted than $k$-collision-freeness: we want to distinguish no $k$-collision from many distinct collisions of size at least~$k$.

\begin{definition}[Collision function] \label{def:gapcol}
    Let $\gamma \in (0,1)$.
    The  function $\gapcol^{k,\gamma}_{N,R}: D_{\gapcol^{k,\gamma}_{N,R}} \subset [R]^N\to\{-1,1\}$ is defined by $\gapcol^{k,\gamma}_{N,R}(s)=-1$ if no integer occurs at least $k$ times in $s$, $\gapcol^{k,\gamma}_{N,R}(s)=1$ if there are more than $\gamma R$ distinct integers that occur at least $k$ times in $s$, and it is undefined otherwise.
\end{definition}

Notice that this problem is not a property testing problem, as the outcome is not determined based on the distance between inputs. Nevertheless, it is a valid promise problem and a special case of testing $k$-collision-freeness, which we use to prove lower bounds on the other problems of interest.

To prove a bound on the $\gapcol$ function, we will actually relate it to the composition of two more elementary functions $g,h$, where by composition we mean $(g\circ h)(x)=g(h(x_1),\dots,h(x_n))$ (where $x=(x_1,\dots,x_n)$ and each $x_i$ is a binary vector of appropriate dimension).
Let us define (i) the threshold function $\thr_N^k:\{-1,1\}^N\to\{-1,1\}$ which is $-1$ if the input bitstring contains at least $k$ many $-1$s, and it is 1 otherwise; and (2) the gap version of $\mathrm{OR}$, that is $\gapor_R^\gamma: D_{\gapor_R^\gamma} \subset \{-1,1\}^R\to\{-1,1\}$ which takes value 1 if the input is $1^R$, $-1$ if the input contains at least $\gamma R$ many $-1$s, and is undefined otherwise.
We show that the double-promise approximate degree of $\gapor_R^\gamma\circ\thr_N^k$ lower bounds the quantum query complexity of the collision problem.

\begin{lemma} \label{lem:qc_dpdeg_kcol}
Let $k\geq 3$, $0<\gamma<1$, $\delta>0$ and $c>2$ be constants such that $N/c \leq R\leq N/2$. 
If the double-promise $\delta$-approximate degree of $\gapor_R^\gamma\circ\thr_N^k$ on domain further restricted to $H_{\le N}^{NR}$ is at least $\Delta$,
    then every quantum algorithm computing $\gapcol^{k,\gamma/c}_{N,N}$ with error $\delta/2$ must require at least $\Delta/2$ queries.
\end{lemma}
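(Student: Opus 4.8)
The goal is to reduce computing $\gapcol^{k,\gamma/c}_{N,N}$ to double-promise $\delta$-approximating $\gapor_R^\gamma\circ\thr_N^k$ (restricted to $H_{\le N}^{NR}$), so that a lower bound on the latter's approximate degree yields a query lower bound on the former via \Cref{lem:QC_deg} and \Cref{th:dpdeg0}. The plan is as follows. First I would set up the binary encoding: a sequence $s\in[N]^N$ is encoded into $x=(x_{i,j})_{i\in[N],j\in[R]}$ with an extra dummy coordinate $0$, exactly as in the construction of $F_{\le N}$ above; for the $\gapcol$ function, $F=\gapcol^{k,\gamma/c}_{N,N}$ is symmetric, so by \Cref{lem:QC_deg} a $T$-query algorithm with error $\delta/2$ gives a degree-$2T$ polynomial that double-promise $\delta$-approximates $F_{\le N}$ on $D_{\le N}\subseteq H_{\le N}^{NR'}$, where $R'=N$ is the full alphabet.

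Second, I would identify $(\gapcol^{k,\gamma/c}_{N,N})_{\le N}$ with (a restriction of) $\gapor_{R}^\gamma\circ\thr_N^k$. The key observation is the structural rewrite already noted in the technical overview: the binary encoding groups the coordinates into $R$ blocks $x_j=(x_{1,j},\dots,x_{N,j})$, one per potential collision value; $\thr_N^k(x_j)=-1$ exactly when value $j$ occurs at least $k$ times, and then $\gapor$ of these $R$ block-values is $-1$ precisely when at least $\gamma R$ distinct values are $k$-collisions, and is $1$ when no value is a $k$-collision. So on inputs of Hamming weight at most $N$ coming from a valid sequence, $(g\circ h)(x)$ with $g=\gapor_R^\gamma$, $h=\thr_N^k$ agrees with $F_{\le N}(x)$. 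The subtlety is the choice of parameters: the alphabet used for $\gapcol$ is $[N]$ while the $\gapor$ composition uses $R$ blocks; here the hypothesis $N/c\le R\le N/2$ and the rescaling of the gap parameter from $\gamma$ to $\gamma/c$ is used to make the promises line up — a sequence with more than $(\gamma/c) N$ distinct $k$-collision values, viewed through any $R$-subset of coordinates (or after collapsing $[N]$ onto $[R]$), still has more than $\gamma R$ distinct $k$-collision values, and a $k$-collision-free sequence remains $k$-collision-free. I would make this precise by exhibiting an explicit map from sequences in $[N]^N$ to binary inputs in $H_{\le N}^{NR}$ that sends the ``no $k$-collision'' promise class into the $1^?$-side of $\gapor\circ\thr$ and the ``more than $(\gamma/c)N$ collisions'' class into the $\ge\gamma R$ side, and noting that a $\delta$-approximating polynomial for $\gapor_R^\gamma\circ\thr_N^k$ pulls back through this (affine, hence degree-preserving) map to a $\delta$-approximating polynomial for $(\gapcol^{k,\gamma/c}_{N,N})_{\le N}$ — or, running the implication in the contrapositive direction, a $T$-query algorithm for $\gapcol$ yields a double-promise $\delta$-approximating polynomial for $\gapor_R^\gamma\circ\thr_N^k$ of degree $2T$, so $\Delta\le 2T$, i.e.\ $T\ge\Delta/2$.

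Third, I would assemble the chain of inequalities: algorithm with error $\delta/2$ and $T$ queries $\Rightarrow$ (by \Cref{lem:QC_deg}, with $2\cdot(\delta/2)=\delta$) a degree-$\le 2T$ polynomial double-promise $\delta$-approximating $(\gapcol^{k,\gamma/c}_{N,N})_{\le N}$ $\Rightarrow$ (by the reduction of step two) a degree-$\le 2T$ polynomial double-promise $\delta$-approximating $\gapor_R^\gamma\circ\thr_N^k$ on the domain restricted to $H_{\le N}^{NR}$ $\Rightarrow$ $\dpdeg_\delta(\gapor_R^\gamma\circ\thr_N^k)\le 2T$ $\Rightarrow$ $\Delta\le 2T$ by hypothesis, hence $T\ge\Delta/2$.

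The main obstacle I expect is step two: making the parameter bookkeeping between the alphabet $[N]$, the block count $R$, the gap $\gamma$ versus $\gamma/c$, and the Hamming-weight-$\le N$ promise fully rigorous, and in particular verifying that the reduction map is degree-preserving (affine in the binary variables) and that it respects both promises simultaneously — including the $z_0$ ``slack'' coordinate that makes the Hamming weight exactly $N$ rather than at most $N$, and the fact that we are allowed to restrict the $\gapor\circ\thr$ domain further to $H_{\le N}^{NR}$. Everything else is a direct application of the quoted results \Cref{lem:QC_deg} and the definition of double-promise approximate degree.
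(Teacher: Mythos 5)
Your overall architecture (algorithm $\Rightarrow$ low-degree double-promise approximating polynomial via \Cref{lem:QC_deg} $\Rightarrow$ identification with $\gapor_R^\gamma\circ\thr_N^k$ $\Rightarrow$ $2T\ge\Delta$) matches the paper's, and your bookkeeping of the error parameter ($\delta/2$ versus $\delta$) is correct. However, the step you yourself flag as the main obstacle is where the proposal has a genuine gap, and the specific mechanisms you suggest for it would fail. The mismatch to be bridged is that $(\gapcol^{k,\gamma/c}_{N,N})_{\le N}$ lives on $N$ value-blocks (alphabet $[N]$) while $\gapor_R^\gamma\circ\thr_N^k$ has only $R\le N/2$ blocks. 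Your two proposed fixes go in the wrong direction: collapsing $[N]$ onto $[R]$ merges distinct values and can create $k$-collisions in a $k$-collision-free sequence, destroying the accept-side promise; restricting attention to an $R$-subset of value-blocks can discard all of the $k$-collisions of a far instance, destroying the reject-side promise. Neither yields a map under which a $\delta$-approximating polynomial for one function pulls back to one for the other.

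The paper resolves this by running the reduction in the opposite direction, through the dummy-augmented intermediate problem $\dgapcol^{k,\gamma}_{N,R}$ (\Cref{pro:reduction_dummy}): an input over $[R]_0$ (equivalently, a binary input in $H_{\le N}^{NR}$ with $R$ blocks, the weight deficit $z_0$ absorbed by the dummy symbol $0$) is mapped \emph{into} an instance of $\gapcol^{k,\gamma/c}_{N,N}$ by sending each occurrence of the dummy to a fresh value $R+\lceil i/2\rceil$, so each new value appears at most twice and no spurious $k$-collision arises since $k\ge 3$; the gap condition $\gamma R\ge(\gamma/c)N$ is then exactly what $N/c\le R$ buys. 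Combined with \Cref{bindagpcol}, which says $(\dgapcol^{k,\gamma}_{N,R})^{\le N}$ \emph{is} $\gapor_R^\gamma\circ\thr_N^k$ on $H_{\le N}^{NR}$ (no further identification needed), this gives the chain you want: a $T$-query algorithm for $\gapcol^{k,\gamma/c}_{N,N}$ solves $\dgapcol^{k,\gamma}_{N,R}$, and \Cref{lem:QC_deg} applied to the latter directly produces the degree-$2T$ polynomial for the composed function. To repair your proposal, replace your step two with this explicit alphabet-expansion map; as written, the step is not merely unproven but based on constructions that violate the promises.
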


Before proving this lemma, we prove some helper propositions.
In order to apply the dual polynomial method for partial  functions,
we start by proving that $\gapcol^{k,\gamma'}_{N,R'}$  
is at least as hard as a very similar problem.
We introduce a ``dummy-augmented'' version $\dgapcol_{N,R}^{k,\gamma}:D_{\dgapcol_{N,R}^{k,\gamma}} \subseteq [R]_0^N \to \{-1,1\}$ of the problem $\gapcol_{N,R}^{k,\gamma}$ for the purpose of proving \Cref{lem:qc_dpdeg_kcol}, where now the input sequence can have integer $0$, but those $0$s are just ignored when they occur.
We show that it is enough to prove a lower bound for this second version.

\begin{proposition} \label{pro:reduction_dummy}
Let $k\geq 3$, $0<\gamma<1$ and $c>2$ be constants such that $N/c \leq R\leq N/2$. Then $\dgapcol^{k,\gamma}_{N,R}$ can be reduced to $\gapcol^{k,\gamma/c}_{N,N}$.
\end{proposition}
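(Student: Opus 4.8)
The plan is to give a padding-style reduction from $\dgapcol^{k,\gamma}_{N,R}$ to $\gapcol^{k,\gamma/c}_{N,N}$, turning an instance that may contain the dummy symbol $0$ into a legitimate $[N]^N$ instance without any $0$s. First I would take an arbitrary input sequence $s\in[R]_0^N$ in the domain of $\dgapcol^{k,\gamma}_{N,R}$. The idea is to replace every occurrence of the symbol $0$ in $s$ by a fresh private value: introduce a supply of ``filler'' integers drawn from $\{R+1,\dots,N\}$ (there are $N-R\geq N/2$ of them, so at least $N$ slots worth, certainly enough since there are at most $N$ positions to fill) and assign a distinct filler value to each position $i$ with $s_i=0$. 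This yields a new sequence $s'\in[N]^N$. Crucially, each filler value is used exactly once, so it never creates a new $k$-collision (for $k\geq 3\geq 2$), and it also never destroys an existing one since the non-zero entries of $s$ are left untouched. Querying $s'_i$ costs exactly one query to $s$ (the algorithm computes the substitution on the fly), so the reduction is query-preserving.

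Next I would check that the promise is preserved so that $\gapcol^{k,\gamma/c}_{N,N}$ is actually defined on $s'$ and gives the correct answer. There are two cases. If $\dgapcol^{k,\gamma}_{N,R}(s)=-1$, then no integer in $\{1,\dots,R\}$ occurs $\geq k$ times in $s$ (the $0$s are ignored), and since the fillers each occur once, no integer occurs $\geq k$ times in $s'$, so $\gapcol^{k,\gamma/c}_{N,N}(s')=-1$. If $\dgapcol^{k,\gamma}_{N,R}(s)=1$, then more than $\gamma R$ distinct integers from $[R]$ occur $\geq k$ times in $s$; these same integers still occur $\geq k$ times in $s'$, so more than $\gamma R$ distinct integers occur $\geq k$ times in $s'$. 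Here I use the hypothesis $R\geq N/c$, which gives $\gamma R\geq (\gamma/c)N$, so the count exceeds the threshold $(\gamma/c)N$ required for $\gapcol^{k,\gamma/c}_{N,N}$ to output $1$. Hence in both cases $\gapcol^{k,\gamma/c}_{N,N}(s')=\dgapcol^{k,\gamma}_{N,R}(s)$, and in particular $s'$ lies in the domain of $\gapcol^{k,\gamma/c}_{N,N}$.

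Putting this together: any algorithm (classical or quantum) that computes $\gapcol^{k,\gamma/c}_{N,N}$ with some error and query count can be used, via the on-the-fly substitution oracle, to compute $\dgapcol^{k,\gamma}_{N,R}$ with the same error and the same number of queries. This is exactly the claimed reduction. The only mild subtlety — and the one place worth being careful about — is that the reduction must be implementable at the level of the query oracle, i.e.\ a query to position $i$ of $s'$ should be answerable with a single query to position $i$ of $s$; this holds because the substitution is position-local (it depends only on $s_i$ and on $i$ itself, the latter determining which filler value to use), so there is no need to know the rest of $s$. I would also remark that one could equally well fix any injective assignment of positions-with-$0$ to filler values ahead of time; the choice does not matter for correctness. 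No genuinely hard step arises here; the content is just in verifying that the bookkeeping of counts and thresholds respects the $R\geq N/c$ condition.
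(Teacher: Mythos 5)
There is a genuine gap in the counting of filler values. You assign a \emph{distinct} fresh value from $\{R+1,\dots,N\}$ to each position $i$ with $s_i=0$, and each filler is ``used exactly once.'' But there are only $N-R$ such values available, and since $R\geq N/c$ this is at most $N-N/c<N$, while the number of zero positions in $s$ can be as large as $N$ (nothing in the domain of $\dgapcol^{k,\gamma}_{N,R}$ bounds it). Your parenthetical ``$N-R\geq N/2$ of them, so at least $N$ slots worth'' conflates the number of distinct values with the number of positions they can cover: $N/2$ values used once each cover only $N/2$ positions. So the injective assignment you describe simply does not exist in general, and the reduction as stated breaks exactly on inputs with many dummies.

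The paper's fix is to reuse each filler for two positions: it sets $T_i(0)=R+\lceil i/2\rceil$, so the filler values live in $\{R+1,\dots,R+\lceil N/2\rceil\}\subseteq[N]$ (using $R\leq N/2$) and each is hit by at most two positions. This can only create $2$-collisions among the fillers, which is harmless precisely because $k\geq 3$ --- this is where that hypothesis is actually needed. Your aside ``for $k\geq 3\geq 2$'' suggests you thought $k\geq 2$ would do, which is another symptom of not having noticed that fillers must be shared. Everything else in your write-up --- the preservation of the two promise cases, the threshold arithmetic $\gamma R\geq(\gamma/c)N$ from $R\geq N/c$, and the observation that the substitution is position-local and hence query-preserving --- is correct and matches the paper's argument.
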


\begin{proof}
    An input to $\dgapcol^{k,\gamma}_{N,R}$ is a sequence $s=(s_1,\dots,s_N)$ where each $s_i\in[R]_0$.
    Let us define a family of functions $T_i$ that map from $[R]_0$ to $[R']$ for $R' = R + \lceil N/2 \rceil$: $T_i(s)=s$ if $s>0$ and 
    $T_i(0)=R+\lceil i/2 \rceil$.

    Notice that $(s_1,\dots,s_N)$ is free from $k$-collisions (ignoring collisions of the dummy character 0) if and only if $(T_1(s_1),\dots,T_N(s_N))$ is free from $k$-collisions, i.e., new $k$-collisions cannot be created by this transformation (only 2-collisions but we assume $k\ge 3$).

    On the other hand, if $(s_1,\dots,s_N)$ contains more than $\gamma R$ distinct $k$-collisions, then so does $(T_1(s_1),\dots,T_N(s_N))$.
    Since $\gamma R \geq (\gamma/c) N$, $\gapcol^{k,\gamma/c}_{N,N}$ will reject.
\end{proof}

The following proposition relates $\dgapcol$ to $\gapor \circ \thr$.

\begin{proposition}\label{bindagpcol}
The domain of $\gapor_R^\gamma\circ\thr_N^k$ is
$$D_{\gapor_R^\gamma\circ\thr_N^k}=\{x\in \{-1,1\}^{NR}: (\thr_N^k(x_1),\dots,\thr_N^k(x_R))\in H_{\ge\gamma R}^R\cup\{1^R\}\}.$$
where $x=(x_1,\dots,x_R)$ with each $x_i\in\{-1,1\}^N$.

The domain of $(\dgapcol^{k,\gamma}_{N,R})^{\leq N}$ is 
$$D_{(\dgapcol^{k,\gamma}_{N,R})^{\leq N}}=H_{\le N}^{NR}\cap D_{\gapor_R^\gamma\circ\thr_N^k}.$$ 

Moreover, restricted to the latter domain they are the same function:
$$(\dgapcol^{k,\gamma}_{N,R})^{\leq N} = \gapor_R^\gamma\circ\thr_N^k.$$
\end{proposition}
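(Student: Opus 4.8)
The plan is to establish all three assertions by directly unwinding the definitions involved --- the composition of a total inner function with a partial outer one, the binary encoding $s\mapsto x$ of integer sequences, and the domain-relaxation $F\mapsto F^{\le N}$ recalled above --- with no additional idea; the work is purely in aligning the layers of notation. For the first line: since $\thr_N^k$ is total and $\gapor_R^\gamma$ is partial with domain $D_{\gapor_R^\gamma}=H_{\ge\gamma R}^R\cup\{1^R\}$, by the very meaning of composition the function $(\gapor_R^\gamma\circ\thr_N^k)(x)$ is defined exactly when the vector of inner outputs $\big(\thr_N^k(x_1),\dots,\thr_N^k(x_R)\big)$ (with $x=(x_1,\dots,x_R)$, each $x_j\in\{-1,1\}^N$) lies in $D_{\gapor_R^\gamma}$, which is precisely the claimed description of $D_{\gapor_R^\gamma\circ\thr_N^k}$.

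The heart of the remaining two claims is a dictionary between the binary-encoding picture and the integer-sequence picture. For $x\in H_{\le N}^{NR}$ the relaxed construction attaches the frequency vector $z(x)=(z_0,z_1,\dots,z_R)$ with $z_j=\#\{i:x_{i,j}=-1\}$ for $j\in[R]$ and $z_0=N-\sum_{j\ge1}z_j$, together with any $s(x)\in[R]_0^N$ having that frequency vector; here the constraint $x\in H_{\le N}^{NR}$ is exactly $z_0\ge0$, i.e.\ the condition that makes $s(x)$ a legal length-$N$ sequence (the dummy symbol $0$ has no binary variable of its own, which is precisely why one passes to $\dgapcol$). Writing $x_j=(x_{1,j},\dots,x_{N,j})$, the Hamming weight of the $j$-th block satisfies $|x_j|=z_j$, which is the number of occurrences of $j$ in $s(x)$. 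Hence $\thr_N^k(x_j)=-1\iff|x_j|\ge k\iff$ the integer $j$ occurs at least $k$ times in $s(x)$, so the vector $\big(\thr_N^k(x_1),\dots,\thr_N^k(x_R)\big)$ is exactly the $\{-1,1\}$-indicator of the set of integers of $[R]$ forming a $(\ge k)$-collision in $s(x)$, and its Hamming weight equals the number of distinct such collisions.

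Now the second and third lines follow. By the relaxed construction $D_{(\dgapcol^{k,\gamma}_{N,R})^{\le N}}=\{x\in H_{\le N}^{NR}:s(x)\in D_{\dgapcol^{k,\gamma}_{N,R}}\}$, and $s(x)\in D_{\dgapcol^{k,\gamma}_{N,R}}$ means that either no integer of $[R]$ occurs at least $k$ times in $s(x)$, or more than $\gamma R$ distinct ones do. By the dictionary this is equivalent to $\big(\thr_N^k(x_1),\dots,\thr_N^k(x_R)\big)$ being $1^R$ or having Hamming weight more than $\gamma R$, i.e.\ lying in $H_{\ge\gamma R}^R\cup\{1^R\}=D_{\gapor_R^\gamma}$, which by the first line is the same as $x\in D_{\gapor_R^\gamma\circ\thr_N^k}$; intersecting with $x\in H_{\le N}^{NR}$ gives the second line. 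For the third line, on this common domain one simply matches the two defining cases of each function: $1^R$ corresponds to the collision-free case and the ``$\ge\gamma R$ many $-1$s'' case to the ``more than $\gamma R$ distinct collisions'' case, so $(\dgapcol^{k,\gamma}_{N,R})^{\le N}$ and $\gapor_R^\gamma\circ\thr_N^k$ coincide on that domain --- this is where one must keep the $\{-1,1\}$ value conventions of the preliminaries straight.

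There is no real obstacle in this proposition: it is a definitional bridge that lets the subsequent sections run the dual-polynomial machinery on $\gapor_R^\gamma\circ\thr_N^k$ instead of on $\dgapcol$ directly. The only points needing a sentence rather than pure unfolding are (i) keeping the three layers of encoding properly aligned --- the integer sequence, the matrix $x=(x_{i,j})$, and its grouping into the $R$ blocks $x_j=(x_{1,j},\dots,x_{N,j})$, with the suppressed $0$-coordinate absorbed into $z_0$; and (ii) the harmless off-by-one between ``more than $\gamma R$'' in the definition of $\dgapcol$ and ``at least $\gamma R$'' in the definition of $\gapor_R^\gamma$, which is disposed of by observing that the two thresholds differ by at most one and $\gamma$ is a free constant throughout (equivalently, by taking $\gamma R\notin\mathbb{Z}$).
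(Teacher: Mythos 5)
The paper states this proposition without any proof, treating it as a pure definitional unwinding, and your proposal supplies exactly that unwinding: the dictionary $|x_j|=z_j=$ (multiplicity of $j$ in $s(x)$), hence $\thr_N^k(x_j)=-1$ iff $j$ forms a $(\ge k)$-collision, is the entire content, and your handling of the ``at least $\gamma R$'' versus ``more than $\gamma R$'' mismatch is appropriate. One point you gesture at (``keep the $\{-1,1\}$ conventions straight'') but do not actually resolve is the sign in the final equality: with the paper's stated conventions, $\dgapcol^{k,\gamma}_{N,R}(s)=-1$ when $s$ has no $k$-collision, whereas on the corresponding encodings $(\gapor_R^\gamma\circ\thr_N^k)(x)=\gapor_R^\gamma(1^R)=+1$, and symmetrically in the other case; so the two functions agree only up to a global negation, not on the nose. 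This discrepancy sits in the paper's statement itself and is harmless downstream, since approximate degree and all the dual-polynomial conditions are invariant under replacing $f$ by $-f$, but a complete write-up should either flip one of the two output conventions or record the identity as $(\dgapcol^{k,\gamma}_{N,R})^{\le N}=-\,(\gapor_R^\gamma\circ\thr_N^k)$ rather than assert coincidence.
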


We are now ready to give the proof of \cref{lem:qc_dpdeg_kcol}.

\begin{proof}[Proof of \cref{lem:qc_dpdeg_kcol}]
By \Cref{pro:reduction_dummy}, instead of $\gapcol^{k,\gamma/c}_{N,N}$ we can consider $\dgapcol^{k,\gamma}_{N,R}$ (with the appropriate parameters) to show a lower bound.
By \Cref{bindagpcol},
we can use \Cref{lem:QC_deg} 
to relate the query complexity of $\dgapcol^{k,\gamma}_{N,R}$ to 
the double-promise degree of $\gapor_R^\gamma\circ\thr_N^k$ with domain further restricted to $H_{\le N}^{NR}$. 
\end{proof}

\subsection{Main lower bound}
Let us fix $f=(\gapor_R^\gamma\circ\thr_N^k)$ with 
domain $D=D_{(\gapor_R^\gamma\circ\thr_N^k)}$ (see \Cref{bindagpcol}). 
For technical reasons,
in the rest of the section we fix $k\geq 3$ and $N=\lceil 20 (2k)^{k/2}\rceil R$.\footnote{These parameters are used in \cite{polynomial_strikes_back} to prove \Cref{pro:final_zeta}, which we will use.}

We first define a construction used in order to compose dual polynomials, which was
introduced in earlier line of work~\cite{ShiZ09,Lee09,Sher13}.
\begin{definition}[Dual block composition] \label{def:dual_block_comp}
    The dual block composition of two functions $\phi:\{-1,1\}^n\to\mathbb{R}$ and $\psi:\{-1,1\}^m\to\mathbb{R}$ is a function $\phi\star \psi:\{-1,1\}^{nm}\to\mathbb{R}$ defined as 
    $$(\phi\star \psi)(x)=2^n \,\phi(\sgn(\psi(x_1)),\dots,\sgn(\psi(x_n)))\prod_{i\in[n]}|\psi(x_i)|$$
    where $x=(x_1,\dots,x_n)$ and $ x_i\in\{-1,1\}^m$, for $i\in[n]$.
\end{definition}

This subsection is dedicated to the proof of the following lemma which, 
together with \Cref{lem:qc_dpdeg_kcol}, implies \Cref{lowerbound-col}.
Observe that we have to zero out the support of the dual polynomial outside of $H_{\le N}^{NR}$, since our target domain is not $D$ but $D\cap H_{\le N}^{NR}$ in
\Cref{lem:qc_dpdeg_kcol}.
\begin{lemma} \label{lem:zeta_dpdeg}
Let $N=\lceil 20 (2k)^{k/2}\rceil R$ and $0<\gamma<1/4^{k-1}$.
Then there exists a function $\zeta:\{-1,1\}^{NR}\to\mathbb{R}$ such that
    \begin{align}
\forall x\in\{-1,1\}^{NR}\setminus H_{\le N}^{NR},\quad \zeta(x)=0 \label{zero};\\
 \sum_{x\in H_{\le N}^{NR} \cap D}\zeta(x)f(x)
 -\sum_{x\in H_{\le N}^{NR}\setminus D}|\zeta(x)|>2/3;  \label{eps}
 \\
   \|\zeta\|_1=1 \quad \text{ and } \quad
\phd(\zeta)\in \Omega\left(\sqrt{N^{1-1/k}}/\ln^2 N\right).  \label{norm-phd}
    \end{align}
\end{lemma}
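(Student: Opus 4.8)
\textbf{Proof plan for \Cref{lem:zeta_dpdeg}.}
The plan is to build $\zeta$ in three stages, following the blueprint of \cite{polynomial_strikes_back} but diverging at the technical heart. First I would invoke the dual polynomials already available from that paper: a dual witness $\phi$ for the high approximate degree of $\gapor_R^\gamma$ (this gives pure high degree $\Omega(\sqrt{R})$ roughly, with the one-sided structure on the all-ones input built into it), and a dual witness $\psi$ for the high approximate degree of $\thr_N^k$ (pure high degree $\Omega(\sqrt{N^{1-1/k}})$, up to logs). Then I form the dual block composition $\phi\star\psi$ from \Cref{def:dual_block_comp}. By the standard properties of dual block composition (see \cite{ShiZ09,Lee09,Sher13}), this product automatically inherits $\|\phi\star\psi\|_1 = 1$ and $\phd(\phi\star\psi)\ge \phd(\phi)\cdot\phd(\psi)$, which already gives the pure-high-degree bound in \eqref{norm-phd} up to the polylog loss. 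So conditions on the norm and the pure high degree come essentially for free.

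The second stage is to fix the fact that $\phi\star\psi$ has support on inputs of Hamming weight larger than $N$, so \eqref{zero} fails. Here I would cite the result of \cite{polynomial_strikes_back} — the proposition referenced in the footnote, \Cref{pro:final_zeta} — which, given a dual polynomial of the above form and the parameter choice $N = \lceil 20(2k)^{k/2}\rceil R$, produces a nearby function $\zeta$ that is genuinely zero outside $H_{\le N}^{NR}$, at the cost of only a polylogarithmic factor in the pure high degree and a small perturbation in the correlation quantity. After renormalizing, $\zeta$ satisfies \eqref{zero} and the norm and pure-high-degree parts of \eqref{norm-phd}; the only thing left is the correlation bound \eqref{eps}. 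Because $\zeta$ is close to $\phi\star\psi$ in the relevant sense, it suffices to prove that $\phi\star\psi$ has correlation at least, say, $9/10$ with $f$ over $D$ (after subtracting the $\ell_1$ mass outside $D$); the slack between $9/10$ and $2/3$ absorbs the perturbation from the zeroing-out step.

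The third stage — proving $\sum_{x\in D}(\phi\star\psi)(x)f(x) - \sum_{x\notin D}|(\phi\star\psi)(x)| \ge 9/10$, i.e.\ \Cref{claim:second_point} — is where the real work lies and where I expect the main obstacle. I would start from \Cref{pro:like_5.5_5.6}, which packages the generic manipulation used repeatedly in \cite{polynomial_strikes_back} to push the correlation estimate through a block composition: roughly, one writes the sum over $x$ as a sum over the ``outer'' Boolean pattern $b = (\sgn\psi(x_1),\dots,\sgn\psi(x_R))$ weighted by $\prod_i |\psi(x_i)|$, and uses the correlation property of $\phi$ against $\gapor$ together with the fact that each block's contribution is controlled by how well $\psi$ correlates with $\thr_N^k$ on that block. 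In \cite{polynomial_strikes_back} the inner function is $\mathrm{OR}$, and their argument leans essentially on a one-sided error property (in the sense of their Lemma 6.11): the dual polynomial for $\mathrm{OR}$ puts no weight on false positives, so the "bad" blocks can be handled trivially. For us the inner function is $\thr_N^k$, whose natural dual polynomial does have two-sided error, so that shortcut is unavailable. My plan to get around this is to track the false-positive mass of $\psi$ explicitly: bound the total $\ell_1$ weight $\psi$ places on inputs of Hamming weight in the "gap" region (between the no-collision regime and the many-collisions regime) and show it is small enough — using the specific structure / quantitative bounds of the $\thr_N^k$ dual polynomial from \cite{polynomial_strikes_back} — that when these contributions are propagated through the outer composition with $\phi$ and summed over all $R$ blocks, they only erode the correlation by a small constant. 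Combined with the constant the correlation property of $\phi$ buys us on the "good" patterns, this yields the $9/10$ bound. The delicate points will be: controlling the combinatorial blow-up from summing over exponentially many outer patterns $b$ (handled by the normalization $\|\psi\|_1=1$ raised to the $R$-th power together with the $2^R$ factor in \Cref{def:dual_block_comp}, as in the standard calculation), and ensuring the chosen smallness threshold on $\gamma$, namely $\gamma < 1/4^{k-1}$, is exactly what makes the error terms summable — this is why that hypothesis appears in the statement. Once \Cref{claim:second_point} is in hand, assembling \eqref{eps} for $\zeta$ is immediate, completing the proof.
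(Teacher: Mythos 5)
Your plan follows the paper's proof almost step for step: block-compose an outer dual $\phi$ with the $\thr_N^k$ dual $\psi$ (the paper's \Cref{pro:phi} and \Cref{pro:zeta_merged}), invoke the zeroing-out result of \cite{polynomial_strikes_back} (\Cref{pro:final_zeta}, packaged as the second part of \Cref{pro:zeta_merged}) to kill the support above Hamming weight $N$ at the cost of $\|\zeta-\phi\star\psi\|_1\le 2/9$ and a polylog loss in pure high degree, and reduce everything to a $9/10$ correlation bound for $\phi\star\psi$ (\Cref{claim:second_point}) proved via \Cref{pro:like_5.5_5.6} by tracking the false-positive and false-negative $\ell_1$ mass of $\psi$ and applying a union bound and a Chernoff bound over the $R$ blocks; the hypothesis $\gamma<1/4^{k-1}$ enters exactly where you predict, in the Chernoff step. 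The slack $9/10-2/9>2/3$ absorbs the perturbation, as you say.

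One correction to your first stage: the outer witness $\phi$ is \emph{not} a high-pure-degree dual witness for $\gapor_R^\gamma$. The paper takes the trivial function supported on only the two inputs $1^R$ and $-1^R$, with values $1/2$ and $-1/2$; its pure high degree is exactly $1$ (a signed measure supported on two points cannot be orthogonal to all degree-$1$ polynomials), and the entire $\Omega(\sqrt{N^{1-1/k}})$ in \eqref{norm-phd} comes from $\phd(\psi)$ alone via the product rule for dual block composition. This is not cosmetic: the correlation computation in \Cref{claim:second_point} works precisely because $\phi$ vanishes off $\{-1^R,1^R\}$, so the outer sum collapses to two terms, one handled by the false-negative bound $\sum_{D_-}|\psi|\le 1/2-2/4^k$ (plus Chernoff) and one by the false-positive bound $\sum_{D_+}|\psi|\le 1/(48N)$ (plus a union bound). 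With a genuine $\phd(\phi)=\Omega(\sqrt{R})$ GapOR witness, which necessarily has spread-out support, the two-term analysis you are implicitly relying on would not go through (the paper's open-problems discussion of \cite{improved_kDist} notes that richer duals force a modified block composition). Since your final degree bound never actually uses $\Omega(\sqrt{R})$ from $\phi$, this is a repairable misstatement rather than a structural flaw, but the choice of $\phi$ must be the degenerate one for the rest of your argument to work.
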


\begin{proof}
The construction of $\zeta$ starts by \emph{block composing} (\Cref{def:dual_block_comp})
two dual polynomials $\phi,\psi$,
one for $\gapor_R^\gamma$ and one for $\thr_N^k$. 
The dual polynomial $\phi$ for $\gapor_R^\gamma$ is given by \Cref{pro:phi}.
The dual polynomial $\psi$ for $\thr_N^k$ is given by the first part of \Cref{pro:zeta_merged}. 

The block composition $\phi \star \psi$ is a good candidate for the dual polynomial of $f$.
Indeed, \Cref{claim:second_point} shows that it satisfies \Cref{eps}, showing correlation at least $9/10 > 2/3$. One could also check that it satisfies \Cref{norm-phd}. Nonetheless it does not satisfy \Cref{zero}.

We can now use the second part of \Cref{pro:zeta_merged} to argue that there exists another dual polynomial $\zeta$ that satisfies \Cref{zero} and \Cref{norm-phd}.
Moreover, this $\zeta$ is close to $\phi\star\psi$ so that it also satisfies \Cref{eps}, with the weaker but sufficient correlation $9/10-2/9>2/3$.
This concludes the proof.
\end{proof}

As we have seen, the previous proof relies on the following results.
The first one is direct and we omit its proof.
\begin{proposition} \label{pro:phi}
Let $\phi:\{-1,1\}^R\to\mathbb{R}$ be such that
$\phi(-1^R)=-1/2$, $\phi(1^R)=1/2$, and $\phi(z)=0$ for all $z\in\{-1,1\}^R\setminus\{-1^R, 1^R\}$.
Then
$\|\phi\|_1=1$, $\textnormal{phd}(\phi)\geq 1$,
and  $$\sum_{x\in\{-1,1\}^R}
\phi(x)\mathrm{OR}(x)=1.$$ 
\end{proposition}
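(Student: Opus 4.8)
The plan is to verify the three claimed properties of $\phi$ directly from its explicit definition, since $\phi$ is supported on only two points of the hypercube. First I would check the normalization: by definition $\|\phi\|_1 = \sum_{x \in \{-1,1\}^R} |\phi(x)| = |\phi(-1^R)| + |\phi(1^R)| = 1/2 + 1/2 = 1$, since $\phi$ vanishes everywhere else.

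Next I would establish $\textnormal{phd}(\phi) \geq 1$, which amounts to showing that $\sum_{x} p(x)\phi(x) = 0$ for every constant polynomial $p$, i.e.\ for $p \equiv 1$. This is just $\sum_x \phi(x) = \phi(-1^R) + \phi(1^R) = -1/2 + 1/2 = 0$. Equivalently, one can note that $\phi = \frac{1}{2}(\chi_{1^R} - \chi_{-1^R})$ where $\chi_a$ is the indicator of the point $a$, and the degree-$0$ Fourier coefficient $\widehat{\phi}(\emptyset) = 2^{-R}\sum_x \phi(x) = 0$, so $\phi$ has no constant term and hence pure high degree at least $1$.

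Finally I would compute the correlation with $\mathrm{OR}$. Using the $\{-1,1\}$-convention in which $-1$ is "true", we have $\mathrm{OR}(1^R) = 1$ (the all-false input) and $\mathrm{OR}(-1^R) = -1$ (an input containing a true coordinate). Therefore
\[
\sum_{x \in \{-1,1\}^R} \phi(x)\,\mathrm{OR}(x) = \phi(-1^R)\,\mathrm{OR}(-1^R) + \phi(1^R)\,\mathrm{OR}(1^R) = (-1/2)(-1) + (1/2)(1) = 1,
\]
as claimed. There is essentially no obstacle here: the only point requiring a moment's care is matching the sign conventions for $\mathrm{OR}$ in the $\{-1,1\}$-encoding (where "true" $= -1$), so that $\mathrm{OR}(1^R) = +1$ and $\mathrm{OR}$ evaluates to $-1$ on any input with at least one $-1$ coordinate — in particular on $-1^R$. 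Once that is fixed, all three claims are immediate two-term evaluations, which is why the paper states it without proof.
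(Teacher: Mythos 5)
Your proposal is correct and is exactly the direct two-point verification that the paper has in mind when it says the proposition "is direct and we omit its proof"; your sign conventions for $\mathrm{OR}$ in the $\{-1,1\}$ encoding (true $=-1$) match the paper's. Nothing further is needed.
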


The second lemma is the rephrasing of several scattered results in~\cite{polynomial_strikes_back} that we unify into one statement for more clarity. For the sake of completeness, we explain how to prove it in \Cref{app:lemma_proof}.
\begin{restatable}{lemma}{lemmaun} \label{pro:zeta_merged}
Let $N=\lceil 20 (2k)^{k/2}\rceil R$ and $\phi:\{-1,1\}^R\to\mathbb{R}$ from \Cref{pro:phi}.
Then there exist $\psi:\{-1,1\}^N\to\mathbb{R}$ and $\zeta:\{-1,1\}^{NR}\to\mathbb{R}$ such that
\begin{enumerate}
\item  
$\|\psi\|_1=1$, $\phd(\psi)\ge c_1\sqrt{k^{-1}N^{1-1/k}}$,
$$\sum_{x\in D_+}|\psi(x)|\le\frac{1}{48N}, \quad \text{ and } \quad 
\sum_{x\in D_-}|\psi(x)|\le\frac{1}{2}-\frac{2}{4^k},$$
where $D_+=\{x\in\{-1,1\}^N: \psi(x)>0, \thr_N^k(x)=-1\}$ and $D_-=\{x\in\{-1,1\}^N: \psi(x)<0, \thr_N^k(x)=1\}$. 
\item
$\|\zeta\|_1=1$, $\phd(\zeta)=\Omega(\sqrt{N^{1-1/k}}/\ln^2N)$, $\|\zeta-\phi\star\psi\|_1\le 2/9$, and $\zeta(x)=0$ for all  $x\in\{-1,1\}^{NR}\setminus H_{\le N}^{NR}$.
\end{enumerate}
\end{restatable}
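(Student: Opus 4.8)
The statement to prove is \Cref{pro:zeta_merged}, which bundles together a dual polynomial $\psi$ for $\thr_N^k$ with the required pure high degree and weight bounds, and a Hamming-weight-supported approximation $\zeta$ of the block composition $\phi\star\psi$.

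\textbf{Plan of proof.} The proof is essentially a matter of carefully citing and combining several results of \cite{polynomial_strikes_back}, so my plan is to assemble them in the right order. For part 1, I would start from the dual polynomial for $\thr_N^k$ constructed in \cite{polynomial_strikes_back}: there, a dual witness for the $k$-threshold (equivalently, the ``$k$-element distinctness gadget'') is built by taking the known optimal dual polynomial $\psi_{\mathrm{OR}}$ for $\mathrm{OR}_{N-k+1}$ (which has $\phd \gtrsim \sqrt{N}$) and symmetrizing / shifting it to handle the threshold-$k$ case, incurring a loss that brings the pure high degree down to $\Omega(\sqrt{k^{-1}N^{1-1/k}})$. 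The key point is that this $\psi$ can be taken to have \emph{one-sided-ish} behaviour: almost all of its $\ell_1$ mass on the ``$\thr=-1$'' side sits on inputs where $\psi$ is positive, and dually almost all mass on the ``$\thr=1$'' side sits where $\psi$ is negative. The two quantitative bounds $\sum_{D_+}|\psi| \le 1/(48N)$ and $\sum_{D_-}|\psi| \le 1/2 - 2/4^k$ are exactly the statements one gets by pushing the error parameters in that construction; I would track the constants from the relevant lemmas of \cite{polynomial_strikes_back} (their error-reduction and ``dual polynomial for $\mathrm{THR}$'' lemmas) to certify these specific numbers, choosing the free parameter $N = \lceil 20(2k)^{k/2}\rceil R$ to make the bounds clean.

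For part 2, I would take $\phi \star \psi$ with $\phi$ from \Cref{pro:phi} and $\psi$ from part 1. By the standard properties of dual block composition \cite{ShiZ09,Lee09,Sher13}, $\|\phi\star\psi\|_1 = \|\phi\|_1 \cdot \|\psi\|_1^R$-normalized correctly to $1$ (since $\|\phi\|_1=\|\psi\|_1=1$ and the $2^n$ factor in \Cref{def:dual_block_comp} is exactly the normalization), and $\phd(\phi\star\psi) \ge \phd(\phi)\cdot\phd(\psi)$, which is already $\Omega(\sqrt{N^{1-1/k}})$. The only defect is that $\phi\star\psi$ has support on inputs of Hamming weight $> N$. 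To fix this I would invoke the ``$\zeta$-correction'' machinery of \cite{polynomial_strikes_back} (their result that post-composes a dual witness with a suitable low-degree-preserving operator to zero it out on high-Hamming-weight inputs while changing it in $\ell_1$ by a controllably small amount and dividing $\phd$ by at most a $\mathrm{polylog}(N)$ factor). This yields $\zeta$ with $\|\zeta\|_1=1$, $\zeta \equiv 0$ off $H_{\le N}^{NR}$, $\|\zeta - \phi\star\psi\|_1 \le 2/9$, and $\phd(\zeta) = \Omega(\sqrt{N^{1-1/k}}/\ln^2 N)$. I would need to check that the parameter regime $N = \lceil 20(2k)^{k/2}\rceil R$ is precisely the one under which that correction lemma of \cite{polynomial_strikes_back} applies with these constants — this is why the footnote pins down those parameters.

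\textbf{Main obstacle.} The routine part is the block-composition bookkeeping; the delicate part is part 1, namely getting the \emph{specific} weight bounds $1/(48N)$ and $1/2 - 2/4^k$ rather than just ``small'' and ``bounded away from $1/2$''. These exact constants are what later feed into the high-correlation argument (\Cref{claim:second_point}) through \Cref{pro:like_5.5_5.6}, so I cannot be sloppy: I would have to re-derive the relevant inequalities from \cite{polynomial_strikes_back} with the error parameters set so that the $O(1/N)$-type tail on the false-positive side and the $2/4^k$ slack on the false-negative side both come out, and verify they survive the symmetrization step for $\thr_N^k$. The second mild subtlety is confirming that the $\phd$ loss in the $\zeta$-correction is genuinely only $\mathrm{polylog}$ and in particular at most $\ln^2 N$ up to the hidden constant, and that the $\ell_1$-perturbation is at most $2/9$ (not just $o(1)$), since the downstream correlation bound $9/10 - 2/9 > 2/3$ has no room to spare. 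Since we are explicitly reusing \cite{polynomial_strikes_back} here and this lemma is not our new contribution, I would relegate the detailed verification to \Cref{app:lemma_proof} and in the main text only indicate which of their results supply each clause.
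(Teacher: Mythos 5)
Your proposal is correct and follows essentially the same route as the paper: part 1 is exactly the dual witness for $\thr_N^k$ of \cite{polynomial_strikes_back} (their Proposition~5.4 already states the constants $1/(48N)$ and $1/2-2/4^k$ verbatim, so no re-derivation is needed), and part 2 is the block-composition properties plus their Hamming-weight-truncation lemma, whose hypothesis is the $(\alpha,\beta)$-decay condition of the underlying univariate $\omega$ with $\alpha=(2k)^k$ — which is precisely why $N=\lceil 20(2k)^{k/2}\rceil R=\lceil 20\sqrt{\alpha}\rceil R$. The only inaccuracy is descriptive: $\psi$ is not obtained by shifting the $\mathrm{OR}$ dual witness but is an explicit symmetric construction $\psi(x)=\omega(|x|)/\binom{N}{|x|}$; since you invoke the result as a black box, this does not affect the argument.
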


For the next lemma, we will use the following proposition, which was implicitly used
in the proofs of \cite[Propositions 5.5 and 5.6]{polynomial_strikes_back}, but not stated in this general form. We include its proof in \Cref{app:pro_proof}. By convention, we denote $D_{+1}=D_+$ and $D_{-1}=D_-$.

\begin{restatable}{proposition}{propun}
\label{pro:like_5.5_5.6} 
Let $S\subseteq\{-1,1\}^{NR}$.
Let  $g:\{-1,1\}^R\to\{-1,1\}$, $h:\{-1,1\}^N\to\{-1,1\}$,
 $\phi:\{-1,1\}^R\to\mathbb{R}$.
Let $\psi:\{-1,1\}^N\to\mathbb{R}$ be such that
$\|\psi\|_1=1$ and
$\sum_{x\in\{-1,1\}^N} \psi(x)=0$.
Then the following hold.
\begin{enumerate}
    \item When $\lambda$ denotes the probability mass function $\lambda(u)=|\psi(u)|$:
    $$\sum_{x\in S}|(\phi\star\psi)(x)|= 
\sum_{z\in\{-1,1\}^R} |\phi(z)| \cdot \Pr_{x\sim\lambda^{\otimes R}}[x\in S|(\dots,\sgn(\psi(x_i)),\dots)=z];$$
    \item When $\mu^{z_i}_i$ denotes the probability mass function on $\{-1,1\}$ (parameterized by $z_i \in \{-1,1\}$) such that $\mu^{z_i}_i(-1)=2\sum_{x\in D_{z_i}}|\psi(x)|$, and $\mu = \mu^z=\mu^{z_1}_1\otimes\ldots\otimes\mu^{z_R}_R$ the independent product distribution on $\{-1,1\}^R$: 
    $$\sum_{x\in \{-1,1\}^{NR}}(\phi\star\psi)(x)\cdot(g\circ h)(x)= 
\sum_{z\in\{-1,1\}^R} \phi(z) \cdot 
\E_{y\sim\mu}[g(\dots,y_iz_i,\dots)].$$
\end{enumerate}
\end{restatable}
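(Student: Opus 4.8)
\textbf{Proof proposal for \Cref{pro:like_5.5_5.6}.}

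The plan is to unfold the definition of the dual block composition (\Cref{def:dual_block_comp}) and repackage the resulting sums as expectations with respect to suitable product measures, exploiting the product structure $\prod_{i\in[R]}|\psi(x_i)|$ and the sign pattern $(\sgn(\psi(x_1)),\dots,\sgn(\psi(x_R)))$.

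For part~1, I would start from $(\phi\star\psi)(x)=2^R\,\phi(\dots,\sgn(\psi(x_i)),\dots)\prod_{i\in[R]}|\psi(x_i)|$, so that
$|(\phi\star\psi)(x)|=2^R\,|\phi(\dots,\sgn(\psi(x_i)),\dots)|\prod_{i\in[R]}|\psi(x_i)|$.
I would then group the sum $\sum_{x\in S}$ according to the sign vector $z=(\dots,\sgn(\psi(x_i)),\dots)\in\{-1,1\}^R$: for a fixed $z$, the contribution is $|\phi(z)|$ times $\sum_{x\in S,\ \sgn(\psi(x_i))=z_i\ \forall i}\prod_i|\psi(x_i)|$. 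Now recall $\lambda(u)=|\psi(u)|$ is a probability mass function because $\|\psi\|_1=1$, hence $\lambda^{\otimes R}$ is a probability distribution on $\{-1,1\}^{NR}$ with $\lambda^{\otimes R}(x)=\prod_i|\psi(x_i)|$. The inner sum is then exactly $\Pr_{x\sim\lambda^{\otimes R}}[\,x\in S\text{ and }(\dots,\sgn(\psi(x_i)),\dots)=z\,]$. Finally, $2^R$ absorbs nicely: $\Pr_{x\sim\lambda^{\otimes R}}[(\dots,\sgn(\psi(x_i)),\dots)=z]=\prod_i\Pr_{x_i\sim\lambda}[\sgn(\psi(x_i))=z_i]$, and I would observe that, since $\sum_x\psi(x)=0$ and $\|\psi\|_1=1$, the positive and negative parts of $\psi$ each have mass $1/2$, so $\Pr_{x_i\sim\lambda}[\sgn(\psi(x_i))=z_i]=1/2$ regardless of $z_i$; thus that probability equals $2^{-R}$, cancelling the $2^R$ and yielding the stated conditional probability $\Pr_{x\sim\lambda^{\otimes R}}[x\in S\mid(\dots,\sgn(\psi(x_i)),\dots)=z]$.

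For part~2, the structure is the same but now there is a sign coming from $(g\circ h)(x)=g(\dots,h(x_i),\dots)$ interacting with $\phi$. I would again split $\sum_{x}$ over the sign pattern $z$, writing
$\sum_{x\in\{-1,1\}^{NR}}(\phi\star\psi)(x)(g\circ h)(x)=\sum_{z\in\{-1,1\}^R}\phi(z)\cdot 2^R\sum_{x:\ \sgn(\psi(x_i))=z_i\ \forall i}\Big(\prod_i|\psi(x_i)|\Big)g(\dots,h(x_i),\dots)$.
For a fixed $z$, conditioning each coordinate $x_i$ on $\sgn(\psi(x_i))=z_i$ defines a probability distribution on $\{-1,1\}^N$ (with mass $2|\psi(x_i)|$ on its support, since the relevant half of $\psi$ has total mass $1/2$), and $2^R\prod_i|\psi(x_i)|$ restricted to this event is exactly the density $\prod_i 2|\psi(x_i)|$ of the product of these conditional laws. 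Under that product law, $h(x_i)$ is a $\{-1,1\}$-valued random variable whose distribution I would identify: $\Pr[h(x_i)=-1\mid \sgn(\psi(x_i))=z_i]=\sum_{x\in D_{z_i}}2|\psi(x)|$, which is precisely $\mu^{z_i}_i(-1)$ as defined in the statement (with the matching identification $D_{+1}=D_+,\ D_{-1}=D_-$ and $h=\thr_N^k$ in the intended application). Pushing forward along $x\mapsto(\dots,h(x_i),\dots)$, and recording the sign $z_i$ that each block carries, the inner sum becomes $\E_{y\sim\mu^z}[g(\dots,y_iz_i,\dots)]$ — here the factor $y_iz_i$ encodes that the $i$-th input to $g$ equals $h(x_i)$ restricted to the $z_i$-half; one checks the bookkeeping so that on the $z_i=1$ half, $h(x_i)$ ranges as $y_i$, and on the $z_i=-1$ half it contributes $y_i$ as well but we track it as $y_iz_i$ to match how $\phi$ and $g$ compose in $\phi\star\psi$. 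Summing over $z$ with weights $\phi(z)$ gives the claimed identity.

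The main obstacle I expect is the sign bookkeeping in part~2: correctly tracking how $\sgn(\psi(x_i))=z_i$ enters, making sure the conditional distribution of $h(x_i)$ is $\mu^{z_i}_i$ on the correct ($z_i$-indexed) half of $\psi$, and verifying that the product of the conditional laws has density exactly $2^R\prod_i|\psi(x_i)|$ on the event $\{\sgn(\psi(x_i))=z_i\ \forall i\}$ (which again uses $\sum_x\psi(x)=0$ to get the $1/2$-$1/2$ split, so that each conditional renormalization contributes a factor $2$). Once the split of $\psi$ into equal-mass positive and negative parts is in hand, both parts are essentially a change of variables from the counting measure weighted by $|\psi|$-products to the corresponding product probability measures, and the $2^R$ factors cancel exactly.
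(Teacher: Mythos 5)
Your overall route is the same as the paper's: unfold the dual block composition, group the outer sum by the sign pattern $z=(\dots,\sgn(\psi(x_i)),\dots)$, use $\sum_x\psi(x)=0$ together with $\|\psi\|_1=1$ to get the half--half split $\Pr_{u\sim\lambda}[\sgn(\psi(u))=\pm 1]=1/2$ (so the $2^R$ cancels against the probability of the conditioning event), and in part~2 identify the conditional law of $(\dots,h(x_i),\dots)$ given the sign pattern. Part~1 is correct as written.

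In part~2 there is one concrete slip in the identification of $\mu^{z_i}_i$. You claim $\Pr[h(x_i)=-1\mid\sgn(\psi(x_i))=z_i]=2\sum_{x\in D_{z_i}}|\psi(x)|=\mu^{z_i}_i(-1)$, i.e.\ that $\mu^{z_i}_i$ is the law of $h(x_i)$ itself. This is true for $z_i=+1$ but false for $z_i=-1$: conditioned on $\psi(x_i)<0$, the event $x_i\in D_{-}=\{x:\psi(x)<0,\ h(x)=+1\}$ is the event $h(x_i)=+1$, so $2\sum_{x\in D_{-}}|\psi(x)|$ is $\Pr[h(x_i)=+1\mid\cdot]$, not $\Pr[h(x_i)=-1\mid\cdot]$. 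The correct statement, which is what the paper proves, is that $\mu^{z_i}_i(-1)$ is the probability of the event $h(x_i)\neq z_i$, i.e.\ $\mu^{z_i}_i$ is the law of the flip indicator $y_i=z_i\,h(x_i)$, so that $h(x_i)=z_iy_i$ in distribution. With that reading your final formula $\E_{y\sim\mu^z}[g(\dots,y_iz_i,\dots)]$ is exactly right, whereas under your stated identification of $\mu^{z_i}_i$ the inner expectation would instead be $\E_{y\sim\mu^z}[g(\dots,y_i,\dots)]$ --- your sentence about ``tracking it as $y_iz_i$'' on the $z_i=-1$ half is trying to paper over this inconsistency. The fix is a one-line correction, and you correctly flagged the sign bookkeeping as the delicate point, but as written the intermediate claim and the final formula do not match.
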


Finally we are ready to prove
the last missing statement, which is our main technical contribution to this part.
The proof of \cite[Lemma 6.9]{polynomial_strikes_back} does not apply directly to this problem: they use the fact that the dual polynomial $\psi$ of their inner function (OR) has one sided error, which is not the case here.

As now we focus on the composed function $f$ (and the dual composition $\phi\star\psi$), the domain is not restricted to small Hamming weight inputs anymore. 
\begin{restatable}{lemma}{lemmadeux} \label{claim:second_point}
Let $N=\lceil 20 (2k)^{k/2}\rceil R$ and $0<\gamma<1/4^{k-1}$. Functions $\phi$ from \Cref{pro:phi} and $\psi$ from \Cref{pro:zeta_merged} satisfy
$$\sum_{x\in D}(\phi\star\psi)(x)\cdot f(x)-\sum_{x\in\{-1,1\}^{NR}\setminus D}|(\phi\star\psi)(x)|\ge 9/10.$$
\end{restatable}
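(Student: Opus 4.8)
The plan is to apply \Cref{pro:like_5.5_5.6} to the composition $f = \gapor_R^\gamma \circ \thr_N^k$, with inner function $h = \thr_N^k$, outer function $g = \gapor_R^\gamma$ (or rather its total extension, since $\gapor$ is partial), outer dual $\phi$ from \Cref{pro:phi}, and inner dual $\psi$ from the first part of \Cref{pro:zeta_merged}. Note $\psi$ satisfies the hypotheses of \Cref{pro:like_5.5_5.6}: $\|\psi\|_1 = 1$, and since $\phd(\psi) \ge 1$ we get $\sum_x \psi(x) = 0$. Since $\phi$ is supported only on $-1^R$ and $1^R$, both sums in the target inequality collapse to just two terms each. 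Concretely, writing $\Pr_{x\sim\lambda^{\otimes R}}$ as in part (1) and $\E_{y\sim\mu^z}$ as in part (2), the correlation term becomes $\tfrac12\E_{y\sim\mu^{1^R}}[\,\overline{\gapor}(y)\,] - \tfrac12\E_{y\sim\mu^{-1^R}}[\,\overline{\gapor}(-y)\,]$ where $\overline{\gapor}$ is the total extension, and the ``bad'' term $\sum_{x\notin D}|(\phi\star\psi)(x)|$ decomposes via part (1) into contributions from $z = 1^R$ and $z = -1^R$.

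\textbf{Key steps.} First I would compute the distributions $\mu^{1^R}$ and $\mu^{-1^R}$. When $z_i = 1$, $\mu_i^1(-1) = 2\sum_{x\in D_+}|\psi(x)| \le \tfrac{1}{24N}$ (using the first bound in \Cref{pro:zeta_merged}, item 1, summed appropriately — actually each coordinate gets the same small mass, $\le 1/(24N)$). So under $\mu^{1^R}$, with probability $\ge 1 - R/(24N) \ge 1 - 1/24$ (since $R \le N$), the sampled $y$ is the all-ones string, i.e.\ $y_i z_i = 1$ for all $i$, so $\gapor$ evaluates to $1$ (i.e.\ $\overline{\gapor}(y) = 1$) and moreover the input lies in $D$. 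When $z_i = -1$, $\mu_i^{-1}(-1) = 2\sum_{x\in D_-}|\psi(x)| \ge 2(\tfrac12 - \tfrac{2}{4^k}) = 1 - \tfrac{4}{4^k}$. So under $\mu^{-1^R}$, each coordinate $y_i z_i$ equals $-1$ independently with probability $\ge 1 - 4/4^k$; a Chernoff/union-type bound (since $4/4^k < \gamma/ \text{something}$ given $\gamma < 1/4^{k-1}$, need to be careful with constants) shows that with high probability at least $\gamma R$ coordinates of $(\dots, y_iz_i, \dots)$ are $-1$, so the input is in $D$ and $\overline{\gapor}$ evaluates to $-1$. Second, I would combine: the correlation contribution from $z = 1^R$ is $\ge \tfrac12(1 - 1/24)$ and from $z = -1^R$ is $\ge \tfrac12(1 - o(1))$, giving correlation close to $1$. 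Third, for the ``outside $D$'' term, use part (1): the mass $\sum_{x\notin D}|(\phi\star\psi)(x)|$ equals $\tfrac12\Pr_{x\sim\lambda^{\otimes R}}[x\notin D \mid \text{signs} = 1^R] + \tfrac12\Pr[x\notin D \mid \text{signs} = -1^R]$; each of these is bounded by the same small failure probabilities computed above (being outside $D$ means the $\thr$-pattern is neither all-$1$ nor has $\ge\gamma R$ many $-1$s). Finally, subtract to get $\ge 9/10$.

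\textbf{Main obstacle.} The hard part will be the $z = -1^R$ branch: establishing that under $\mu^{-1^R}$ the vector of threshold values has at least $\gamma R$ many $-1$s with probability close to $1$, and simultaneously controlling the probability of landing outside $D$. This requires a concentration argument on a sum of $R$ independent Bernoulli's each with success probability $\ge 1 - 4/4^k$, and one must check that the constants line up: we need the expected number of $-1$'s, roughly $R(1 - 4/4^k)$, to exceed $\gamma R$ comfortably, which holds since $\gamma < 1/4^{k-1} = 4/4^k$ — wait, that is exactly the boundary, so the concentration must push below the mean by a constant factor, and one needs $\gamma$ strictly smaller (the hypothesis $\gamma < 1/4^{k-1}$ may need to be used together with a Chernoff bound whose deviation term eats the gap, or the relevant bound in \Cref{pro:zeta_merged} is slightly stronger than I stated). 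This is precisely where the proof must diverge from \cite[Lemma 6.9]{polynomial_strikes_back}: there the inner dual has one-sided error ($D_+ = \emptyset$ morally, or the OR structure makes one branch trivial), whereas here both $D_+$ and $D_-$ carry mass and both branches need genuine concentration estimates. I would handle this with a careful Chernoff bound, tracking that $N = \lceil 20(2k)^{k/2}\rceil R$ gives enough room, and budgeting the two small error probabilities so their sum stays below $1/10$ after accounting for the $\ge 9/10$ target.
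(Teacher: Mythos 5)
Your overall strategy matches the paper's proof: apply \Cref{pro:like_5.5_5.6} to collapse both sums to the two points $z=1^R$ and $z=-1^R$, handle the $z=1^R$ branch by a union bound over the false-positive mass $\le 1/(24N)$ per coordinate, and handle the $z=-1^R$ branch by a concentration bound that uses $\gamma<1/4^{k-1}$. The $z=1^R$ analysis and the decomposition of the ``outside $D$'' term via part (1) of the proposition are exactly what the paper does.

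There is, however, a concrete error in your $z=-1^R$ branch. \Cref{pro:zeta_merged} gives an \emph{upper} bound $\sum_{x\in D_-}|\psi(x)|\le \tfrac12-\tfrac{2}{4^k}$, so $\mu_i^{-1}(-1)\le 1-\tfrac{1}{4^{k-1}}$, not ``$\ge$'' as you wrote; and since $z_i=-1$, the event $y_iz_i=-1$ corresponds to $y_i=+1$ (no flip), whose probability is only bounded \emph{below} by $\tfrac{1}{4^{k-1}}$. So each coordinate of the threshold pattern is $-1$ with probability at least $\tfrac{1}{4^{k-1}}$ — a small constant — not with probability close to $1$ as you claim. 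Two consequences. First, for the correlation term you do not need (and cannot get) ``almost all coordinates are $-1$''; a single $-1$ coordinate already forces the total extension $\mathrm{OR}$ to output $-1$, and the probability that at least one occurs is $\ge 1-(1-1/4^{k-1})^R\ge 1-e^{-R/4^{k-1}}$, which is what the paper uses. Second, for membership in $D$ you need at least $\gamma R$ coordinates equal to $-1$; here the relevant comparison is between the per-coordinate lower bound $1/4^{k-1}$ and $\gamma$, and the Chernoff bound must absorb exactly the gap $1/4^{k-1}-\gamma$, yielding the failure probability $e^{-2R(1/4^{k-1}-\gamma)^2}$. Your ``main obstacle'' paragraph senses that this is the tight spot but misattributes it (the expected number of $-1$s is $\approx R/4^{k-1}$, not $\approx R(1-4/4^k)$). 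With these corrections your plan becomes the paper's proof; as written, the $z=-1^R$ estimates do not follow from the cited lemma.
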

\begin{proof}
We rewrite the left hand side by manipulating the sets we consider in the sums, and then we will bound separately the terms we get.
    \begin{align*}
    &\sum_{x\in D}(\phi\star\psi)(x)\cdot f(x)-\sum_{x\in\{-1,1\}^{NR}\setminus D}|(\phi\star\psi)(x)| \\
    =&\sum_{x\in \{-1,1\}^{NR}}(\phi\star\psi)(x)\cdot (\textnormal{OR}\circ\thr_N^k)(x) \\
    &-\left(\sum_{x\in \{-1,1\}^{NR}\setminus D}(\phi\star\psi)(x)\cdot (\textnormal{OR}\circ\thr_N^k)(x)+\sum_{x\in\{-1,1\}^{NR}\setminus D}|(\phi\star\psi)(x)|\right) \\
    \ge& \sum_{x\in \{-1,1\}^{NR}}(\phi\star\psi)(x)\cdot (\textnormal{OR}\circ\thr_N^k)(x) -2 \sum_{x\in\{-1,1\}^{NR}\setminus D}|(\phi\star\psi)(x)|
    \end{align*}

We first lower bound the first term.
\begin{claim}
    $$ \sum_{x\in \{-1,1\}^{NR}}(\phi\star\psi)(x)\cdot (\textnormal{OR}\circ\thr_N^k)(x)\geq 1-e^{-\frac{R}{4^{k-1}}}-\frac{R}{48N}.$$
\end{claim}
\begin{claimproof}Using Item 2 of \Cref{pro:like_5.5_5.6}, the left hand side can be written as 
    $$\sum_{z\in\{-1,1\}^R}\phi(z)\cdot \E_{y\sim\mu}[\mathrm{OR}(\dots,y_iz_i,\dots)].$$
    Recall that $\phi(z)=0$ when $z$ is anything but $-1^R$ or $1^R$, so only two terms are left to study.

    If $z=-1^R$, using Item 1 of \Cref{pro:zeta_merged} each $y_i$ is $-1$ with probability $\le 1-1/4^{k-1}$ and 1 with probability $\ge 1/4^{k-1}$. If there is any $y_i=1$ then the value of the OR is still $-1$. The probability of this event is $\ge 1-(1-1/4^{k-1})^R\ge 1-e^{-\frac{R}{4^{k-1}}}$. So the expected value is $\le (-1)(1-e^{-\frac{R}{4^{k-1}}})+e^{-\frac{R}{4^{k-1}}}=-1+2e^{-\frac{R}{4^{k-1}}}$. Since in this case 
    $\phi(-1^R)=-1/2$, the contribution to the sum is at most $1/2-e^{-\frac{R}{4^{k-1}}}$.

    If $z=1^R$, then, using Item 1 of \Cref{pro:zeta_merged} again, each $y_i$ is $-1$ with probability $\le 1/(48N)$. If any $y_i$ is $-1$ then the value of the OR becomes $-1$. The union bound tells us that the probability of this is $\le R/(48N)$, so the expected value is at least $ -R/(48N)+1-R/(48N)=1-R/(24N)$. Multiplied by $\phi(1^R)=1/2$ the contribution is at least $ 1/2-R/(48N)$.
    Thus, the first term can be lower bounded by $1-e^{-\frac{R}{4^{k-1}}}-\frac{R}{48N}$.
\end{claimproof}

Now we bound the second term.
\begin{claim}
$$2\sum_{x\in\{-1,1\}^{NR}\setminus D}|(\phi\star\psi)(x)| <e^{-2R\left(\frac{1}{4^{k-1}}-\gamma\right)^2}.$$
\end{claim}
\begin{claimproof}By Item 1 of \Cref{pro:like_5.5_5.6} with $S=\{-1,1\}^{NR}\setminus D$, the term can be written as follows,
    \begin{equation*}
        2\sum_{x\in\{-1,1\}^{NR}\setminus D}|(\phi\star\psi)(x)|
        =2\sum_{z\in\{-1,1\}^R}|\phi(z)|\cdot\Pr_{x\sim\lambda^{\otimes R}}[x\notin D \mid (\dots,\sgn(\psi(x_i)),\dots)=z],
        \end{equation*} 
which, using that $|\phi(z)|=1/2$ when $z$ is $-1^R$ or $1^R$ and $0$ otherwise, collapses to
        \begin{equation*}
        =\Pr_{x\sim\lambda^{\otimes R}}[x\notin D \mid (\dots,\sgn(\psi(x_i)),z\dots)=-1^R] 
        + \Pr_{x\sim\lambda^{\otimes R}}[x\notin D \mid (\dots,\sgn(\psi(x_i)),\dots)=1^R].
    \end{equation*}
    
    In order to bound these two terms we introduce $0/1$-variables $r_i$ and $q_i$, for $i\in[R]$, related to the false positive and false negative inputs.
Define  $r_i=1$ if $\thr_N^k(x_i)=-1$ and $\sgn(\psi(x_i))=1$, and otherwise $r_i=0$. Similarly, $q_i=1$ if $\thr_N^k(x_i)=1$ and $\sgn(\psi(x_i))=-1$, and otherwise $q_i=0$.

    Let us focus on the first term.
    If we sample $x_i$ from the conditional distribution $(\lambda|\sgn(\psi(x_i))=1)$, then $\Pr[r_i=1]=\Pr[\thr_N^k(x_i)=-1| \sgn(\psi(x_i))=1]=2\sum_{x_i\in D_+}|\psi(x_i)|\le 1/(24N)$ (in the last step we used Item 1 of \Cref{pro:zeta_merged}).
    Thus we can upper bounded the probability that an input does not satisfy the promise of GapOR$_R^\gamma$ (i.e. that it is not in $D$) knowing that all the predictions are 1. It means that it contains at least 1 but less than $\gamma R$ many $-1$s, so this many predictions are false positive, which can be expressed by the $r_i$ variables. In the last step below we use the union bound.
 $$       \Pr\Big[x\notin D \mid \forall i\in[R]\ \sgn(\psi(x_i))=1\Big] = \Pr\left[1\le\sum_{i\in[R]}r_i<\gamma R\right] 
        \le \Pr\left[1\le\sum_{i\in[R]}r_i\right]\le \frac{R}{24N}.
   $$

Similarly, for the second term, if we sample $x_i$ from the conditional distribution $(\lambda|\sgn(\psi(x_i))=-1)$, then $\Pr[q_i=1]=\Pr[\thr_N^k(x_i)=1| \sgn(\psi(x_i))=-1]=2\sum_{x_i\in D_-}|\psi(x_i)|\le 1-\frac{1}{4^{k-1}}$
    (for the last step we used Item 1 of \Cref{pro:zeta_merged} again).

    Then, similarly to the first term, we can upper bound the probability. Now in the last step we use the Chernoff bound, which introduces the constraint $\gamma<\frac{1}{4^{k-1}}$.
    \begin{equation*}
    \Pr\Big[x\notin D \mid \forall i\in[R]\ \sgn(\psi(x_i))=-1\Big] 
    \le \Pr\left[(1-\gamma)R<\sum_{i\in[R]}q_i\right]< e^{-2R\left(\frac{1}{4^{k-1}}-\gamma\right)^2}. 
    \end{equation*}
\end{claimproof}

Putting together the two bounds, we obtain
$$ \sum_{x\in D}(\phi\star\psi)(x)\cdot f(x)-\sum_{x\in\{-1,1\}^{NR}\setminus D}|(\phi\star\psi)(x)|
        \ge 1-\frac{R}{16N}-e^{-\frac{R}{4^{k-1}}}-e^{-2R(\frac{1}{4^{k-1}}-\gamma)^2}.$$
    When $k$ and $1/4^{k-1}-\gamma$ are positive constants and $R\in\Theta(N)$, this is larger than $9/10$ (for large enough $N$).
\end{proof}

Finally, we can conclude the proof of \Cref{lowerbound-col}.
\begin{proof}[Proof of \Cref{lowerbound-col}] \label{proof:th_main_lb}
    By \Cref{lem:zeta_dpdeg}, there is a dual polynomial for $\gapor_R^\gamma\circ \thr_N^k$ of pure high degree $\Omega(\sqrt{N^{1-1/k}}/\ln^2 N)$ that is only supported on $H_{\le N}^{NR}$. By \Cref{th:dpdeg0} this means that the double-promise $\delta$-approximate degree of $\gapor_R^\gamma\circ \thr_N^k$ with domain restricted to $H_{\le N}^{NR}$ is $\Omega(\sqrt{N^{1-1/k}}/\ln^2 N)$. Using \Cref{lem:qc_dpdeg_kcol}
    with $c=\lceil 20 (2k)^{k/2}\rceil$,
    we obtain that the bounded-error quantum query complexity of $\gapcol_{N,N}^{k,\gamma'}$ is $\Omega(\sqrt{N^{1-1/k}}/\ln^2 N)$ if $\gamma'=\gamma/c<1/(4^{k-1}\lceil 20 (2k)^{k/2}\rceil)$. This implies the same lower bound on testing $k$-collision-freeness with $\eps=\gamma'$ as $\gapcol$ is just a more restricted version of the same problem.
\end{proof}

\section{Testing $3$-colorability}
\label{sec:quantum_3_coloability_lb}

In this section, we will prove that the problem of testing $3$-colorability in bounded degree graphs remains maximally hard-to-test in the quantum setting. Our lower bound proof will roughly follow the same approach as that of \cite{bogdanov2002lower}, see \cite[Section 5.6]{bhattacharyya2022property} also for a reference. Our result is stated below.

\begin{theorem}[Restatement of \Cref{theo:3color_lb_intro}]\label{theo:3colorlbmain}
Let $G$ be an unknown undirected $N$-vertex graph with maximum degree $d$, and $\eps \in (0,1)$ be a parameter. Given quantum query access to $G$ in the undirected bounded-degree graph model,
in order to distinguish if $G$ is $3$-colorable, or if it is $\eps$-far from being $3$-colorable, $\Omega(N)$ quantum queries are necessary.    
\end{theorem}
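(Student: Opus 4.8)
The plan is to follow the reduction-based strategy of \cite{bogdanov2002lower}, but to replace their classical indistinguishability argument by a quantum one based on the polynomial method of \cite{beals_poly} together with the fact that $\ell$-wise independent distributions look uniform to low-degree polynomials.

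\textbf{Reduction to E$(3,c)$LIN-2.} First I would invoke the gap-preserving reduction of \cite{bogdanov2002lower} (see also \cite[Section 5.6]{bhattacharyya2022property}) from testing E$(3,c)$LIN-2 to testing $3$-colorability: there are constants $c,d',\eps'>0$ and a map sending each system $\Phi$ of $M=\Theta(n)$ linear equations over $\mathbb{F}_2$ (each with $3$ variables, each variable in at most $c$ equations, on $n$ variables) to a graph $G_\Phi$ on $\Theta(n)$ vertices of maximum degree at most $d'$, such that $G_\Phi$ is $3$-colorable when $\Phi$ is satisfiable and $\eps'$-far from $3$-colorable when $\Phi$ is $\eps$-far from satisfiable. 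The reduction is moreover \emph{local}: once the ``pattern'' of $\Phi$ (which triple of variables occurs in each equation) is fixed, each neighbour query to $G_\Phi$ is a fixed function of $O(1)$ coordinates of the right-hand-side vector $b\in\{0,1\}^M$ of $\Phi$. Hence it suffices to exhibit, for one fixed pattern, two distributions over right-hand sides that are hard to distinguish quantumly; this yields $\Omega(n)=\Omega(N)$ for E$(3,c)$LIN-2 and, via the reduction, for $3$-colorability.

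\textbf{The hard instances.} As in \cite{bogdanov2002lower}, fix a sparse ``good'' matrix: there exist constants $\delta>0$ and, for infinitely many $n$, a matrix $A\in\mathbb{F}_2^{M\times n}$ with $M=\Theta(n)$ (with $M/n$ a large enough constant), exactly three ones per row and at most $c$ ones per column, such that (i) every set of $\ell:=\lceil\delta n\rceil$ rows of $A$ is linearly independent, and (ii) a uniformly random $b\in\{0,1\}^M$ is $\eps$-far from the image $\{Ax:x\in\mathbb{F}_2^n\}$ with probability $1-o(1)$ (property (ii) is a union bound over the at most $2^n$ codewords and Hamming balls of radius $\eps M$, once $M/n$ is large enough). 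Let $D_{\mathrm{yes}}$ be the distribution over E$(3,c)$LIN-2 instances with coefficient matrix $A$ and right-hand side $b=Ax$ for $x$ uniform in $\mathbb{F}_2^n$ (always satisfiable), and let $D_{\mathrm{no}}$ be the one with $b$ uniform in $\{0,1\}^M$ conditioned on being $\eps$-far (conditioning shifts the distribution by $o(1)$ in total variation). Pushing these forward through the reduction gives distributions over graphs supported on ``yes'' resp.\ ``$\eps'$-far'' inputs for $3$-colorability.

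\textbf{Quantum indistinguishability and the obstacle.} The key observation is that $b$ is $\ell$-wise independent under \emph{both} $D_{\mathrm{yes}}$ and $D_{\mathrm{no}}$: trivially for $D_{\mathrm{no}}$, and for $D_{\mathrm{yes}}$ because linear independence of any $j\le\ell$ rows makes $x\mapsto(b_{i_1},\dots,b_{i_j})$ a surjective linear map $\mathbb{F}_2^n\to\mathbb{F}_2^j$, so it pushes the uniform distribution to the uniform distribution. Now take a $T$-query quantum $\eps'$-tester for $3$-colorability; by locality of the reduction each of its graph queries is simulated by $O(1)$ queries to the standard Boolean oracle for $b$, giving a $T'=O(T)$-query algorithm whose acceptance probability $\mathsf{acc}(b)$ is, by \cite{beals_poly}, a multilinear polynomial in $b\in\{0,1\}^M$ of degree at most $2T'$. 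If $2T'<\ell$ then $\mathsf{acc}$ is a linear combination of monomials $\prod_{i\in S}b_i$ with $|S|<\ell$, and $\mathbb{E}[\prod_{i\in S}b_i]=2^{-|S|}$ under both $D_{\mathrm{yes}}$ and $D_{\mathrm{no}}$ by $\ell$-wise independence, so $\mathbb{E}_{D_{\mathrm{yes}}}[\mathsf{acc}]=\mathbb{E}_{D_{\mathrm{no}}}[\mathsf{acc}]$, contradicting the $2/3$ versus $1/3$ guarantee; hence $T=\Omega(\ell)=\Omega(N)$. The conceptual core (moment matching plus the polynomial method) is short; the real work, and the main obstacle, is verifying that the reduction of \cite{bogdanov2002lower} is genuinely query-local (each neighbour query touches only $O(1)$ bits of $b$) and that a $3$-uniform, $O(1)$-column-bounded matrix $A$ with both properties (i) and (ii) exists with $M=\Theta(n)$ --- but both are established, or are immediate adaptations of what is established, in \cite{bogdanov2002lower}. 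Finally, the same bound transfers to testing Hamiltonian path/cycle and to approximating independent set and vertex cover size exactly as in \cite{yoshida2010query}.
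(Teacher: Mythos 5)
Your proposal is correct and follows essentially the same route as the paper: reduce from E$(3,c)$LIN-2 via the local gadget reduction of \cite{bogdanov2002lower}, use the matrix $A$ whose every $\delta N$ rows are linearly independent so that $b=Ax$ is $\delta N$-wise independent, and conclude that no $o(N)$-query quantum algorithm can tell $D_{\mathrm{yes}}$ from $D_{\mathrm{no}}$. The only cosmetic difference is that you re-derive the quantum indistinguishability of $\ell$-wise independent strings from uniform via the polynomial method of \cite{beals_poly}, whereas the paper invokes this exact fact as a black box from \cite{apers2022quantum}.
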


In order to prove the above theorem, we will first discuss the approach to proving the classical lower bound. Then we will modify the classical proof suitably to the quantum setting. Let us start with the notion of $k$-wise independent string which will be used both in the classical and quantum lower bound proofs.

\begin{definition}[$k$-wise independent string]
A string $s=(s_1, \ldots, s_N) \in \{0,1\}^N$ is said to be \emph{$k$-wise independent} if for any set of $k$-indices $i_1, i_2 \ldots, i_k$, the probability of any particular assignment  $(b_{i_1},b_{i_2}, \ldots, b_{i_k}) \in \{0,1\}^k$ to the indices $i_1, i_2 \ldots, i_k$ is equal to $1/2^k$. 
\end{definition}

\subsection{Classical lower bound approach for testing $3$-colorability}

As we discussed in the overview, to prove the lower bound of $3$-colorability, the authors in \cite{bogdanov2002lower} studied another problem called E$(3, c)$LIN-2, a problem related to deciding the satisfiability of a system of linear equations. Then the authors designed a reduction to $3$-colorability from E$(3, c)$LIN-2, which finally proves the linear query complexity lower bound for testing $3$-colorability. We will also follow a similar approach here. Let us first formally define the problem of E$(3, c)$LIN-2, where below $\mathbb{F}_2$ denotes the field with two elements.

\begin{definition}[E$(3, c)$LIN-2]
Let $\cE$ be a system of linear equations with $N$ variables from $\mathbb{F}_2$, where there are $3$ variables in each equation, and each variable occurs in at most $c$ equations. This system $\cE$ is represented as a matrix and we have query access to its entries. Given a parameter $\alpha \in (0,1)$, the goal is to distinguish if $\cE$ is satisfiable, or at least an $\alpha$-fraction of the equations need to be modified to make $\cE$ satisfiable.
\end{definition}

The authors in \cite{bogdanov2002lower} proved the following lemma, which states that there exists a system of linear equations (equivalently a matrix), such that any constant fraction of the rows of this matrix are linearly independent. The authors proved this using hypergraph constructions.

\begin{lemma}[{\cite[Theorem 8]{bogdanov2002lower}}]\label{lem:existhardmatrix}
For every $c > 0$, there exists a $\delta > 0$ such that
for every $N$, there exists a matrix $A \in \{0, 1\}^{cN \times N}$ with $cN$ rows and $N$
columns such that the following conditions hold:
\begin{enumerate}
    \item Each row of $A$ has exactly three non-zero entries;
    \item Each column of $A$ has exactly $3c$ non-zero entries;
    \item Every collection of $\delta N$ rows of $A$ is linearly independent.
\end{enumerate} 
\end{lemma}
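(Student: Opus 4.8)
The plan is to realize $A$ as the $\{0,1\}$-valued incidence matrix of a biregular bipartite graph that expands on small vertex sets, and to produce such a graph by a first-moment argument. Concretely, I would fix a bipartite graph $B$ with a left side $L$ of $cN$ vertices (indexing the rows/equations), each of degree exactly $3$, and a right side $\mathcal R$ of $N$ vertices (indexing the columns/variables), each of degree exactly $3c$; the incidence identity $3|L| = 3c|\mathcal R|$ makes these degree constraints consistent. Setting $A_{\ell,v}=1$ iff $\ell\sim v$ in $B$, Properties~1 and~2 of the lemma are immediate from the degree bounds, provided $B$ is \emph{simple} so that every adjacency contributes exactly one $1$ (and hence every row has three distinct nonzero columns).

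To obtain Property~3, I would reduce it to a vertex-expansion property of $B$. Over $\mathbb{F}_2$, a set $S$ of rows is linearly dependent iff some nonempty $T\subseteq S$ has $\bigoplus_{\ell\in T}A_\ell = 0$, equivalently every column is incident to an even number of rows of $T$; in $B$ this says every right vertex touched by $T$ is touched by at least two edges of $T$ (an ``even cover''). Counting the $3|T|$ incidences leaving $T$ then forces $3|T|\ge 2|N_B(T)|$, i.e.\ $|N_B(T)|\le\tfrac32|T|$. So if $B$ satisfies the expansion property that every $T\subseteq L$ with $1\le|T|\le\delta N$ has $|N_B(T)|>\tfrac32|T|$, then $B$ has no even cover of size at most $\delta N$, and therefore every collection of at most $\delta N$ rows of $A$ is linearly independent, which is Property~3.

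The remaining task is to build a simple, well-expanding $B$, for which I would use the configuration model: give each left vertex $3$ half-edges, each right vertex $3c$ half-edges, and take a uniformly random perfect matching of the $3cN$ half-edges. Since all degrees are constant, the matching yields a simple graph with probability at least some constant $c_0=c_0(c)>0$ independent of $N$. For the expansion, I would union-bound over a scale $1\le s\le\delta N$, over left sets $T$ of size $s$, and over candidate neighborhoods $W\subseteq\mathcal R$ of size $\lfloor\tfrac32 s\rfloor$: the chance that all $3s$ half-edges of $T$ are matched inside $W$ is, up to lower-order factors, $(|W|/N)^{3s}\approx(3s/(2N))^{3s}$, and using $\binom{m}{k}\le(em/k)^k$ the failure probability at scale $s$ comes out to $\big[C(c)\cdot(s/N)^{1/2}\big]^{s}$. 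The decisive point is that the net exponent of $s/N$ is $+\tfrac12$, so the bound is geometrically small; summing over $s$ and using $s/N\le\delta$ bounds the total expansion-failure probability by $O(\sqrt\delta)$. Choosing $\delta=\delta(c)$ small enough that $O(\sqrt\delta)<c_0/2$ and conditioning on simplicity then produces a simple $B$ with the desired expansion (conditional failure probability $<(c_0/2)/c_0=1/2$), and $B$ yields the matrix $A$.

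I expect the delicate step to be the union-bound bookkeeping of the previous paragraph: one must verify that the exponent of $s/N$ stays strictly positive \emph{uniformly over all scales $1\le s\le\delta N$}, in particular in the small-$s$ regime where the $\binom{cN}{s}$ factor is large, and one must combine the ``simple'' and ``expanding'' events by conditioning rather than by naive subtraction, since $c_0$ need not exceed $1/2$. An alternative that avoids the probabilistic argument altogether, which I would mention but not carry out, is to invoke an explicit construction of unbalanced bipartite expanders of left-degree $3$ whose expansion factor exceeds $\tfrac32$ on all linear-sized left sets; the matrix $A$ follows from such an expander in exactly the same way.
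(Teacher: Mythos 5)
Your proposal is correct, but note that the paper itself does not prove this lemma at all: it is imported verbatim from Bogdanov--Obata--Trevisan \cite{bogdanov2002lower} (their Theorem~8), with only the remark that it is obtained ``using hypergraph constructions.'' Your argument is essentially a self-contained reconstruction of that original proof: viewing $A$ as the incidence matrix of a $(3,3c)$-biregular bipartite graph, observing that an $\mathbb{F}_2$-dependency among rows is an even cover and hence forces $|N_B(T)|\le\tfrac32|T|$ for some nonempty $T$, and then showing by a first-moment/union bound over a random biregular graph that expansion beyond $\tfrac32$ holds simultaneously for all left sets of size at most $\delta N$. Your bookkeeping is sound: with $w=\lfloor\tfrac32 s\rfloor$ one gets $\binom{cN}{s}\binom{N}{w}\left(\tfrac{3s}{2N}\right)^{3s}\le\bigl[C(c)\sqrt{s/N}\bigr]^{s}$, the net exponent $+\tfrac12$ is uniform over all scales (the floor only helps at small $s$), and summing the geometric series gives $O(\sqrt\delta)$, after which conditioning on simplicity of the configuration-model multigraph (probability bounded below by a constant depending only on $c$, since both degrees are constant) is the right way to combine the two events. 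Two minor caveats: the statement implicitly assumes the divisibility conditions making $cN$ and $3c$ integers (an issue already present in the cited lemma, not introduced by you), and your closing aside should be treated as existence only --- explicit bipartite expanders of left-degree $3$ with expansion factor strictly above $\tfrac32$ on all linear-sized sets are not actually available, so the probabilistic route you carried out is the one that delivers the lemma.
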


Using the existence of the matrix $A$ corresponding to \Cref{lem:existhardmatrix}, the authors in \cite{bogdanov2002lower} used Yao's minimax lower bound technique to prove a linear lower bound for testing E$(3, c)$LIN-2. More formally, they designed a pair of hard-to-distinguish distributions $D_{\mathrm{yes}}$ and $D_{\mathrm{no}}$, such that unless $\Omega(N)$ queries are performed, no algorithm can distinguish between them.

\begin{restatable}{lemma}{linmatrixlb}
\label{lem:hardnesse3lin2}
There exists a matrix $A \in \{0, 1\}^{cN \times N}$ (similar to the matrix mentioned in \Cref{lem:existhardmatrix}) such that given a parameter $\eps \in (0,1)$ and query access to $A$ and a vector $y\in\{0,1\}^{cN}$, in order to distinguish if there exists another vector $x \in \{0,1\}^N$ such that $Ax=y$, or for any vector $x \in \{0,1\}^N$, only a constant $\eps$ fraction of the constraints encoded by $A$ and $y$ are satisfied, $\Omega(N)$ queries are necessary. 
\end{restatable}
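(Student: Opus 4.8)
The plan is to instantiate Yao's minimax principle: it suffices to construct two distributions $D_{\mathrm{yes}}$ and $D_{\mathrm{no}}$ over instances $(A,y)$ — with $A$ fixed to be the matrix guaranteed by \Cref{lem:existhardmatrix} — such that every instance in the support of $D_{\mathrm{yes}}$ is satisfiable, every instance in the support of $D_{\mathrm{no}}$ requires an $\eps$-fraction of constraints to be changed to become satisfiable, and yet no (quantum) algorithm making $o(N)$ queries can tell a sample of $D_{\mathrm{yes}}$ from a sample of $D_{\mathrm{no}}$ with constant advantage. Since $A$ is fixed and known to the algorithm, the only thing being queried is the right-hand side vector $y\in\{0,1\}^{cN}$, so the whole task reduces to: the algorithm gets query access to a vector $y$ of length $cN$, and must distinguish ``$y$ is in the column space of $A$'' from ``$y$ is $\eps$-far (in the relevant weighted sense) from the column space of $A$.''

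First I would define $D_{\mathrm{yes}}$ by sampling $x\in\{0,1\}^N$ uniformly at random and setting $y=Ax$; this is manifestly always satisfiable. For $D_{\mathrm{no}}$ I would sample $x\in\{0,1\}^N$ uniformly, sample an error vector $e\in\{0,1\}^{cN}$ (say each coordinate independently $1$ with a small constant probability, or a uniformly random vector of fixed relative weight $\Theta(\eps)$), and set $y=Ax+e$. Two things then need to be checked. (1) \emph{Soundness of $D_{\mathrm{no}}$:} with high probability the syndrome of $e$ forces many equations to be violated under every assignment. Here is where property 3 of \Cref{lem:existhardmatrix} — every $\delta N$ rows of $A$ are linearly independent — is used, exactly as in \cite{bogdanov2002lower}: this linear independence (together with the bounded column weight $3c$, which bounds how much changing one variable can help) implies that $A$ has large \emph{distance} as a linear code in the appropriate sense, so a random $e$ of linear weight cannot be absorbed by adjusting $x$, and hence $\Omega(N)$ equations remain unsatisfiable; I would phrase this as: the instance is $\eps$-far from satisfiable for a suitable constant $\eps$ depending on $c$ (hence on $\delta$). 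This part is essentially identical to the argument in \cite{bogdanov2002lower}, so I would cite it rather than reprove it. (2) \emph{Indistinguishability:} this is the quantum step and the main point of departure.

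For indistinguishability, the key observation is that in $D_{\mathrm{yes}}$ the vector $y=Ax$ with $x$ uniform is, by property 3 of \Cref{lem:existhardmatrix}, a \emph{$\delta N$-wise independent} distribution on $\{0,1\}^{cN}$: any $\delta N$ coordinates of $y$ correspond to $\delta N$ rows of $A$, which are linearly independent, so those coordinates of $Ax$ are uniform and independent when $x$ is uniform. In $D_{\mathrm{no}}$, adding independent noise $e$ keeps the marginal on any $\delta N$ coordinates \emph{uniform} as well (still $\delta N$-wise independent), so on any set of at most $\delta N$ queried positions the two distributions are identical. Classically this immediately yields the $\Omega(N)$ bound. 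Quantumly I would invoke the standard fact — as used via \cite{apers2022quantum} and stated in the technical overview of this paper — that a quantum algorithm making $T$ queries to a string cannot distinguish a uniformly random string from an $\ell$-wise independent string as long as $T = o(\ell)$; more precisely that distinguishing requires $\Omega(\ell)$ quantum queries. Since both $D_{\mathrm{yes}}$ and $D_{\mathrm{no}}$ are $\delta N$-wise independent (indeed each is, on every small coordinate set, identical to the uniform distribution), any quantum algorithm distinguishing them with constant advantage needs $\Omega(\delta N)=\Omega(N)$ queries. Combining with parts (1) and the trivial completeness of $D_{\mathrm{yes}}$, and invoking Yao's principle (which applies to quantum query algorithms via the easy averaging argument), gives the claimed $\Omega(N)$ lower bound.

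The main obstacle I anticipate is making the quantum indistinguishability step fully rigorous: one must phrase the $\ell$-wise independence hardness result in a form that applies to two \emph{different} $\ell$-wise independent distributions rather than ``$\ell$-wise independent vs.\ uniform,'' and be careful that the hybrid/polynomial-method argument underlying it (the acceptance probability of a $T$-query algorithm is a degree-$2T$ polynomial in the input bits, and such a polynomial has the same expectation under any two distributions whose moments up to degree $2T$ agree) indeed gives the clean statement needed. Concretely, if $T<\delta N/2$ then $2T<\delta N$, and since every degree-$2T$ monomial in the $y$-bits involves fewer than $\delta N$ coordinates on which $D_{\mathrm{yes}}$ and $D_{\mathrm{no}}$ agree, the acceptance probability is the same under both distributions, contradicting constant distinguishing advantage. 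This is the heart of the proof and the only place where quantum-specific reasoning enters; everything else is a transcription of the classical construction of \cite{bogdanov2002lower}.
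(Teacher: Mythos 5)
Your overall skeleton matches the paper's: fix $A$ from \Cref{lem:existhardmatrix}, apply Yao's principle to two distributions over $y$, observe that $y=Ax$ with $x$ uniform is $\delta N$-wise independent because any $\delta N$ rows of $A$ are linearly independent, and conclude quantum indistinguishability via the moment/polynomial argument (this is exactly how the paper invokes Fact~1 of \cite{apers2022quantum}). Your discussion of the indistinguishability step is careful and correct. The divergence — and the gap — is in your choice of $D_{\mathrm{no}}$. The paper takes $y$ \emph{uniformly random} on $\{0,1\}^{cN}$, so that (a) farness is a one-line Chernoff-plus-union-bound over the $2^N$ assignments, giving $(1/2-\alpha)$-farness, and (b) the distinguishing task is literally ``uniform vs.\ $\ell$-wise independent,'' the form in which the quantum fact is stated. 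You instead take $y=Ax+e$ with sparse noise $e$, and justify farness by claiming that property~3 of \Cref{lem:existhardmatrix} gives $A$ ``large distance as a linear code,'' so that $e$ cannot be absorbed by adjusting $x$. That inference is wrong: linear independence of every $\delta N$ \emph{rows} controls the \emph{dual} distance of the column-space code $\{Ax\}$ (equivalently, its $\delta N$-wise independence), not its minimum distance. Nothing in \Cref{lem:existhardmatrix} rules out low-weight nonzero codewords $Aw$, so the claim that a random linear-weight $e$ stays far from the code does not follow from the cited properties; it would need a separate counting/union-bound argument over codewords, which you do not supply and which is \emph{not} ``essentially identical'' to the argument in \cite{bogdanov2002lower} (that argument is for uniform $y$).

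A second, quieter problem with the sparse-noise construction: even if patched, it yields instances that are only $O(\eps)$-far from satisfiable (the planted $x$ already violates only $|e|\approx\eps cN$ equations), whereas the downstream pipeline (\Cref{lem:3colorindistinguishability} and the reduction to $3$-colorability) uses $(1/2-\alpha)$-farness. The fix is simply to adopt the paper's $D_{\mathrm{no}}$: take $y$ uniform, note it is trivially $\delta N$-wise independent, and cite \cite[Lemma~18]{bogdanov2002lower} for farness. With that substitution your argument goes through and coincides with the paper's proof.
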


For completeness, we will briefly describe the proof of the above lemma in \Cref{sec:3colorability_lb_app}. The authors in \cite{bogdanov2002lower} proved the following lower bound for testing E$(3, c)$LIN-2.

\begin{lemma}[{\cite[Lemma 19]{bogdanov2002lower}}]\label{lem:3colorindistinguishability}
For every $\alpha > 0$, there are constants $c$ and $\delta > 0$ such that every algorithm that distinguishes satisfiable instances of E$(3, c)$LIN-2 with $N$ variables from instances that are $(1/2 - \alpha)$-far from satisfiable must have classical query complexity at least $\delta N$.    
\end{lemma}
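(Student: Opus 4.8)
The plan is to prove the lemma by Yao's minimax principle: I will exhibit a distribution $D_{\mathrm{yes}}$ supported on satisfiable instances and a distribution $D_{\mathrm{no}}$ supported (with all but exponentially small probability) on instances that are $(1/2-\alpha)$-far from satisfiable, and show that no deterministic algorithm making fewer than $\delta N$ queries can tell them apart with non-negligible advantage; the minimax principle then upgrades this to the claimed randomized lower bound. First I would fix the constant $c=c(\alpha)$ large enough (the precise condition $2\alpha^2 c>\ln 2$ will emerge below) and invoke \Cref{lem:existhardmatrix} to obtain the constant $\delta=\delta(c)>0$ together with, for each $N$, a matrix $A\in\{0,1\}^{cN\times N}$ in which every row has three nonzero entries, every column has $O(1)$ nonzero entries (so that after relabeling the degree constant the induced system is a valid E$(3,\cdot)$LIN-2 instance), and — crucially — every collection of at most $\delta N$ rows is linearly independent over $\mathbb{F}_2$. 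Both distributions use this same fixed, publicly known matrix $A$; only the right-hand side vector $y\in\{0,1\}^{cN}$ varies. Under $D_{\mathrm{yes}}$ I sample $x\in\{0,1\}^N$ uniformly at random and output the system ``$Ax=y$'' with $y:=Ax$ (this is always satisfied by $x$). Under $D_{\mathrm{no}}$ I sample $y\in\{0,1\}^{cN}$ uniformly at random and output the system ``$Ax=y$''.

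Next I would establish indistinguishability. Since the variable structure of the equations is the fixed matrix $A$, the only queries carrying information about whether we are in the yes- or the no-case are those revealing coordinates of $y$; in particular an algorithm making $q$ queries learns at most $q$ bits of $y$ (all other query answers are identical in the two cases). I claim that for $q\le \delta N$ the revealed bits are distributed identically under $D_{\mathrm{yes}}$ and $D_{\mathrm{no}}$, even for an adaptive algorithm. Under $D_{\mathrm{no}}$ this is immediate since $y$ is i.i.d.\ uniform. Under $D_{\mathrm{yes}}$, if $S\subseteq[cN]$ with $|S|\le\delta N$ is the set of queried coordinates, the revealed vector equals $A_S x$ where $A_S$ is the corresponding row-submatrix of $A$; by \Cref{lem:existhardmatrix} this submatrix has full row rank, hence $A_S x$ is uniformly distributed over $\{0,1\}^{|S|}$ when $x$ is uniform over $\{0,1\}^N$. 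Because this holds simultaneously for every such $S$, an adaptive algorithm's entire transcript is distributed identically in the two cases, so its acceptance probability is exactly the same under $D_{\mathrm{yes}}$ and under the (unconditioned) $D_{\mathrm{no}}$.

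It then remains to verify that $D_{\mathrm{no}}$ genuinely produces far-from-satisfiable instances. The distance of the instance $(A,y)$ from satisfiability equals $\min_{x\in\{0,1\}^N}d_H(Ax,y)$, the minimum over assignments of the number of violated equations. For a fixed $x$, since $y$ is uniform, $d_H(Ax,y)$ is distributed as $\mathrm{Binomial}(cN,1/2)$, so by a Hoeffding bound $\Pr[\,d_H(Ax,y)<(1/2-\alpha)cN\,]\le e^{-2\alpha^2 cN}$. A union bound over all $2^N$ assignments gives $\Pr[\,(A,y)\text{ is not }(1/2-\alpha)\text{-far from satisfiable}\,]\le 2^N e^{-2\alpha^2 cN}$, which is $2^{-\Omega(N)}$ once $c>(\ln 2)/(2\alpha^2)$ — this is precisely how $c$ is chosen as a function of $\alpha$. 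Conditioning $D_{\mathrm{no}}$ on this overwhelmingly likely event changes the distribution in total variation by only $2^{-\Omega(N)}$, so indistinguishability from $D_{\mathrm{yes}}$ is preserved up to a negligible additive term. Consequently any algorithm making fewer than $\delta N$ queries accepts $D_{\mathrm{yes}}$ and the conditioned $D_{\mathrm{no}}$ with probabilities differing by at most $2^{-\Omega(N)}<1/3$, contradicting the $\eps$-tester guarantee; by Yao's principle the same lower bound holds for randomized algorithms, giving classical query complexity at least $\delta N$ (shrinking $\delta$ by a constant factor if one wishes to absorb the conditioning loss cleanly).

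The genuinely hard content of this result is already packaged in \Cref{lem:existhardmatrix}: constructing a linear system that is simultaneously far from satisfiable — which forces $\Theta(N)$ equations and a linearly dense column structure — yet all of whose $\delta N$-sized row subsets are linearly independent; this is carried out in \cite{bogdanov2002lower} via expander/hypergraph arguments, and I would take it as a black box. Given that lemma, the only steps that need mild care are (i) pinning down the query model so that each query reveals at most one bit of the right-hand side $y$ while the structure $A$ is treated as public, and (ii) handling adaptivity in the indistinguishability argument and the $2^{-\Omega(N)}$ loss incurred by conditioning $D_{\mathrm{no}}$ on the far-from-satisfiable event; both are routine. Note finally that this lemma is essentially \Cref{lem:hardnesse3lin2} phrased in the language of E$(3,c)$LIN-2 testing, so the same construction serves both.
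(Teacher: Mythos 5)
Your proposal is correct and follows essentially the same route as the paper (and the cited proof in \cite{bogdanov2002lower}): Yao's minimax with the fixed matrix $A$ from \Cref{lem:existhardmatrix}, $D_{\mathrm{yes}}$ given by $y=Ax$ for uniform $x$ versus $D_{\mathrm{no}}$ with uniform $y$, indistinguishability from the $\delta N$-row linear independence making any $\le\delta N$ revealed bits uniform, and farness of $D_{\mathrm{no}}$ via a Chernoff/union-bound over assignments. The paper only sketches this argument (deferring to \cite{bogdanov2002lower} and its \Cref{lem:hardnesse3lin2} appendix sketch); you have simply filled in the same construction with the routine details on adaptivity, the choice of $c$ versus $\alpha$, and the conditioning loss.
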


The key insight that is used to prove the above lemma is that applying \Cref{lem:existhardmatrix}, any $\delta N$ rows of $A$ are linearly independent, thus any subset of $\delta N$ entries of $Ay=z$ will look uniformly random. Hence $z$ is ``\emph{$k$-wise independent}'' with $k=\delta N$.
We will not formally prove the above lemma here, please refer to \cite{bogdanov2002lower} for a formal proof.

Finally, we have the reduction that maps satisfying instances of testing E$(3, c)$LIN-2 to satisfying instances of testing $ 3$-colorability and vice-versa. 

\begin{lemma}[{\cite[Section 4]{bogdanov2002lower}}]\label{lem:e3lin3colrreduction}
There exists a reduction $\varphi$ that maps instances of testing E$(3, c)$LIN-2 to instances of testing 3-colorability such that the following hold:

\begin{enumerate}
    \item If an input $x$ to E$(3, c)$LIN-2 is satisfiable, then $\varphi(x)$ is a $3$-colorable graph;

    \item If an input $x$ to E$(3, c)$LIN-2 is far from being satisfiable, then $\varphi(x)$ is a graph that is far from being $3$-colorable.
\end{enumerate} 
\end{lemma}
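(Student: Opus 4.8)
The plan is to reconstruct the gadget reduction of \cite{bogdanov2002lower} and verify that it preserves distances up to a multiplicative constant, which is exactly what lets us compose it with the quantum hardness of E$(3,c)$LIN-2. First I would describe $\varphi$. The building block is a constant-size \emph{palette} gadget on three distinguished vertices which, in any proper $3$-colouring, must receive the three colours in some order; renaming colours, one of them plays the role of ``true'', one of ``false'', and one of ``ground''. To keep the maximum degree bounded I would use one local copy of the palette per equation and synchronise all copies through constant-size ``equality'' sub-gadgets laid out along a bounded-degree auxiliary graph on the copies. For each variable $x_i$ I would add a vertex $v_i$ made adjacent to the ground vertex of each of the $\le c$ local palettes of the equations containing $x_i$, so that $v_i$ is forced to a ``true''/``false'' colour; this costs $\deg(v_i)=O(c)$. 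For each equation $x_i\oplus x_j\oplus x_k=b$ I would attach a constant-size \emph{parity gadget} sharing with the rest of the graph only $v_i,v_j,v_k$ and the vertices of that equation's local palette, with the property that, once $v_i,v_j,v_k$ carry ``true''/``false'' colours, the gadget extends to a proper $3$-colouring if and only if the parity of those values equals $b$; this can be assembled from a constant number of equality, inequality and two-input-XOR sub-gadgets together with one auxiliary variable, with every interface vertex having $O(1)$ degree inside it. Overall $\varphi(\cE)$ has $\Theta(N)$ vertices, $\Theta(N)$ edges, and maximum degree bounded by a constant depending only on $c$.

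Next I would prove completeness: from a satisfying assignment $a$, colour every local palette copy consistently, colour $v_i$ by the ``true'' colour if $a_i=1$ and by ``false'' otherwise, and extend over each parity gadget using the ``if'' direction of its property (applicable since $a_i\oplus a_j\oplus a_k=b$). This is a proper $3$-colouring, so $\varphi(\cE)$ is $3$-colourable.

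Then I would prove soundness by contraposition. Suppose $\varphi(\cE)$ becomes $3$-colourable after deleting $t$ edges, equivalently there is a $3$-colouring $\chi$ with at most $t$ monochromatic (``bad'') edges. Call an equation \emph{clean} if none of the $O(1)$ edges of its parity gadget, of its local palette copy, of the equality sub-gadgets synchronising that copy, or of the ground edges incident to its variables, is bad. For a clean equation the local palette is properly coloured and consistent with a fixed global labelling, the vertices $v_i,v_j,v_k$ carry genuine ``true''/``false'' colours, and the gadget is properly coloured, so by the ``only if'' direction the equation is satisfied by the assignment $a$ read off from $\chi$. Each bad edge renders only $O(1)$ equations non-clean, so $a$ violates at most $O(t)$ of the $cN$ equations of $\cE$. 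Hence if $\cE$ is $\eps$-far from satisfiable then $t=\Omega(\eps c N)=\Omega(\eps)\cdot|E(\varphi(\cE))|$, i.e.\ $\varphi(\cE)$ is $\Omega(\eps)$-far from $3$-colourable, with a constant depending only on $c$ and the gadget size; rescaling $\eps$ in the earlier reductions absorbs it.

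I expect the one genuinely delicate step to be building the parity gadget (and the equality/inequality/XOR sub-gadgets) so that all requirements hold at once: constant size, bounded degree at the interface vertices, and the exact biconditional between proper $3$-colourability and the parity constraint, together with the bounded-degree plumbing that synchronises the many local palette copies. This is classical and reduces to a finite case analysis, but it is where the actual content lies; once it is in place, completeness is immediate and soundness is just the edge-charging argument above combined with the $\Theta(N)$-size accounting that converts ``fraction of equations modified'' into ``fraction of edges modified''.
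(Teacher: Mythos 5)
The paper does not actually prove this lemma: it is imported verbatim from \cite{bogdanov2002lower} (their Section~4) and used as a black box, so what you are reconstructing is the cited construction rather than anything in the paper itself. Your reconstruction has the right shape — local palette copies, variable vertices tied to ground, constant-size parity gadgets, completeness by direct colouring, soundness by charging bad edges, and the degree/size bookkeeping that turns ``fraction of equations'' into ``fraction of edges'' — and that bookkeeping is indeed what the lemma needs in order to compose with the E$(3,c)$LIN-2 hardness.

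There is, however, a genuine gap in your soundness argument, exactly at the point you gloss over. You declare an equation \emph{clean} based only on bad edges touching its own gadgets, and then assert that for clean equations the local palette is ``consistent with a fixed global labelling''. With the synchronisation you describe --- palette copies tied together by equality sub-gadgets ``laid out along a bounded-degree auxiliary graph'' --- this does not follow: a single bad edge inside one equality gadget can permute the true/false/ground roles for an entire region of copies (half of them, if the auxiliary graph is a path or a tree) while rendering only $O(1)$ equations non-clean. Two clean equations sharing a variable $x_i$ can then read $\chi(v_i)$ with opposite meanings, so there is no single assignment $a$ consistent with all clean equations, and the bound ``$a$ violates at most $O(t)$ equations'' has no well-defined $a$ to apply to. To repair it you must either (i) take the auxiliary graph to be a constant-degree \emph{expander} and add a majority-labelling step: any set $S$ of copies whose labelling deviates from the majority forces a bad edge in every equality gadget crossing the cut between $S$ and its complement, so $|S|=O(t)$ by edge expansion, and those equations are charged separately; or (ii) design the gadgets so that the assignment extracted from $\chi$ is invariant under permutations of the three colours. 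Handling this palette-consistency issue in the bounded-degree setting is precisely the nontrivial content of the construction in \cite{bogdanov2002lower}; as written, your edge-charging argument does not go through without it, while the remainder of your sketch (gadget existence by finite case analysis, completeness, the $\Theta(N)$-size accounting) is fine.
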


\subsection{Quantum lower bound for testing E$(3, c)$LIN-2 and $3$-colorability}

We will first prove the quantum lower bound for testing E$(3, c)$LIN-2. Our result is stated as follows.

\begin{lemma}\label{lem:e3linqunatumlb}
For every $\alpha > 0$ there are constants $c$ and $\delta > 0$ such that every algorithm that distinguishes satisfiable instances of E$(3, c)$LIN-2 with $N$ variables from instances that are $(1/2 - \alpha)$-far from
satisfiable must have quantum query complexity at least $\delta N/2$.    
\end{lemma}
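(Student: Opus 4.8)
The plan is to mirror the classical argument for \Cref{lem:3colorindistinguishability} from \cite{bogdanov2002lower}, substituting the classical indistinguishability step by a quantum one based on the polynomial method of \cite{beals_poly} together with the hardness of distinguishing a uniformly random string from an $\ell$-wise independent one.

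First I would fix the matrix $A\in\{0,1\}^{cN\times N}$ guaranteed by \Cref{lem:existhardmatrix}, together with its associated constant $\delta>0$, so that every collection of $\delta N$ rows of $A$ is linearly independent over $\mathbb{F}_2$. I then define two distributions over right-hand sides $y\in\{0,1\}^{cN}$ (the matrix $A$ being the same fixed object in both): under $D_{\mathrm{yes}}$, draw $x\in\{0,1\}^N$ uniformly and set $y=Ax$, so that $Ax=y$ is satisfiable by construction; under $D_{\mathrm{no}}$, draw $y\in\{0,1\}^{cN}$ uniformly at random, conditioned on $(A,y)$ being $(1/2-\alpha)$-far from satisfiable. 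As in \cite{bogdanov2002lower}, a routine concentration argument shows that a uniformly random $y$ is $(1/2-\alpha)$-far from satisfiable with probability $1-o(1)$, so $D_{\mathrm{no}}$ is $o(1)$-close in total variation to the uniform distribution $U$ on $\{0,1\}^{cN}$.

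The structural fact I would extract is that $D_{\mathrm{yes}}$ is $\delta N$-wise independent: for any $S\subseteq[cN]$ with $|S|\le\delta N$, the corresponding rows of $A$ are linearly independent, hence $x\mapsto(Ax)|_S$ is a surjective $\mathbb{F}_2$-linear map onto $\{0,1\}^{|S|}$, so $(Ax)|_S$ is uniform when $x$ is. Thus $D_{\mathrm{yes}}$ and $U$ assign the same expectation to every monomial in the bits of $y$ of degree at most $\delta N$. Now, if some quantum algorithm made $T$ queries and distinguished $D_{\mathrm{yes}}$ from $D_{\mathrm{no}}$ with constant advantage, then by \cite{beals_poly}, with $A$ held fixed, its acceptance probability would be a polynomial $p(y)$ in the bits of $y$ of degree at most $2T$; if $2T<\delta N$ this forces $\E_{y\sim D_{\mathrm{yes}}}[p(y)]=\E_{y\sim U}[p(y)]$, while $\left|\E_{y\sim D_{\mathrm{no}}}[p(y)]-\E_{y\sim U}[p(y)]\right|=o(1)$ because $D_{\mathrm{no}}$ is close to $U$ and $p$ takes values in $[0,1]$ everywhere. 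Hence the two acceptance probabilities differ by $o(1)$, a contradiction, so $T\ge\delta N/2$.

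I expect the only delicate points — none of them deep — to be: ensuring the algorithm's queries to $A$ cannot help (handled by fixing $A$ and viewing the acceptance probability as a polynomial in $y$ alone); checking that the conditioning defining $D_{\mathrm{no}}$ does not spoil closeness to $U$ (handled by the $1-o(1)$ bound); and applying the polynomial method cleanly to the $[0,1]$-valued acceptance probability rather than to a Boolean function. A tidy way around the last point is to invoke directly the known statement (e.g. \cite{apers2022quantum}) that $T$ quantum queries cannot distinguish $U$ from any $(2T)$-wise independent distribution, instantiated with the $\delta N$-wise independent $D_{\mathrm{yes}}$. Finally, combining this lemma with the reduction of \Cref{lem:e3lin3colrreduction} would yield \Cref{theo:3colorlbmain}.
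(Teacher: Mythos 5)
Your proposal is correct and follows essentially the same route as the paper: fix the matrix $A$ from \Cref{lem:existhardmatrix}, observe that the satisfiable distribution $y=Ax$ is $\delta N$-wise independent while the far-from-satisfiable distribution is (close to) uniform, and invoke the fact that $q$ quantum queries cannot distinguish a uniform string from a $2q$-wise independent one (\Cref{obs:kindependnethardness}), giving the $\delta N/2$ bound. The paper's proof is just a terser version of this, citing that fact directly rather than re-deriving it via the polynomial method as you sketch.
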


In order to prove the above lemma, we will be using the following observation from \cite{apers2022quantum}, which states that distinguishing between a uniformly random string and a $\ell$-wise independent string, for an appropriate integer $\ell$, is hard for quantum algorithms.

\begin{proposition}[Fact 1 from \cite{apers2022quantum}]\label{obs:kindependnethardness}
The output distribution of a quantum algorithm making $q$ queries to a uniformly random
string is identical to the same algorithm making $q$ queries to a $2q$-wise independent string.   
\end{proposition}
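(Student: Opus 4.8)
The plan is to prove this via the polynomial method of Beals, Buhrman, Cleve, Mosca and de Wolf that already underlies the lower-bound machinery of \Cref{sec:poly_boolean}: for a fixed $q$-query algorithm, the probability of each possible output is a multilinear polynomial of degree at most $2q$ in the bits of the queried string, and a $2q$-wise independent distribution agrees with the uniform distribution on every such polynomial.

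Concretely, I would first reduce to a clean form of the algorithm. By the principle of deferred measurement we may assume all measurements happen at the very end, and any internal classical coins of the algorithm can be fixed and averaged over afterwards, so assume the algorithm is a fixed sequence of unitaries interleaved with oracle calls $O_s\ket{i}\ket{b}\ket{w}=\ket{i}\ket{b\oplus s_i}\ket{w}$, where $s=(s_1,\dots,s_n)\in\{0,1\}^n$. I would then show by induction on $j$ that after $j$ queries the state $\ket{\psi_j(s)}=\sum_z\alpha_z^{(j)}(s)\ket{z}$ has every amplitude $\alpha_z^{(j)}$ equal to a multilinear polynomial in $s_1,\dots,s_n$ of degree at most $j$. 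The base case is a fixed state (degree $0$); an input-independent unitary only forms linear combinations of amplitudes (degree unchanged); and one application of $O_s$ sends the amplitude of the basis state indexed by $(i,0,w)$ to $(1-s_i)\alpha_{i,0,w}+s_i\alpha_{i,1,w}$, and symmetrically for $(i,1,w)$, raising the degree by at most one, after which the reduction $s_i^2\mapsto s_i$ restores multilinearity. Hence after $q$ queries each $\alpha_z^{(q)}$ has degree at most $q$.

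Writing $\alpha_z^{(q)}=a_z+\mathrm{i}\,b_z$ with $a_z,b_z$ real multilinear polynomials of degree at most $q$, the probability of outputting $z$ on input $s$ is $|\alpha_z^{(q)}(s)|^2=a_z(s)^2+b_z(s)^2$, which after reducing $s_i^2\mapsto s_i$ is a multilinear polynomial $p_z(s)$ of degree at most $2q$. For an input drawn from a distribution $\mathcal{D}$ on $\{0,1\}^n$ the overall output probability is $\Pr[\text{output}=z]=\E_{s\sim\mathcal{D}}[p_z(s)]$, which is a fixed linear combination of the moments $\E_{s\sim\mathcal{D}}\big[\prod_{i\in S}s_i\big]$ over sets $S$ with $|S|\le 2q$. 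When $\mathcal{D}$ is $2q$-wise independent, each such moment equals $2^{-|S|}$ since the coordinates indexed by $S$ are jointly uniform, exactly as under the uniform distribution; thus $\E_{s\sim\text{unif}}[p_z(s)]=\E_{s\sim 2q\text{-wise}}[p_z(s)]$ for every outcome $z$, so the two output distributions coincide. Re-averaging over the algorithm's internal randomness, if any, preserves this equality term by term.

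I do not expect a real obstacle: the statement is essentially a repackaging of the degree bound of \cite{beals_poly} together with the defining moment-matching property of $2q$-wise independence, and the only care required is the bookkeeping in the induction — tracking that each oracle call raises the amplitude degree by exactly one and that multilinearization never increases it. If brevity is preferred, one can skip the induction entirely by invoking the degree-$\le 2q$ bound on measurement probabilities as a black box and carrying out only the moment-matching step.
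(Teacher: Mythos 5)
Your proposal is correct and takes the expected route: the paper does not prove this proposition itself but imports it as Fact~1 of \cite{apers2022quantum}, and your polynomial-method derivation (amplitudes of degree at most $q$ after $q$ queries, output probabilities of degree at most $2q$ after multilinearization, and moment matching of a $2q$-wise independent distribution with the uniform one) is precisely the standard argument underlying that fact. The only point worth flagging is that your argument is phrased for binary strings, which suffices here since the $k$-wise independent string used in \Cref{sec:quantum_3_coloability_lb} is over $\{0,1\}$; for larger alphabets one would first encode each symbol by indicator variables as in \Cref{sec:dual_poly_non_boolean}.
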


Now let us prove \Cref{lem:e3linqunatumlb}.

\begin{proof}[Proof of \Cref{lem:e3linqunatumlb}]
Following \Cref{lem:existhardmatrix,lem:hardnesse3lin2}, we know that there exists a matrix $A$ whose $\delta N$ rows are linearly independent, for which testing E$(3, c)$LIN-2 requires $\Omega(N)$ classical queries. Moreover, from \Cref{obs:kindependnethardness}, we know that any quantum algorithm that performs less than $k/2$ queries can not distinguish a uniformly random vector from a $k$-wise independent vector. Now let us  set $k=\delta N$. Combining all the above, this implies that at least $\delta N/2$  quantum queries are necessary for testing E$(3, c)$LIN-2.
\end{proof}

Now we are finally ready to prove \Cref{theo:3colorlbmain}.

\begin{proof}[Proof of \Cref{theo:3colorlbmain}]
From \Cref{lem:e3linqunatumlb}, we know that the quantum query complexity of testing E$(3, c)$LIN-2 is $\Omega(N)$. In order to prove similar lower bound for testing $3$-colorability, we will again use a reduction approach. Given a pair of hard instances corresponding to testing E$(3, c)$LIN-2, we will apply the reduction $\psi$ mentioned in \Cref{lem:e3lin3colrreduction}. Similar to the classical setting, $\varphi$ will map the yes instances of E$(3, c)$LIN-2 to instances of $3$-colorable graphs and vice-versa. So, the quantum query lower bound of $\Omega(N)$ carries forward from E$(3, c)$LIN-2 to $3$-colorability. Thus, we conclude that $\Omega(N)$ quantum queries are necessary to test $3$-colorability in the bounded degree model.
\end{proof}

\subsection{Other maximal hard-to-test problems}

As we mentioned in the introduction, there are several other problems in the bounded degree graph model, which are maximally hard-to-test. Moreover, their lower bounds stem from similar ideas as the E$(3, c)$LIN-2 and $3$-colorability lower bounds, as mentioned in \cite{yoshida2010query, goldreich2020testing}. Following the same path as in the previous subsection, we also obtain $\Omega(N)$ quantum query lower bounds for all these problems. For brevity, we only present the theorem statements below and omit their proofs.

\begin{theorem}[Hamiltonian Path/Cycle]
Given quantum query access to an unknown undirected (directed) $d$-bounded degree $N$-vertex graph $G$ for some integer $d$, and a parameter $\eps \in (0,1)$, in order to distinguish if $G$ has an undirected (directed) Hamiltonian path/cycle or $\eps$-far from having an undirected (directed) Hamiltonian path/cycle,  $\Omega(N)$ quantum queries are necessary. 
\end{theorem}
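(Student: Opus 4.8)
The plan is to follow the same template used for $3$-colorability: reduce the problem from testing E$(3,c)$LIN-2, which \Cref{lem:e3linqunatumlb} already shows requires $\Omega(N)$ quantum queries, via a gap-preserving reduction that is local enough to convert each query in the Hamiltonicity instance into a bounded number of queries in the E$(3,c)$LIN-2 instance. First I would invoke the classical reductions constructed in \cite{yoshida2010query} (and surveyed in \cite{goldreich2020testing}): there it is shown that there is a reduction $\varphi'$ from E$(3,c)$LIN-2 instances (equivalently, from $3$-colorability instances) to bounded-degree graphs such that satisfiable instances map to graphs containing a Hamiltonian path/cycle, while instances that are $\Omega(1)$-far from satisfiable map to graphs that are $\Omega(1)$-far from having a Hamiltonian path/cycle. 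The key property I need from this reduction, beyond its correctness, is \emph{locality}: each edge of $\varphi'(x)$ should be determined by examining only $O(1)$ entries of $x$, and the degree of $\varphi'(x)$ should stay bounded by some constant $d$. This locality is exactly what lets a quantum query algorithm for Hamiltonicity be simulated by a quantum query algorithm for E$(3,c)$LIN-2 with only a constant-factor overhead in query count — the standard argument that a reduction computable with $O(1)$ queries per output bit preserves quantum query lower bounds up to constants.

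The steps, in order, are: (1) state the reduction $\varphi'$ from \cite{yoshida2010query}, emphasizing that it is gap-preserving (satisfiable $\to$ has Hamiltonian path/cycle; $(1/2-\alpha)$-far from satisfiable $\to$ $\eps'$-far from having one, for some constant $\eps'>0$ depending on $\alpha$) and that it maps degree-bounded inputs to degree-bounded graphs; (2) observe that the reduction is \emph{local}, so that querying any edge slot $(v,i)$ in $\varphi'(x)$ can be answered using $O(1)$ queries to $x$; (3) conclude that a $T$-query quantum tester for Hamiltonicity on $\varphi'(x)$ yields an $O(T)$-query quantum tester for E$(3,c)$LIN-2 on $x$; (4) combine with \Cref{lem:e3linqunatumlb} to get $T=\Omega(N)$. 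The directed case is handled identically, using the directed version of the reduction (the reductions in \cite{yoshida2010query} already produce directed gadgets for the directed Hamiltonian path/cycle variants); the out-degree bound on the constructed digraph is again a constant, so the argument goes through verbatim in the unidirectional (and hence also bidirectional) model.

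The main obstacle is verifying the locality claim for the specific reductions in \cite{yoshida2010query}, since those papers were written with classical reductions in mind and do not explicitly phrase things in terms of "number of input queries per output bit". Concretely, I would need to check that the Hamiltonicity gadget for each equation/vertex is built from a constant-size piece of the input and that adjacent gadgets only overlap in a way controlled by the constant right-hand bound $c$ on variable occurrences; this ensures both the constant degree and that any single adjacency query touches $O(1)$ input entries. A secondary, minor point is bookkeeping the constants: the farness parameter shrinks by a constant factor through each reduction ($3$-colorability $\to$ Hamiltonicity), so I should state the theorem for some fixed small $\eps$ (or quantify it as "there exists $\eps_0>0$ such that..."), exactly as the other maximal-hardness results in this section are phrased. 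Given these checks, the $\Omega(N)$ quantum lower bound for testing Hamiltonian Path/Cycle in both the undirected bounded-degree and the directed bounded-out-degree models follows.
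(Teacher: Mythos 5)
Your proposal matches the paper's intended argument: the paper explicitly omits the proof, stating only that the result follows "the same path as in the previous subsection," i.e., composing the quantum $\Omega(N)$ lower bound for E$(3,c)$LIN-2 (\Cref{lem:e3linqunatumlb}) with the gap-preserving reductions to Hamiltonicity from \cite{yoshida2010query}, which is exactly what you do. Your explicit attention to the locality of the reduction (each adjacency query in the reduced graph costing $O(1)$ queries to the original instance) is a point the paper glosses over but which is genuinely needed for the query lower bound to transfer, so your write-up is, if anything, more complete than the paper's.
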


\begin{theorem}[Approximating Independent Set/Vertex Cover size]
Given query access to an unknown undirected $d$-bounded degree $N$-vertex graph $G$ for some integer $d$, and a parameter $\eps \in (0,1)$, approximating the independent set size/vertex cover of $G$, $\Omega(N)$ quantum queries are necessary.
\end{theorem}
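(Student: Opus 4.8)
The plan is to deduce this from the quantum hardness of $3$-colorability established in \Cref{theo:3colorlbmain}, via a query-local, gap-preserving reduction of the kind used classically in \cite{yoshida2010query}. The key point is that a quantum query lower bound transfers through any reduction in which a single query to the target instance can be answered using $O(1)$ queries to the source instance; so it suffices to turn a $3$-colorability instance $G$ on $N$ vertices into a bounded-degree graph $H$ whose maximum independent set size (resp.\ minimum vertex cover size) falls in one interval when $G$ is $3$-colorable and in a disjoint interval, separated by a constant fraction of $|V(H)|$, when $G$ is $\eps$-far from $3$-colorable.

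Concretely, I would set $V(H)=V(G)\times\{1,2,3\}$, so $|V(H)|=3N$, join $(u,c)$ to $(u,c')$ for every vertex $u$ of $G$ and $c\neq c'$ (a triangle per original vertex), and join $(u,c)$ to $(v,c)$ for every edge $\{u,v\}$ of $G$ and every color $c$. Then $H$ has maximum degree at most $d+2$, and a neighbor query to $(u,c)$ in $H$ is answered using at most $d$ neighbor queries to $u$ in $G$, so the reduction is query-local and degree-preserving. Independent sets of $H$ are exactly proper partial $3$-colorings of $G$ (at most one color per vertex, never the same color on two adjacent vertices), hence $\alpha(H)$ equals the maximum number of vertices of $G$ that admit a proper partial $3$-coloring. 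If $G$ is $3$-colorable then $\alpha(H)=N$; conversely, if $\alpha(H)>(1-\eps)N$, fix such a partial coloring, color the fewer than $\eps N$ remaining vertices arbitrarily, and delete the at most $\eps d N$ edges incident to them that became monochromatic, yielding a proper coloring, so $G$ is $\eps$-close to $3$-colorable. Taking the contrapositive, $\eps$-farness of $G$ forces $\alpha(H)\le(1-\eps)N$. Thus estimating $\alpha(H)$ to within additive error $\eps N/3<(\eps/3)|V(H)|$ decides $3$-colorability of $G$, and by \Cref{theo:3colorlbmain} requires $\Omega(N)=\Omega(|V(H)|)$ quantum queries; the same conclusion holds for any multiplicative approximation factor strictly below $1/(1-\eps)$. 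For vertex cover, use $\tau(H)=|V(H)|-\alpha(H)=3N-\alpha(H)$, so that $3$-colorability gives $\tau(H)=2N$ while $\eps$-farness gives $\tau(H)\ge(2+\eps)N$, and the identical argument applies.

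I do not expect a genuine obstacle here: the substantive content, namely the $\Omega(N)$ quantum hardness of testing $3$-colorability, is already supplied by \Cref{theo:3colorlbmain}, and what remains is the routine bookkeeping that the reduction above keeps the degree bounded, keeps each target query simulable by $O(1)$ source queries, and turns ``$\eps$-far from $3$-colorable'' into a constant-fraction drop in $\alpha(H)$ — the last being exactly the edge-counting argument sketched above. The only mild care needed is to match up the farness parameter of the target problem (a constant fraction such as $\eps/6$) with the gap produced by the reduction, and to state the Hamiltonian-path/cycle analogue using the corresponding reduction from \cite{yoshida2010query} in the same query-local fashion.
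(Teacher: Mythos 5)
Your proposal is correct and follows essentially the same route as the paper, which omits the proof and simply invokes the query-local, gap-preserving reductions from 3-colorability of \cite{yoshida2010query} together with \Cref{theo:3colorlbmain}; your $G\times K_3$ construction is a concrete and valid instantiation of exactly that reduction, with the degree bound, $O(1)$-query simulability, and the $\eps$-far $\Rightarrow \alpha(H)\le(1-\eps)N$ edge-counting step all checking out.
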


\section{Conclusion}
We provided new algorithms and lower bounds in the so far unexplored field of quantum property testing of directed bounded degree graphs. On the way, we revisited the well-known problem of collision finding in a new, property testing setting.

In particular, we used the dual polynomial method to obtain the first step for adapting the proportional moments technique of \cite{raskhodnikova2009strong} (for proving randomized lower bounds) to the quantum setting.  
Indeed, the classical lower bounds of \cite{hellweg2013property} and \cite{peng2023optimal} for testing subgraph-freeness use
the results of \cite{raskhodnikova2009strong} for a collision-related problem. Recently, the authors in \cite{ambainis2016efficient} stated it as an open question if one could use a variant of the proportional moments technique of \cite{raskhodnikova2009strong} for proving better quantum lower bounds. This remains an interesting open question, 
but we hope that this work will serve as a step towards obtaining this new quantum lower bound technique.

\bibliographystyle{alpha}

\bibliography{arxiv_version_3/reference_arxiv}

\begin{thebibliography}{}

\end{thebibliography}


\newcommand{\etalchar}[1]{$^{#1}$}
\begin{thebibliography}{BDCG{\etalchar{+}}20}

\bibitem[AAI{\etalchar{+}}16]{AaronsonAIKS16}
Scott Aaronson, Andris Ambainis, Janis Iraids, Martins Kokainis, and Juris Smotrovs.
\newblock Polynomials, quantum query complexity, and {G}rothendieck's inequality.
\newblock In {\em Proceedings of the 31st Conference on Computational Complexity (CCC)}, volume~50, pages 25:1--25:19. Schloss Dagstuhl -- Leibniz-Zentrum f{\"u}r Informatik, 2016.

\bibitem[Aar02]{Aaronson02}
Scott Aaronson.
\newblock Quantum lower bound for the collision problem.
\newblock In {\em Proceedings of 34th Annual {ACM} Symposium on Theory of Computing (STOC)}, pages 635--642. Association for Computing Machinery, 2002.

\bibitem[ABRW16]{ambainis2016efficient}
Andris Ambainis, Aleksandrs Belovs, Oded Regev, and Ronald~de Wolf.
\newblock Efficient quantum algorithms for (gapped) group testing and junta testing.
\newblock In {\em Proceedings of the 27th annual Symposium on Discrete Algorithms (SODA)}, pages 903--922, 2016.

\bibitem[ACL11]{ambainis2011quantum}
Andris Ambainis, Andrew~M Childs, and Yi-Kai Liu.
\newblock Quantum property testing for bounded-degree graphs.
\newblock In {\em Proceedings of the 14th International Workshop and 15th International Conference on Approximation, Randomization, and Combinatorial Optimization: Algorithms and Techniques (APPROX-RANDOM)}, pages 365--376. Springer Berlin Heidelberg, 2011.

\bibitem[ADW22]{apers2022quantum}
Simon Apers and Ronald De~Wolf.
\newblock Quantum speedup for graph sparsification, cut approximation, and {L}aplacian solving.
\newblock {\em SIAM Journal on Computing}, 51(6):1703--1742, 2022.

\bibitem[Amb04]{Ambainis_search}
A.~Ambainis.
\newblock Quantum search algorithms.
\newblock {\em SIGACT News}, 35(2):22–35, 2004.

\bibitem[Amb05]{ambainis_poly}
Andris Ambainis.
\newblock Polynomial degree and lower bounds in quantum complexity: Collision and element distinctness with small range.
\newblock {\em Theory of Computing}, 1(3):37--46, 2005.

\bibitem[AMSS25]{AMSS25_random}
Simon Apers, Fr{\'{e}}d{\'{e}}ric Magniez, Sayantan Sen, and D{\'{a}}niel Szab{\'{o}}.
\newblock Quantum property testing in sparse directed graphs.
\newblock In {\em Approximation, Randomization, and Combinatorial Optimization. Algorithms and Techniques (APPROX/RANDOM)}, volume 353 of {\em LIPIcs}, pages 32:1--32:24. Schloss Dagstuhl - Leibniz-Zentrum f{\"{u}}r Informatik, 2025.

\bibitem[AS04]{AaronsonShi}
Scott Aaronson and Yaoyun Shi.
\newblock Quantum lower bounds for the collision and the element distinctness problems.
\newblock {\em Journal of the ACM}, 51(4):595–605, 2004.

\bibitem[AS19]{DBLP:journals/qic/ApersS19}
Simon Apers and Alain Sarlette.
\newblock Quantum fast-forwarding: Markov chains and graph property testing.
\newblock {\em Quantum Information \& Computation}, 19(3–4):181--213, 2019.

\bibitem[BBC{\etalchar{+}}01]{beals_poly}
Robert Beals, Harry Buhrman, Richard Cleve, Michele Mosca, and Ronald de~Wolf.
\newblock Quantum lower bounds by polynomials.
\newblock {\em Journal of the ACM}, 48(4):778–797, 2001.

\bibitem[BBHT99]{tight_lb_grover}
Michel Boyer, Gilles Brassard, Peter Høyer, and Alain Tapp.
\newblock {\em Tight Bounds on Quantum Searching}, chapter~10, pages 187--199.
\newblock John Wiley \& Sons, Ltd, 1999.

\bibitem[BDCG{\etalchar{+}}20]{ben2020symmetries}
Shalev Ben-David, Andrew~M Childs, Andr{\'a}s Gily{\'e}n, William Kretschmer, Supartha Podder, and Daochen Wang.
\newblock Symmetries, graph properties, and quantum speedups.
\newblock In {\em 61st Annual Symposium on Foundations of Computer Science (FOCS)}, pages 649--660. IEEE, 2020.

\bibitem[BFNR08]{buhrman2008quantum}
Harry Buhrman, Lance Fortnow, Ilan Newman, and Hein R{\"o}hrig.
\newblock Quantum property testing.
\newblock {\em SIAM Journal on Computing}, 37(5):1387--1400, 2008.

\bibitem[BHT98]{BHT}
Gilles Brassard, Peter H{\o}yer, and Alain Tapp.
\newblock Quantum cryptanalysis of hash and claw-free functions.
\newblock In {\em Proceedings of the 3rd Latin American Symposium on Theoretical Informatics (LATIN)}, pages 163--169. Springer Berlin Heidelberg, 1998.

\bibitem[BKT20]{polynomial_strikes_back}
Mark Bun, Robin Kothari, and Justin Thaler.
\newblock The polynomial method strikes back: Tight quantum query bounds via dual polynomials.
\newblock {\em Theory of Computing}, 16(10):1--71, 2020.

\bibitem[BOT02]{bogdanov2002lower}
Andrej Bogdanov, Kenji Obata, and Luca Trevisan.
\newblock A lower bound for testing 3-colorability in bounded-degree graphs.
\newblock In {\em 43rd Annual Symposium on Foundations of Computer Science (FOCS)}, pages 93--102. IEEE Computer Society, 2002.

\bibitem[BR02]{bender2002testing}
Michael~A Bender and Dana Ron.
\newblock Testing properties of directed graphs: acyclicity and connectivity.
\newblock {\em Random Structures \& Algorithms}, 20(2):184--205, 2002.

\bibitem[BT20]{Bun_Thaler_20}
Mark Bun and Justin Thaler.
\newblock A nearly optimal lower bound on the approximate degree of {AC}$^0$.
\newblock {\em SIAM Journal on Computing}, 49(4):FOCS17--59--FOCS17--96, 2020.

\bibitem[BY22]{bhattacharyya2022property}
Arnab Bhattacharyya and Yuichi Yoshida.
\newblock {\em Property Testing - Problems and Techniques}.
\newblock Springer Singapore, 1 edition, 2022.

\bibitem[CPS16]{czumaj2016relating}
Artur Czumaj, Pan Peng, and Christian Sohler.
\newblock Relating two property testing models for bounded degree directed graphs.
\newblock In {\em Proceedings of the 48th annual Symposium on Theory of Computing (STOC)}, pages 1033--1045. Association for Computing Machinery, 2016.

\bibitem[CS10]{Czumaj2010}
Artur Czumaj and Christian Sohler.
\newblock Sublinear-time algorithms.
\newblock In Oded Goldreich, editor, {\em Property Testing: Current Research and Surveys}, pages 41--64. Springer Berlin Heidelberg, 2010.

\bibitem[CY19]{chen2019testability}
Hubie Chen and Yuichi Yoshida.
\newblock Testability of homomorphism inadmissibility: Property testing meets database theory.
\newblock In {\em Proceedings of the 38th ACM SIGMOD-SIGACT-SIGAI Symposium on Principles of Database Systems (PODS)}, pages 365--382. Association for Computing Machinery, 2019.

\bibitem[Fis01]{Fischer01}
Eldar Fischer.
\newblock The art of uninformed decisions.
\newblock {\em Bulletin of the {EATCS}}, 75:97, 2001.

\bibitem[FNY{\etalchar{+}}20]{forster2020computing}
Sebastian Forster, Danupon Nanongkai, Liu Yang, Thatchaphol Saranurak, and Sorrachai Yingchareonthawornchai.
\newblock Computing and testing small connectivity in near-linear time and queries via fast local cut algorithms.
\newblock In {\em Proceedings of the 14th Annual ACM-SIAM Symposium on Discrete Algorithms (SODA)}, pages 2046--2065. Society for Industrial and Applied Mathematics, 2020.

\bibitem[GGR98]{GGR98}
Oded Goldreich, Shari Goldwasser, and Dana Ron.
\newblock Property testing and its connection to learning and approximation.
\newblock {\em Journal of the ACM}, 45(4):653–750, 1998.

\bibitem[Gol17]{goldreich2017introduction}
Oded Goldreich.
\newblock {\em Introduction to Property Testing}.
\newblock Cambridge University Press, 2017.

\bibitem[Gol25]{goldreich2020testing}
Oded Goldreich.
\newblock {\em On Testing Hamiltonicity in the Bounded Degree Graph Model}, pages 282--292.
\newblock Springer Nature Switzerland, 2025.

\bibitem[GR02]{goldreich2002property}
Oded Goldreich and Dana Ron.
\newblock Property testing in bounded degree graphs.
\newblock {\em Algorithmica}, 32(2):302--343, 2002.

\bibitem[Gro96]{Grover}
Lov~K. Grover.
\newblock A fast quantum mechanical algorithm for database search.
\newblock In {\em Proceedings of the 28th Annual ACM Symposium on Theory of Computing (STOC)}, page 212–219. Association for Computing Machinery, 1996.

\bibitem[HLM17]{harrow2017sequential}
Aram~W Harrow, Cedric Yen-Yu Lin, and Ashley Montanaro.
\newblock Sequential measurements, disturbance and property testing.
\newblock In {\em Proceedings of the 28th Annual Symposium on Discrete Algorithms (SODA)}, pages 1598--1611. Society for Industrial and Applied Mathematics, 2017.

\bibitem[HS12]{hellweg2013property}
Frank Hellweg and Christian Sohler.
\newblock Property testing in sparse directed graphs: Strong connectivity and subgraph-freeness.
\newblock In {\em 20th Annual European Symposium on Algorithms (ESA)}, pages 599--610. Springer Berlin Heidelberg, 2012.

\bibitem[Lee09]{Lee09}
Troy Lee.
\newblock A note on the sign degree of formulas.
\newblock {\em arXiv:0909.4607}, 2009.

\bibitem[LZ19]{zhandry_collision}
Qipeng Liu and Mark Zhandry.
\newblock On finding quantum multi-collisions.
\newblock In {\em Advances in Cryptology (EUROCRYPT)}, pages 189--218. Springer International Publishing, 2019.

\bibitem[MdW16]{montanaro2016survey}
Ashley Montanaro and Ronald de~Wolf.
\newblock {\em A Survey of Quantum Property Testing}.
\newblock Number~7 in Graduate Surveys. Theory of Computing Library, 2016.

\bibitem[MTZ20]{improved_kDist}
Nikhil~S. Mande, Justin Thaler, and Shuchen Zhu.
\newblock Improved approximate degree bounds for k-distinctness.
\newblock In {\em Proceedings of the 15th Conference on the Theory of Quantum Computation, Communication and Cryptography (TQC)}, volume 158, pages 2:1--2:22. Schloss Dagstuhl -- Leibniz-Zentrum f{\"u}r Informatik, 2020.

\bibitem[OR11]{orenstein2011testing}
Yaron Orenstein and Dana Ron.
\newblock Testing {E}ulerianity and connectivity in directed sparse graphs.
\newblock {\em Theoretical Computer Science}, 412(45):6390--6408, 2011.

\bibitem[PW23]{peng2023optimal}
Pan Peng and Yuyang Wang.
\newblock An optimal separation between two property testing models for bounded degree directed graphs.
\newblock In {\em 50th International Colloquium on Automata, Languages, and Programming (ICALP)}, volume 261, pages 96:1--96:16. Schloss Dagstuhl -- Leibniz-Zentrum f{\"u}r Informatik, 2023.

\bibitem[Ron10]{DBLP:journals/fttcs/Ron09}
Dana Ron.
\newblock Algorithmic and analysis techniques in property testing.
\newblock {\em Foundations and Trends in Theoretical Computer Science}, 5(2):73--205, 2010.

\bibitem[RRSS09]{raskhodnikova2009strong}
Sofya Raskhodnikova, Dana Ron, Amir Shpilka, and Adam Smith.
\newblock Strong lower bounds for approximating distribution support size and the distinct elements problem.
\newblock {\em SIAM Journal on Computing}, 39(3):813--842, 2009.

\bibitem[RS11]{DBLP:journals/siamdm/RubinfeldS11}
Ronitt Rubinfeld and Asaf Shapira.
\newblock Sublinear time algorithms.
\newblock {\em SIAM Journal on Discrete Mathematics}, 25(4):1562--1588, 2011.

\bibitem[She11]{Sherstov11}
Alexander~A. Sherstov.
\newblock The pattern matrix method.
\newblock {\em SIAM Journal on Computing (SICOMP)}, 40(6):1969--2000, 2011.

\bibitem[She13]{Sher13}
Alexander~A. Sherstov.
\newblock The intersection of two halfspaces has high threshold degree.
\newblock {\em SIAM Journal on Computing}, 42(6):2329--2374, 2013.

\bibitem[SZ09]{ShiZ09}
Yaoyun Shi and Yufan Zhu.
\newblock Quantum communication complexity of block-composed functions.
\newblock {\em Quantum Information \& Computation}, 9(5):444--460, 2009.

\bibitem[YI10a]{yoshida2010query}
Yuichi Yoshida and Hiro Ito.
\newblock Query-number preserving reductions and linear lower bounds for testing.
\newblock {\em IEICE transactions on Information and Systems}, E93.D(2):233--240, 2010.

\bibitem[YI10b]{yoshida2010testing}
Yuichi Yoshida and Hiro Ito.
\newblock Testing k-edge-connectivity of digraphs.
\newblock {\em Journal of Systems Science and Complexity}, 23(1):91--101, 2010.

\end{thebibliography}

\appendix

\section{Deferred material from \Cref{sec:quantum_lb}} \label{appendix:proofs}

\subsection{Dual polynomial for THR} \label{app:psi}
\begin{definition}
    Let $M\in\mathbb{N}$ and $\alpha,\beta>0$. We say that a function $\omega:[M]_0\to\mathbb{R}$ satisfies the $(\alpha,\beta)$-decay condition if $\sum_{t\in[M]_0}\omega(t)=0$, $\sum_{t\in[M]_0}|\omega(t)|=1$ and $|\omega(t)|\le \alpha e^{-\beta t}/t^2$.
\end{definition}

In \cite[Section 5.1]{polynomial_strikes_back} the authors define a dual polynomial $\psi$ of $\thr_N^k$ in the following way.
Let $k,N\in\mathbb{N}$, and $T$ an integer such that $k\le T\le N$.
Let $c=2k \lceil N^{1/k}\rceil$ and $m=\lfloor \sqrt{T/c}\rfloor$.
Define set $S=\{1,2,\dots,k\}\cup\{ci^2:0\le i\le m\}$.
Define a univariate polynomial $$\omega(t)=\frac{(-1)^{t+T-m+1}}{T!}\binom{T}{t}\prod_{r\in[T]_0\setminus S}(t-r).$$
Then let $\psi:\{-1,1\}^N\to\mathbb{R}$ be $\psi(x)=\omega(|x|)/\binom{N}{|x|}$ for $x\in H_{\le T}^N$ and $\psi(x)=0$ otherwise.

They show that $\psi$ and $\omega$ have the following properties.

\begin{proposition}{\cite[Proposition 5.4]{polynomial_strikes_back}} \label{pro:psi_properties_full}
    Let $\omega$ and $\psi$ be the polynomials defined above. Then the following are true.
    \begin{enumerate}
        \item $\sum_{x\in D_+}|\psi(x)|\le\frac{1}{48N}$;
        \item $\sum_{x\in D_-}|\psi(x)|\le\frac{1}{2}-\frac{2}{4^k}$;
        \item $\|\psi\|_1=1$;
        \item $\textnormal{phd}(\psi)\ge c_1\sqrt{k^{-1}TN^{-1/k}}$;
        \item $\omega$ satisfies the $(\alpha,\beta)$-decay condition with $\alpha=(2k)^k$ and $\beta=c_2/\sqrt{kTN^{1/k}}$.
    \end{enumerate}
\end{proposition}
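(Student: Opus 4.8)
The plan is to follow \cite{polynomial_strikes_back}, since this is exactly their Proposition~5.4; we would reproduce the argument only as far as is needed to keep the present paper self-contained, and cite that work for the routine estimates. The first point is structural: the factor $\prod_{r\in[T]_0\setminus S}(t-r)$ vanishes for every $t\in[T]_0\setminus S$, and $\binom{T}{t}=0$ for $t>T$, so $\omega$ is supported on $S\subseteq[T]_0$ and therefore $\psi(x)=0$ whenever $x\notin H_{\le T}^N$. Next I would rewrite $\omega$ on its support in a transparent form: combining $\binom{T}{t}/T!=1/(t!(T-t)!)$ with $\prod_{r\in[T]_0,\,r\ne t}(t-r)=(-1)^{T-t}\,t!\,(T-t)!$, one checks for $t\in S$ that $\omega(t)=\frac{(-1)^{m+1}}{\prod_{r\in S,\,r\ne t}(t-r)}$. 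This exhibits $\omega$, up to a global sign, as the divided-difference (Lagrange) dual witness of the node set $S$, so in particular $\omega$ alternates in sign as $t$ ranges over $S$ in increasing order; note also that the two blocks defining $S$ overlap only in $0$ (because $c=2k\lceil N^{1/k}\rceil>k$), so $|S|=k+m+1$.

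From this closed form, property~(4) is almost immediate: the divided-difference identity $\sum_{t\in S}\frac{p(t)}{\prod_{r\in S,\,r\ne t}(t-r)}=0$ for every univariate polynomial $p$ with $\deg p<|S|-1$ gives $\sum_{t}\omega(t)p(t)=0$ for all $\deg p<|S|-1=k+m$, and the symmetrization/lifting correspondence used in \cite{polynomial_strikes_back} (relating univariate orthogonality of $\omega$ to the pure high degree of $x\mapsto\omega(|x|)/\binom{N}{|x|}$) then yields $\phd(\psi)\ge k+m$; since $c\le 4kN^{1/k}$ and $m=\lfloor\sqrt{T/c}\rfloor$, this is $\Omega(\sqrt{T/(kN^{1/k})})$, i.e.\ property~(4) with a suitable constant $c_1$. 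The remaining properties all reduce to pointwise estimates on $|\omega(t)|=\frac{1}{\prod_{r\in S,\,r\ne t}|t-r|}$, which is where the precise choice of $S$ is used: the arithmetic block $\{1,\dots,k\}$ governs the behaviour near $0$, while the quadratically spaced nodes $\{ci^2:0\le i\le m\}$ are engineered so that, for $t\le T$, the product $\prod_{i}|t-ci^2|$ grows fast enough to force $|\omega(t)|\le\alpha e^{-\beta t}/t^2$ with $\alpha=(2k)^k$ and $\beta=c_2/\sqrt{kTN^{1/k}}$ (property~(5)). Given these pointwise bounds, I would deduce $\|\psi\|_1=\sum_{t\in S}|\omega(t)|=1$ (property~(3)) and the one-sided mass bounds (properties~(1) and~(2))---namely that the positive mass of $\psi$ on Hamming weights $\ge k$ is at most $1/(48N)$ and the negative mass on weights $<k$ is at most $1/2-2/4^k$---by summing the estimates while tracking the alternating sign pattern of $\omega$ along $S$.

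The main obstacle is entirely the quantitative bookkeeping in this last step: verifying the decay condition with the stated constants, and then propagating it to the sharp one-sided mass bounds, requires carefully estimating $\prod_{r\in S,\,r\ne t}|t-r|$ in each regime of $t$ (small $t$, $t$ near some $ci^2$, and $t$ strictly between consecutive quadratic nodes). This is precisely the computation carried out in \cite[Section~5.1]{polynomial_strikes_back}, which we would invoke directly rather than redo.
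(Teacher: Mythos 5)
The paper gives no proof of this proposition: it is quoted verbatim from \cite[Proposition 5.4]{polynomial_strikes_back}, so your plan of reconstructing the Lagrange/divided-difference structure of $\omega$ and deferring the quantitative estimates to \cite[Section 5.1]{polynomial_strikes_back} matches the paper's treatment (which simply cites that result). Two small slips worth noting: the blocks $\{1,\dots,k\}$ and $\{ci^2:0\le i\le m\}$ are in fact \emph{disjoint} (since $0\notin\{1,\dots,k\}$ and $c>k$), which is exactly why $|S|=k+m+1$ rather than $k+m$; and $\|\psi\|_1=1$ cannot be ``deduced from the pointwise bounds''---it holds because the dual witness is explicitly normalized in the source construction, a constant the transcription in the appendix suppresses.
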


Here $D_+$ and $D_-$ denote the set of false positives and that of false negatives respectively if $\psi$ is considered as a hypothesis for $\thr_N^k$, i.e. $D_+=\{x\in\{-1,1\}^N: \psi(x)>0, \thr_N^k(x)=-1\}$ and $D_-=\{x\in\{-1,1\}^N: \psi(x)<0, \thr_N^k(x)=1\}$. 

\subsection{Proof of \Cref{pro:zeta_merged}} \label{app:lemma_proof}
We start with two propositions from \cite{polynomial_strikes_back} that we are going to use.
The first one is about the properties of the dual block composition. 
\begin{proposition}{\cite[Proposition 2.20]{polynomial_strikes_back}} \label{pro:dual_compose_props}
    Let $\phi:\{-1,1\}^n\to\mathbb{R}$, $\psi:\{-1,1\}^m\to\mathbb{R}$.
    The dual block composition has the following properties.
    \begin{enumerate}
        \item If $\|\phi\|_1=1$, $\|\psi\|_1=1$ and $\braket{\phi,1^m}=0$ then $\|\phi\star \psi\|_1=1$.
        \item If $\textnormal{phd}(\phi)\ge \Delta$ and $\textnormal{phd}(\psi)\ge \Delta'$ then $\textnormal{phd}(\phi\star \psi)\ge \Delta\cdot \Delta'$.
    \end{enumerate}
\end{proposition}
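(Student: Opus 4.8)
The plan is to prove both items by substituting the multilinear expansion of $\phi$ into \Cref{def:dual_block_comp} and exploiting that the $n$ blocks $x_1,\dots,x_n$ sit on pairwise disjoint variable sets. For Item~1 I would compute $\|\phi\star\psi\|_1$ straight from the definition: taking absolute values gives $|(\phi\star\psi)(x)|=2^n|\phi(\sgn(\psi(x_1)),\dots,\sgn(\psi(x_n)))|\prod_{i\in[n]}|\psi(x_i)|$, and summing over $x=(x_1,\dots,x_n)$ while grouping terms by the sign pattern $z=(\sgn(\psi(x_1)),\dots,\sgn(\psi(x_n)))\in\{-1,1\}^n$, the product structure factors the sum as
$$\|\phi\star\psi\|_1=2^n\sum_{z\in\{-1,1\}^n}|\phi(z)|\prod_{i\in[n]}\Bigl(\sum_{x_i\,:\,\sgn(\psi(x_i))=z_i}|\psi(x_i)|\Bigr).$$
The inner sum equals $\psi^{+}:=\sum_{y\,:\,\psi(y)\ge 0}\psi(y)$ when $z_i=1$ and $\psi^{-}:=-\sum_{y\,:\,\psi(y)<0}\psi(y)$ when $z_i=-1$. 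Since $\psi^{+}+\psi^{-}=\|\psi\|_1=1$ and $\psi^{+}-\psi^{-}=\braket{\psi,1^m}=0$ (I read the balance hypothesis of Item~1 as a condition on the inner dual $\psi$, since this is what the computation needs and what makes the dimensions match), both equal $1/2$; hence every factor is $1/2$ and the whole expression collapses to $2^n\cdot 2^{-n}\sum_z|\phi(z)|=\|\phi\|_1=1$.

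For Item~2 I would expand $\phi(z)=\sum_{S\subseteq[n]}\widehat\phi(S)\prod_{i\in S}z_i$ and substitute into the definition; using $\sgn(\psi(x_i))\,|\psi(x_i)|=\psi(x_i)$ for every $x_i$ this yields
$$(\phi\star\psi)(x)=2^n\sum_{S\subseteq[n]}\widehat\phi(S)\,\Bigl(\prod_{i\in S}\psi(x_i)\Bigr)\Bigl(\prod_{i\notin S}|\psi(x_i)|\Bigr).$$
It then suffices to show that each summand with $\widehat\phi(S)\neq 0$ has pure high degree at least $\Delta\Delta'$, where $\Delta=\phd(\phi)$ and $\Delta'=\phd(\psi)$, since ``pure high degree $\geq t$'' is preserved under sums. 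Fix such an $S$: then $|S|\geq\Delta\geq 1$, so $S\neq\emptyset$; each $\psi(x_i)$ with $i\in S$, seen as a function of block $i$ alone, has all monomials of degree $\geq\Delta'$, and as the blocks $\{x_i:i\in S\}$ are variable-disjoint, $\prod_{i\in S}\psi(x_i)$ has all monomials of degree $\geq\Delta'|S|$. Multiplying by $\prod_{i\notin S}|\psi(x_i)|$, a function of the remaining disjoint variables, only appends variables to each monomial and so cannot push the degree below $\Delta'|S|\geq\Delta'\Delta$. This gives $\phd(\phi\star\psi)\geq\Delta\Delta'$.

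I do not expect a serious obstacle here; the only content is knowing to plug in the Fourier expansion and use block-disjointness. The one point that needs care is in Item~2: the ``junk'' factor $\prod_{i\notin S}|\psi(x_i)|$ has a nonzero constant term, so the degree bound would break if it shared variables with $\prod_{i\in S}\psi(x_i)$; it is exactly the disjointness of blocks, together with the fact that $S\neq\emptyset$ for every $S$ with $\widehat\phi(S)\neq 0$ (which is where $\phd(\phi)\geq 1$ is used), that rescues it. In Item~1 the only subtlety is bookkeeping around entries with $\psi(x_i)=0$, where the convention $\sgn(0)=1$ applies: these contribute weight $0$ to both $\psi^{+}$ and $\psi^{-}$, so the sign split stays consistent.
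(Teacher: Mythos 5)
Your proof is correct. Note that the paper itself offers no proof of this proposition — it is imported verbatim as \cite[Proposition 2.20]{polynomial_strikes_back} — so there is no in-paper argument to compare against; your derivation is a clean, self-contained justification of the cited fact and matches the standard one. Two small points in your favour: you correctly diagnosed that the balance hypothesis $\braket{\phi,1^m}=0$ must be read as $\braket{\psi,1^m}=0$ (the statement as printed does not typecheck, since $\phi$ lives on $\{-1,1\}^n$), and your handling of Item~2 — Fourier-expanding $\phi$, using $\sgn(\psi(x_i))\,|\psi(x_i)|=\psi(x_i)$, and invoking block-disjointness so that the factor $\prod_{i\notin S}|\psi(x_i)|$ cannot lower the monomial degree — is exactly the right mechanism, with the one caveat (which you noted) that the case $\Delta=0$ is vacuous and for $\Delta\ge 1$ the empty set never contributes.
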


The second one proves the existence of the final dual polynomial $\zeta$, that is close to the ``almost good'' block composition $\phi\star\psi$, given that some conditions are satisfied.
\begin{proposition}{\cite[Proposition 2.22]{polynomial_strikes_back}} \label{pro:final_zeta}
    Let $R\in\mathbb{N}$ sufficiently large and $M\le R$. Let $\phi:\{-1,1\}^R\to\mathbb{R}$ with $\|\phi\|_1=1$, and let $\omega:[M]_0\to\mathbb{R}$ satisfy the $(\alpha,\beta)$-decay condition with some $1\le\alpha \le R^2$ and $4\ln^2R/(\sqrt{\alpha}R)\le\beta\le1$. Let $N=\lceil20\sqrt{\alpha}\rceil R$ and $\psi:\{-1,1\}^N\to\mathbb{R}$ be defined as $\psi(x)=\omega(|x|)/\binom{N}{|x|}$. Let $\Delta<N$ be such that $\textnormal{phd}(\phi\star\psi)\ge \Delta$.
    Then there exist a $\Delta'\ge\beta\sqrt{\alpha}R/(4\ln^2R)$ and a function $\zeta:\{-1,1\}^{NR}\to\mathbb{R}$ such that
    \begin{enumerate}
        \item $\textnormal{phd}(\zeta)\ge\min\{\Delta,\Delta'\}$;
        \item $\|\zeta-\phi\star\psi\|_1\le2/9$;
        \item $\|\zeta\|_1=1$;
        \item $\forall x\in\{-1,1\}^{NR}$ with $|x|>N$ $\zeta(x)=0$.
    \end{enumerate}
\end{proposition}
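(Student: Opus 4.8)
This is \cite[Proposition~2.22]{polynomial_strikes_back}; the plan is as follows. Write $\Psi=\phi\star\psi$. The construction of $\zeta$ proceeds in two stages: first truncate $\Psi$ to Hamming weight at most $N$, then repair the pure high degree that the truncation destroys. As a preliminary, note $\|\Psi\|_1=1$: the $(\alpha,\beta)$-decay condition includes $\sum_t\omega(t)=0$, so $\psi$ is $L_1$-balanced (equal mass on $\{\psi>0\}$ and on $\{\psi<0\}$), and expanding the dual block composition then gives $\|\phi\star\psi\|_1=\|\phi\|_1=1$; the pure high degree $\phd(\Psi)\ge\Delta$ is given by hypothesis.

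\emph{Stage 1 (tail bound).} First I would show that $\tau:=\sum_{x\in\{-1,1\}^{NR}:\,|x|>N}|\Psi(x)|$ is at most a small constant. Applying \Cref{pro:like_5.5_5.6}(1) with $S=\{x:|x|>N\}$ and $\lambda=|\psi|$ (a genuine probability mass function since $\|\psi\|_1=1$),
$$\tau=\sum_{z\in\{-1,1\}^R}|\phi(z)|\cdot\Pr_{x\sim\lambda^{\otimes R}}\!\big[\,|x|>N \;\big|\; \sgn(\psi(x_i))=z_i\ \text{for all }i\,\big].$$
Under $\lambda$ the marginal law of $|x_i|$ is exactly $|\omega|$, which by $|\omega(t)|\le\alpha e^{-\beta t}/t^2$ together with $\sum_t|\omega(t)|=1$ is light-tailed with mean $O(\sqrt\alpha)$; conditioning on a fixed sign pattern rescales the per-block law only by a bounded factor. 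Hence $|x|=\sum_{i\in[R]}|x_i|$ is a sum of $R$ i.i.d.\ light-tailed terms of total mean $O(\sqrt\alpha R)$, and since $N=\lceil20\sqrt\alpha\rceil R$ is a large constant multiple of this mean, a Chernoff-type bound gives $\tau\le 1/18$ (the constant $20$ in $N$ is picked precisely so this goes through). Let $\Psi^{\le N}$ agree with $\Psi$ on $H_{\le N}^{NR}$ and vanish elsewhere; then $\|\Psi-\Psi^{\le N}\|_1=\tau$, $\|\Psi^{\le N}\|_1=1-\tau$, and $\Psi^{\le N}$ is supported on $H_{\le N}^{NR}$, but has no pure high degree.

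\emph{Stage 2 (degree repair).} Here I would follow the ``dampening'' argument of \cite{polynomial_strikes_back} (in the spirit of \cite{BT20}). Set $\Delta':=\beta\sqrt\alpha R/(4\ln^2R)$. The goal is a correction $E:\{-1,1\}^{NR}\to\mathbb{R}$ supported on $H_{\le N}^{NR}$, all of whose monomials have degree at least $\Delta'$, with $\|E\|_1\le 1/18$, such that $\Psi^{\le N}+E$ has pure high degree at least $\min\{\Delta,\Delta'\}$. Spelling out the requirement: for every polynomial $p$ with $\deg p<\min\{\Delta,\Delta'\}$ one needs $\sum_x p(x)E(x)=\sum_{x:\,|x|>N}p(x)\Psi(x)$ (using $\phd(\Psi)\ge\Delta$). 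So $E$ must be a low–Hamming-weight function reproducing, on all low-degree test polynomials, the moments of the high–Hamming-weight function $\Psi-\Psi^{\le N}$; this moment-matching problem is solvable with small $L_1$ cost because the functional $p\mapsto\sum_{|x|>N}p(x)\Psi(x)$ has small norm on polynomials of degree $\lesssim\beta\sqrt\alpha R/\ln^2R$ — the exponential factor $e^{-\beta t}$ in the decay condition suppresses the high–Hamming-weight mass of $\Psi$ even after multiplication by such a polynomial (this is exactly the regime $\Delta'\approx\beta N/\ln^2 N$, and the inequalities $4\ln^2R/(\sqrt\alpha R)\le\beta\le1$ are what make $\Delta'$ both positive and $<N$). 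Finally I would set $\zeta:=(\Psi^{\le N}+E)/\|\Psi^{\le N}+E\|_1$: then $\|\zeta\|_1=1$, $\zeta(x)=0$ for $|x|>N$ by construction, $\phd(\zeta)\ge\min\{\Delta,\Delta'\}$ since $\phd$ is unaffected by positive scaling, and $\|\zeta-\Psi\|_1\le 2(\tau+\|E\|_1)\le 2/9$ by the triangle inequality (using $\|\Psi^{\le N}\|_1=1-\tau$).

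The main obstacle is Stage 2: a plain truncation is useless because it kills pure high degree, so one must build $E$ achieving \emph{all three} of support in $H_{\le N}^{NR}$, pure high degree $\Delta'$, and small $L_1$ norm simultaneously, which forces using the precise shape of the $(\alpha,\beta)$-decay — both the polynomial factor $1/t^2$ and the exponential $e^{-\beta t}$ — rather than a mere tail estimate. A secondary technical point is pinning down the constant in Stage 1: one must verify that conditioning on the block sign pattern keeps the per-block mean $O(\sqrt\alpha)$, so that $N=\lceil20\sqrt\alpha\rceil R$ genuinely dominates the mean of $|x|$.
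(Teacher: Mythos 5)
You should first note a mismatch with the paper itself: the paper does not prove this statement at all. It is quoted verbatim as \cite[Proposition~2.22]{polynomial_strikes_back} and used as a black box inside the proof of \Cref{pro:zeta_merged}, so there is no in-paper argument to compare yours against; any assessment has to be against the known proof in \cite{polynomial_strikes_back} (which goes back to the Razborov--Sherstov technique, also used in \cite{BT20}).

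Measured against that, your two-stage outline (truncate $\phi\star\psi$ to $H_{\le N}^{NR}$, then add a low-Hamming-weight correction $E$ matching all low-degree moments of the removed high-weight part) is the right general strategy, but as a proof it has two genuine gaps. First, Stage~2 is the entire technical content of the proposition and you only assert it: the sentence ``this moment-matching problem is solvable with small $L_1$ cost because the functional has small norm'' is not an argument. The known proof needs a concrete dual-certificate lemma (Razborov--Sherstov): for each point $y$ with $|y|>N$ there is a function supported on low Hamming weight that agrees with $\delta_y$ on all moments of degree $<\Delta'$ and has an \emph{explicit} $\ell_1$-norm bound growing with $\Delta'$ and with $|y|$; one then sums these costs against the mass of $\phi\star\psi$ stratified by Hamming weight, and the trade-off between the growth of the correction cost in $|y|$ and the decay of that mass is exactly what fixes $\Delta'\approx\beta\sqrt{\alpha}R/(4\ln^2 R)$ and uses the full range $4\ln^2R/(\sqrt\alpha R)\le\beta\le1$. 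Without this lemma and its bookkeeping, items 1 and 2 of the conclusion are unproved. Second, your Stage~1 justification is not sound as stated: the $(\alpha,\beta)$-decay condition only gives the pointwise bound $|\omega(t)|\le\alpha e^{-\beta t}/t^2$, and this does \emph{not} imply that the per-block weight has mean $O(\sqrt\alpha)$ (it is compatible with a mean of order $\alpha$ times a logarithmic factor, far above $20\sqrt\alpha$ in the stated parameter range), so ``$N$ is a large constant multiple of the mean, apply Chernoff'' does not go through as written. The tail control in the actual proof exploits the exponential factor $e^{-\beta t}$ through a moment-generating-function / weight-stratified estimate and is interleaved with the correction-cost accounting of Stage~2, rather than being a single crude bound $\tau\le 1/18$ followed by an independent repair step. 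The normalization step at the end ($\zeta=(\Psi^{\le N}+E)/\|\Psi^{\le N}+E\|_1$ and the triangle-inequality bound $\|\zeta-\Psi\|_1\le 2/9$) is fine once Stages 1--2 are in place.
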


We now restate \Cref{pro:zeta_merged} before proving it.
\lemmaun*

\begin{proof} 
With the construction of $\psi$ described in \Cref{app:psi}, \Cref{pro:psi_properties_full} (with $T=N$) ensures that Item 1 is satisfied.
This way, we know that $\|\psi\|_1=1$, and that $\phd(\psi)\ge c_1\sqrt{k^{-1}N^{1-1/k}}$.
From \Cref{pro:phi} we know that $\|\phi\|_1=1$ and $\phd(\phi)\ge 1$.
Using item 1 of \Cref{pro:dual_compose_props} we obtain $\|\phi\star\psi\|_1=1$, and using Item 2 we get $\phd(\phi\star\psi)\ge c_1\sqrt{k^{-1}N^{1-1/k}}$.

From \Cref{pro:psi_properties_full} we know that the function $\omega$ that is used to define $\psi$ satisfies the $(\alpha,\beta)$-decay condition for some constant $\alpha=(2k)^k$ and $\beta=c_2/\sqrt{kN^{1+1/k}}$.

This way, we can use \Cref{pro:final_zeta} to obtain the function $\zeta$ we wanted for Item 2.
Indeed, our functions $\psi$ and $\phi$ satisfy all the conditions of the lemma with pure high degree lower bounded by $\Delta=c_1\sqrt{k^{-1}N^{1-1/k}}$; and with our parameters $\alpha$, $\beta$ we obtain $\Delta'=c_2 (2k)^{k/2} R/(4\ln^2(R) \sqrt{kN^{1+1/k}}) \in\Omega(\sqrt{N^{1-1/k}}/\ln^2 N)$.
\end{proof}

\subsection{Proof of \Cref{pro:like_5.5_5.6}} \label{app:pro_proof}
Let us restate the proposition that we are going to prove.
\propun*

\begin{proof}
Remember that $\lambda$ denotes the probability mass function $\lambda(u)=|\psi(u)|$ for $u\in\{-1,1\}^N$. We will need the following claim.
\begin{claim} \label{claim:psi_plus_minus}
$$\Pr_{u\sim \lambda}[\psi(u)>0]=\Pr_{u\sim \lambda}[\psi(u)<0]=\frac{1}{2}.$$
\end{claim}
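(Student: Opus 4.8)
\textbf{Proof proposal for \Cref{claim:psi_plus_minus}.}

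The plan is to simply unwind the definitions and use the two hypotheses on $\psi$, namely $\|\psi\|_1=1$ and $\sum_{u\in\{-1,1\}^N}\psi(u)=0$. Write $P^+=\Pr_{u\sim\lambda}[\psi(u)>0]$ and $P^-=\Pr_{u\sim\lambda}[\psi(u)<0]$. Since $\lambda(u)=|\psi(u)|$, and on the event $\{\psi(u)>0\}$ we have $|\psi(u)|=\psi(u)$, it follows that $P^+=\sum_{u:\psi(u)>0}\psi(u)$; similarly $P^-=\sum_{u:\psi(u)<0}|\psi(u)|=-\sum_{u:\psi(u)<0}\psi(u)$. Inputs with $\psi(u)=0$ contribute nothing in either case.

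First I would use $\sum_u\psi(u)=0$. Splitting the sum according to the sign of $\psi(u)$ gives $\sum_{u:\psi(u)>0}\psi(u)+\sum_{u:\psi(u)<0}\psi(u)=0$, i.e. $P^+-P^-=0$, so $P^+=P^-$. Next I would use $\|\psi\|_1=1$: splitting $\sum_u|\psi(u)|$ the same way yields $P^++P^-=1$. Combining the two identities gives $P^+=P^-=1/2$, which is exactly the claim.

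There is no real obstacle here; this is a one-line consequence of the normalization $\|\psi\|_1=1$ and the pure-high-degree-induced balance condition $\sum_u\psi(u)=0$ (the latter being the $\phd(\psi)\ge 1$ property, which holds since $\phd(\psi)\ge c_1\sqrt{k^{-1}N^{1-1/k}}\ge 1$). The claim will then feed into the rest of the proof of \Cref{pro:like_5.5_5.6}, where it is used to identify the product distribution $\lambda^{\otimes R}$ conditioned on the sign pattern $(\sgn(\psi(x_1)),\dots,\sgn(\psi(x_R)))$ with a uniform choice of that pattern weighted by $2^{-R}$, matching the factor $2^n$ in \Cref{def:dual_block_comp}.
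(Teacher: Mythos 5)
Your proof is correct and follows essentially the same route as the paper: use $\sum_u\psi(u)=0$ to equate the positive and negative mass, then $\|\psi\|_1=1$ to conclude both equal $1/2$. (Note that within \Cref{pro:like_5.5_5.6} the balance condition $\sum_u\psi(u)=0$ is a direct hypothesis, so the appeal to $\phd(\psi)\ge 1$ is not needed there, though it is how the hypothesis is verified when the proposition is applied.)
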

\begin{claimproof}We know that $\sum_u \psi(u)=0$. Thus $\sum_{u:\psi(u)>0}|\psi(u)|-\sum_{u:\psi(u)>0}|\psi(u)|$. We then conclude using that $\|\psi\|_1=1$.
\end{claimproof}

\paragraph*{First part of \Cref{pro:like_5.5_5.6}}
Below, we first apply the definition of the dual block composition (and the fact that $2^R$ and $\prod_{i\in[R]}|\psi(x_i)|$ are positive). Then we use the definition of $\lambda$ which ensures that $\prod_{i\in[R]}|\psi(x_i)|$ is the probability of getting $x=(\dots,x_i,\dots)$ when sampling independently $R$ times from distribution $\lambda$.
\begin{eqnarray*}
    \sum_{x\in S}|(\phi\star\psi)(x)| 
   &= & 2^R\sum_{x\in \{-1,1\}^{NR}}\left(\prod_{i\in[R]}|\psi(x_i)|\right) \cdot |\phi(\dots,\textnormal{sgn}(\psi(x_i)),\dots)|\cdot \mathbb{I}[x\in S] \\
    &= & 2^R\cdot\mathbb{E}_{x\sim\lambda^{\otimes R}}\left[|\phi(\dots,\textnormal{sgn}(\psi(x_i)),\dots)|\cdot \mathbb{I}[x\in S]\right]
\end{eqnarray*}

We introduce new variables $z_i$ that will be compared to $\textnormal{sgn}(\psi(x_i))$.
Using \Cref{claim:psi_plus_minus}, the probability of picking a $z\in\{-1,1\}^R$ from the uniform distribution such that $z$ corresponds to the vector of the signs is $\frac{1}{2^R}$.
Thus previous term can be rewritten as
\begin{eqnarray*}
    &&2^R\sum_{z\in\{-1,1\}^R} |\phi(z)| \cdot \Pr_{x\sim\lambda^{\otimes R}}[x\in S\land(\dots,\textnormal{sgn}(\psi(x_i)),\dots)=z] \\
    &=&\sum_{z\in\{-1,1\}^R} |\phi(z)| \cdot \Pr_{x\sim\lambda^{\otimes R}}[x \in S \mid (\dots,\textnormal{sgn}(\psi(x_i)),\dots)=z]
\end{eqnarray*}
which completes the proof.
\paragraph*{Second part of \Cref{pro:like_5.5_5.6}}
Remember that $\lambda$ denotes the probability mass function $\lambda(u)=|\psi(u)|$ for $u\in\{-1,1\}^N$.
    Just like in the proof of the first item, 
    $$\sum_{x\in \{-1,1\}^{NR}}(\phi\star\psi)(x)\cdot(g\circ h)(x)
    =\sum_{z\in\{-1,1\}^R} \phi(z) \cdot 
    \E_{x\sim\lambda^{\otimes R}}\left[(g\circ h)(x) \mid (\dots,\textnormal{sgn}(\psi(x_i)),\dots)=z\right].$$

Using \Cref{claim:psi_plus_minus}, we can first notice that for any $b\in\{-1,1\}$, the probability that an $x_i$ sampled from $\lambda$ is a false $b$ (i.e. false positive if $b=1$ and false negative if $b=-1$) is as follows,
where by convention $D_{+1}=D_+$ and $D_{-1}=D_-$:
\begin{eqnarray*}
    \Pr_{x_i\sim\lambda}\Big[h(x_i)\neq \textnormal{sgn}(\psi(x_i)) \mid \textnormal{sgn}(\psi(x_i))=b\Big]
&=&\sum_{x_i\in D_b} \Pr_{x_i\sim\lambda}\Big[\textnormal{sampling }x_i \mid \textnormal{sgn}(\psi(x_i))=b\Big]\\
&=&2\sum_{x_i\in D_b}|\psi(x_i)|.
\end{eqnarray*}

Therefore, if $z_i=\textnormal{sgn}(\psi(x_i))$ and $x_i$ is a false $z_i$, it means that $z_i$ should be flipped to get $h(x_i)$. Let $y_i\in\{-1,1\}$ denote whether we flip $z_i$. As $x_i$ is a false $z_i$ with probability $2\sum_{x_i\in D_{z_i}}|\psi(x_i)|$, this is the probability with which we should flip $z_i$, i.e. the probability that $y_i=-1$.

Thus for any $z\in\{-1,1\}^R$, the vector $(\dots,h(x_i),\dots)$ with $x\sim\lambda^{\otimes R}$ conditioned on $(\dots,\textnormal{sgn}(\psi(x_i)),\dots)=z$ is identically distributed with $(\dots,z_iy_i,\dots)$ where $y_i$ are random bitflips according to $\mu^{z_i}_i$: $y_i=-1$ with probability $2\sum_{x_i\in D_{z_i}}|\psi(x_i)|$ and $y_i=1$ otherwise.

Now we can finish the proof:
\begin{eqnarray*}
    &&\sum_{z\in\{-1,1\}^R} \phi(z) \cdot 
    \E_{x\sim\lambda^{\otimes R}}\left[(g\circ h)(x) \mid (\dots,\textnormal{sgn}(\psi(x_i)),\dots)=z\right]\\
    &=&\sum_{z\in\{-1,1\}^R} \phi(z) \cdot 
    \E_{y\sim\mu}[g(\dots,z_iy_i,\dots)].
\end{eqnarray*}
\end{proof}

\section{Proofs from \Cref{sec:quantum_3_coloability_lb}: hardness of testing E$(3, c)$LIN-2}\label{sec:3colorability_lb_app}

In this section, we will present a sketch of the proof of the lower bound of testing E$(3, c)$LIN-2. Our main result is stated as follows.

\linmatrixlb*

\begin{proof}[Proof sketch]
	As we mentioned, this proof follows Yao's minimax lower bound technique. A pair of hard distributions $D_{\mathrm{yes}}$ and $D_{\mathrm{no}}$ are constructed, such that, unless $\Omega(N)$ queries are performed, no algorithm can distinguish between them. 

	Let us consider the matrix $A \in \{0, 1\}^{cN \times N}$ as mentioned in \Cref{lem:existhardmatrix}. Based on the matrix $A$, the hard-to-distinguish distributions $D_{\mathrm{yes}}$ and $D_{\mathrm{no}}$ are as follows:
	\begin{enumerate}
		\item \textbf{$D_{\mathrm{yes}}$:} Choose a vector $z \in \{0,1\}^N$ uniformly at random from $\{0,1\}^N$, and set the vector $y\in\{0,1\}^{cN}$ as $y=Az$. Then the system of linear equations is $Ax=y$.
		
		\item \textbf{$D_{\mathrm{no}}$:} Choose the vector $y \in \{0,1\}^{cN}$ uniformly at random from $\{0,1\}^{cN}$, and set the system of linear equations $Ax=y$.
	\end{enumerate}
	
	Now we have the following claim describing the properties of $D_{\mathrm{yes}}$ and $D_{\mathrm{no}}$.
	
	\begin{claim}~
		\begin{enumerate}
			\item[(i)] The system of linear equations corresponding to $D_{\mathrm{yes}}$ is satisfiable.
			
			\item[(ii)] With probability at least $2/3$, the system of linear equations corresponding to $D_{\mathrm{no}}$ is $(1/2 - \alpha)$-far from being satisfiable for every $\alpha > 0$.
		\end{enumerate}
	\end{claim}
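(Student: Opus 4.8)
The plan is to check the two items separately. Item (i) is immediate, and item (ii) is a textbook first-moment (union bound) argument combined with a Chernoff/Hoeffding estimate; the only design choice is how large to take the constant $c$ as a function of $\alpha$. For (i): in $D_{\mathrm{yes}}$ the right-hand side is defined as $y=Az$ for the uniformly chosen $z\in\{0,1\}^N$, so $x=z$ is by construction a solution of $Ax=y$, and the system is satisfiable with probability $1$.

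For (ii), fix $\alpha>0$, regard the uniformly random $y\in\{0,1\}^{cN}$ as the source of randomness, and fix an arbitrary candidate assignment $x\in\{0,1\}^N$. The number of equations of the system $Ax=y$ satisfied by $x$ equals the number of coordinates $i\in[cN]$ on which the fixed vector $Ax\in\{0,1\}^{cN}$ agrees with $y$. Since $y$ is uniform, each indicator $\mathbf 1[(Ax)_i=y_i]$ is an independent $\mathrm{Bernoulli}(1/2)$ random variable — this holds for every matrix $A$ and every $x$ — so the number of satisfied equations is distributed as $\mathrm{Bin}(cN,1/2)$. By Hoeffding's inequality, $\Pr_y[\,x\text{ satisfies more than }(1/2+\alpha)cN\text{ equations}\,]\le e^{-2\alpha^2 cN}$. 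Taking a union bound over all $2^N$ assignments $x$, with probability at least $1-2^N e^{-2\alpha^2 cN}$ no assignment satisfies more than a $(1/2+\alpha)$-fraction of the $cN$ equations; equivalently, at least a $(1/2-\alpha)$-fraction of the equations must be modified to make the system satisfiable, i.e. the instance is $(1/2-\alpha)$-far from satisfiable. Choosing $c>\ln 2/\alpha^2$ gives $2^N e^{-2\alpha^2 cN}\le 2^{-N}\le 1/3$ for $N$ large enough, which proves the claim.

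A couple of remarks worth making in the write-up: the constant $c$ is allowed to depend on $\alpha$ (it must grow like $1/\alpha^2$ as $\alpha\to 0$), exactly as in the statements of \Cref{lem:existhardmatrix} and \Cref{lem:3colorindistinguishability}; and no structural property of $A$ is used here — in particular not the linear independence of every $\delta N$ rows — since that property is needed only for the indistinguishability step, not for establishing that $D_{\mathrm{no}}$ produces far-from-satisfiable instances. I do not expect any genuine obstacle in this part; the content is entirely a concentration-plus-union-bound computation, and the only thing to be slightly careful about is to state the dependence of $c$ on $\alpha$ consistently with the rest of the section.
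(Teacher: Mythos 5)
Your proposal is correct and follows essentially the same route as the paper, which only sketches the argument ("with high probability, vector $Az-y$ has large Hamming weight for any $z$") and defers the details to \cite[Lemma 18]{bogdanov2002lower}; your Hoeffding-plus-union-bound computation is exactly that deferred argument, with the dependence $c=\Theta(1/\alpha^2)$ made explicit. Your side remark that the linear-independence property of $A$ is not used for this claim is also accurate.
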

	
	Note that the system of linear equations in $D_{\mathrm{yes}}$ is satisfiable by setting $x=z$. On the other hand, for the system of linear equations corresponding to $D_{\mathrm{no}}$, vector $y$ is uniformly random. Thus, with high probability, vector $Az-y$ has large Hamming weight for any $z\in\{0,1\}^N$, and therefore the system of linear equations $Ax=y$ is far from being satisfiable. The formal proof is in \cite[Lemma 18]{bogdanov2002lower}.
\end{proof}

\end{document}